\newtheorem{lemma}{Lemma}
\newtheorem{remark}{Remark}
\newtheorem{example}{Example}
\newtheorem{theorem}{Theorem}
\newtheorem{corollary}{Corollary}
\newtheorem{assumption}{Assumption}
\newtheorem{proposition}{Proposition}
\newcommand{\Exp}{\mathbb{E}}
\newcommand{\Var}{\mathbb{D}}
\newcommand{\Cov}{\bm{\mathrm{Cov}}}
\renewcommand{\d}{\,\mathrm{d}}
\DeclareMathOperator*{\argmin}{arg\,min} % define argmin operator
\DeclareMathOperator*{\argmax}{arg\,max} % define argmax operator
\DeclareMathOperator*{\rvec}{\mathrm{vec}}
\definecolor{darkgray}{gray}{0.4} % Adjust the number closer to 0 for darker shades
\begin{document}

% \title{ PAC-Gaussian Filtering \\
% with 
% Natural Gradient Descent}

\title{Nonlinear Bayesian Filtering with \\ \textbf{N}atural Gr\textbf{a}dient Gaussia\textbf{n} Appr\textbf{o}ximation}

\author{Wenhan Cao, Tianyi Zhang, Zeju Sun, Chang Liu, \\
Stephen S.-T. Yau, \IEEEmembership{Life Fellow, IEEE}, and Shengbo Eben Li, \IEEEmembership{Senior Member, IEEE} 
%         % <-this % stops a space
\thanks{Wenhan Cao is with the School of Vehicle and Mobility, Tsinghua University, Beijing, China (e-mail: cwh19@mails.tsinghua.edu.cn).}

\thanks{Tianyi Zhang is with the State Key Laboratory of Intelligent
Green Vehicle and Mobility, Tsinghua University, Beijing, China (e-mail: zhangtia24@mails.tsinghua.edu.cn).}

\thanks{Zeju Sun is with the Department of Mathematical Sciences, Tsinghua University, Beijing, China (e-mail: szj20@mails.tsinghua.edu.cn).}

\thanks{Chang Liu is with the Department of Advanced Manufacturing
and Robotics, College of Engineering, Peking University, Beijing, China (e-mail: changliucoe@pku.edu.cn).
}

\thanks{Stephen S.-T. Yau is with Beijing Institute of Mathematical Sciences and Applications (BIMSA) and the Department of
Mathematical Sciences, Tsinghua University, Beijing, China (e-mail: yau@uic.edu).}

\thanks{Shengbo Eben Li is with the School of Vehicle and Mobility and College
of Artificial Intelligence, Tsinghua University, Beijing, China (e-mail: lishbo@tsinghua.edu.cn).}

\thanks{Corresponding Author: Shengbo Eben Li}
}

% The paper headers
\markboth{Journal of \LaTeX\ Class Files,~Vol.~14, No.~8, August~2021}%
{Shell \MakeLowercase{\textit{et al.}}: A Sample Article Using IEEEtran.cls for IEEE Journals}

\maketitle

\begin{abstract} 
Practical Bayes filters often assume the state distribution of each time step to be Gaussian for computational tractability, resulting in the so-called Gaussian filters. When facing nonlinear systems, Gaussian filters  such as extended Kalman filter (EKF) or unscented Kalman filter (UKF) typically rely on  certain linearization techniques, which can introduce large estimation errors.
To address this issue, this paper reconstructs the prediction and update steps of Gaussian filtering as solutions to two distinct optimization problems, whose optimal conditions are found to have analytical forms from Stein's lemma. It is observed that the stationary point for the prediction step requires calculating the first two moments of the prior distribution, which is equivalent to that step in existing moment-matching filters.  
In the update step, instead of linearizing the model to approximate the stationary points, we propose an iterative approach to directly minimize the update step's objective to avoid linearization errors. For the purpose of performing the steepest descent on the Gaussian manifold, we derive its natural gradient that leverages Fisher information matrix to adjust the gradient direction, accounting for the curvature of the parameter space.  Combining this  update step with moment matching in the prediction step, we introduce a new iterative filter for nonlinear systems called \textit{N}atural Gr\textit{a}dient Gaussia\textit{n} Appr\textit{o}ximation filter, or NANO filter for short. 
We prove that NANO filter locally converges to the optimal Gaussian approximation at each time step. Furthermore, the estimation error is proven exponentially
bounded for nearly linear measurement equation and low
noise levels through constructing a supermartingale-like
property across consecutive time steps. 
Real-world experiments demonstrate that, compared to popular Gaussian filters such as EKF, UKF, iterated EKF, and posterior linearization filter,  NANO filter reduces the average root mean square error by approximately 45\% while maintaining a comparable computational burden.

% In linear Gaussian systems, these equations are happened to be decoupled,  making KF
% both optimal and streamlined.  
% that are challenging to solve. 
% The canonical Kalman filter perfectly solves these equations since they are happened to be decoupled in linear Gaussian systems. 
% To avoid such errors in the update step, we propose directly optimizing the parameters of the Gaussian posterior using natural gradient descent. 
% We refer to this approach as the \textit{N}atural Gr\textit{a}dient Gaussia\textit{n} Appr\textit{o}ximation filter, or NANO filter for short. 

% We prove its global convergence property and exponential stability of the estimation error. Several robust variants of NANO filter are designed by extending the Bayesian posterior to Gibbs posteriors. Real-world experiments show that our method outperforms popular Gaussian filters like EKF, UKF, iterated EKF and posterior linearization filter 
\end{abstract}

\begin{IEEEkeywords}
State estimation, Bayesian filtering, Gaussian filter,  natural gradient descent
\end{IEEEkeywords}

\section{Introduction} \label{sec.1}

\IEEEPARstart{S}{tate} estimation of dynamical systems is a timely topic in fields such as astrophysics, robotics, power systems, manufacturing, and transportation. The most comprehensive framework for state estimation is Bayesian filtering, which targets the distribution of the current state given the available measurements to date. This state distribution, referred to as the posterior distribution, is often calculated iteratively through two steps, i.e., prediction and update. The prediction step uses the Chapman-Kolmogorov equation to forward-predict the state distribution using the transition probability, a conditional probability that describes the time evolution of the system state. Based on the prior distribution acquired after prediction, the update step applies Bayes' theorem to update the prior using the measurement probability, a conditional probability that describes the relationship between noisy measurements and the true state \cite{chen2003bayesian}.

For linear Gaussian systems, directly applying Bayesian filtering results in the well-known Kalman filter (KF) \cite{kalman1960new}, which analytically computes the Gaussian posterior through recursive updates of its mean and covariance. This analytical form relies on the closure property of Gaussian under linear transformations and the conjugate property of Gaussian under conditioning. Unfortunately, such an elegant structure of KF does not exist for nonlinear systems, as the closure and conjugate properties only hold in the linear Gaussian case.

Therefore, if the system is nonlinear or non-Gaussian, finding the exact solution of Bayesian filtering is often unattainable. In this case,  designing an appropriate approximation of state distribution becomes a key step in Bayesian filtering. To this effect, the particle filter (PF) approximates the probability density function with a set of discrete particles, each representing a possible state of the system. These particles, when weighted and summed, form a discrete approximation of the continuous probability density function. 
Although PF can 
provide asymptotically optimal approximations, it requires a large number of particles, which results in significant computational demands and limits its application in practical systems \cite{del1997nonlinear,liu1998sequential}. In contrast to approximating the posterior distribution with samples, an alternative choice is to approximate it as Gaussian distribution in each time step, which leads to the Gaussian filter family. 
% Gaussian filters can be sufficiently accurate if the posterior is unimodal.
Compared to PF, Gaussian filters offer higher computational efficiency and have become by far the most popular family to date \cite{thrun2002probabilistic}.

It turns out that the design philosophy of Gaussian filters generally consists of two steps: (i) approximate the nonlinear models to linear forms with additive Gaussian noise, and (ii) perform KF based on this linear Gaussian model. Within this framework, the differences between Gaussian filters are primarily attributed to different linearization techniques.
% The prediction step of a Gaussian filter typically involves calculating or approximating the first two moments of a random variable that undergoes a possibly nonlinear transformation according to the dynamical equation. Generally speaking, this step can be can be tackled with relative ease. 
% The update step, which is also the focus of this paper, traditionally resort to different linearization techniques. 
The earliest technique involves directly using the Taylor series approximation to linearize the nonlinear function.  Typically, the first-order expansion is employed to avoid the occurrence of tensors in high-order expansions, exemplified by the well-known extended Kalman filter (EKF) \cite{smith1962application,mcelhoe1966assessment}. Originally developed by NASA for navigation tasks, EKF linearizes the nonlinear function around the state estimate and performs KF using the resulting affine-form system equation. Building on this foundation, iterated extended Kalman filter (IEKF)  \cite{gelb1974applied}  repeatedly performs linearization at each updated approximation of the posterior mean instead of the prior mean as in EKF. 
% Generally speaking, the standard EKF is only a single-iteration special case of its iterated counterpart. 
In fact, the iterative process of this method is a Gauss-Newton iteration that essentially linearizes the model around the maximum a posteriori estimate of the state \cite{bell1993iterated}.

A clear drawback of directly linearizing the state space model is its inability to capture the second moment, namely the covariance, after a nonlinear transformation, since Taylor-series linearization only utilizes the mean. To address this issue, a fundamentally different approach is to directly match the transformed mean and covariance, known as moment-matching. Essentially, moment-matching can be viewed as statistical linear regression (SLR), a linearization technique that obtains optimal affine representation of the nonlinear system with parameters minimizing the expected 
regression loss \cite{sarkka2023bayesian}.
This so-called SLR method requires computing several integrals over Gaussian distributions. In particular, the use of established numerical integration methods such as unscented transform, Gauss–Hermite integration, and spherical cubature integration  underpin the design of  unscented Kalman filter (UKF) \cite{julier1995new}, Gauss–Hermite KF \cite{arasaratnam2007discrete}, and cubature KF \cite{arasaratnam2009cubature}. 
In addition, a new algorithm called posterior linearization filter 
(PLF) is proposed to perform SLR at the posterior rather than the prior during Bayesian updates to improve filtering accuracy \cite{sarkka2023bayesian}. 
% Following a similar iterative structure to the IEKF, a new algorithm called the posterior linearization filter (PLF) was developed \cite{garcia2015posterior}, where SLR is performed iteratively according to the updated posterior distribution.

This framework of first linearizing the model and then performing KF is termed \textit{enabling approximation}  \cite{sarkka2023bayesian}. Although it has been widely applied since the 1960s and may seem like a natural choice, one critical question cannot be overlooked: Is this \textit{enabling approximation} framework sufficient to find the optimal Gaussian approximation of Bayesian filtering? Unfortunately, no existing works discuss this issue. In this paper, we argue that applying this framework in the update step may not yield an exact solution for nonlinear Gaussian filtering. This contrasts with the prediction step, where performing \textit{enabling approximation} using moment-matching filters \cite{julier1995new,arasaratnam2007discrete,arasaratnam2009cubature} is a proper choice. 
% This observation is explored by reformulating Gaussian filtering as solutions of two optimization problems and study the extremum conditions of these problems.
To this effect, we propose a new method called \textbf{N}atural gr\textbf{A}dient Gaussia\textbf{N} appr\textbf{O}ximation (NANO) filter that applies natural gradient in the update step to find the exact solution of Gaussian approximation.
Specifically, the contributions of this paper are summarized as follows:
\begin{itemize}

\item We interpret the prediction and update steps of Bayesian filtering as solutions to two distinct optimization problems. This new  perspective allows us to define optimal Gaussian approximation and identify its corresponding extremum conditions. Leveraging the Stein’s lemma, we derive that the stationary point for prediction step has an explicit form, involving the calculation of the first two moments of prior distribution. This analytical form is implementable via moment-matching filters \cite{julier1995new, arasaratnam2007discrete, arasaratnam2009cubature}. In contrast, the stationary point for update step is  characterized by two interdependent equations, which generally has no analytical root. A special case occurs in the linear Gaussian systems, where these two equations decouple and become linear, resulting in analytical solutions for KF.
% For nonlinear cases, those \textit{enabling approximation}-based  Gaussian filters  need to perform certain linearization technique, which inevitably introduce large estimation errors.
For nonlinear systems, those \textit{enabling approximation}-based  Gaussian filters need to perform certain linearization technique, which inevitably introduce large
estimation errors.

\item
To address linearization errors in the update step, we derive natural gradient iteration to minimize the optimization cost. By leveraging the Fisher information matrix that captures the curvature of the parameter space, the gradient direction is adjusted to perform the steepest descent on the Gaussian manifold. By combining this optimization procedure in the update step with moment matching in the prediction step, we develop a new iterative filter for nonlinear systems, namely NANO filter. We demonstrate that KF is equivalent to  a single iteration of NANO filter for linear Gaussian systems, providing a fresh understanding of Kalman filtering.

\item 
We prove that the NANO filter locally converges to the optimal Gaussian approximation at each time step, with accuracy up to a second-order remainder in the Taylor expansion.
Additionally, the estimation error is proven exponentially bounded for nearly linear measurement equation and low noise levels through contructing a supermartingale-like property across consecutive time steps.
We further show that NANO filter can naturally  extend the Bayesian posterior to Gibbs posterior, enabling the use of more flexible loss functions to enhance robustness against outliers in measurement data. On its basis, three robust variants of  NANO filter are introduced, each employing different robust loss function designs. Simulations and real-world experiments demonstrate that NANO filter and its robust variants significantly outperforms popular Gaussian filters, such as EKF, UKF, IEKF, and PLF for nonlinear systems.
\end{itemize}
The remainder of this paper is structured as follows: Section \ref{sec.2} formulates the problem, and Section \ref{sec.Gaussian approximation} explores the optimal Gaussian approximation. Section \ref{sec.NANO} introduces the NANO filter algorithm, followed by theoretical analysis in Section \ref{sec.theoretical analysis}. Section \ref{sec.robust variants} proposes the robust variants of NANO filter. Section \ref{sec.discussion} provides a discussion on the proposed algorithm. Simulations and experiments are presented in Section \ref{sec.experiment}.

\noindent \textbf{Notation:} All vectors are considered as column vectors. The symbol \( D_{\mathrm{KL}}(p \| q) = \mathbb{E}_{p}\left\{\log{\frac{p}{q}}\right\} \) denotes the Kullback-Leibler (KL) divergence between two probability distributions \( p \) and \( q \). Unless otherwise specified, \( \|x\| \) refers to the \( \ell_2 \)-norm of the vector \( x \), defined as \( \|x\| = \sqrt{x^\top x} \). The notation \( \mathcal{N}(x; \mu, \Sigma) \) represents the Gaussian probability density function for the variable \( x \) with mean \( \mu \) and covariance matrix \( \Sigma \). For simplicity, this may be abbreviated as \( \mathcal{N}(\mu, \Sigma) \). Furthermore, \( \mathbb{E} \{ x \} \), \( \mathbb{D} \{ x \} \), and \( \mathrm{Cov}(x, y) \) denote the expectation, variance, and covariance, respectively. The notation \( A \leq B \) (\( A < B \)) for symmetric matrices \( A \) and \( B \) indicates that \( B - A \) is positive (semi-)definite. The symbol \( \otimes \) denotes the Kronecker product. The notation \(\rvec(A)\) refers to the vector obtained by stacking the columns of matrix \( A \) into a single column vector. The trace of matrix \( A \) is denoted as \( \mathrm{Tr}(A) \). The notation \( \mathbb{I}_{n \times n} \) represents the \( n \)-dimensional identity matrix.

\section{Problem Statement}\label{sec.2}
Consider the following nonlinear discrete-time stochastic system:
\begin{equation}\label{eq.SSM}
\begin{aligned}
x_{t+1} = f(x_t)+\xi_t,
\\
y_t = g(x_t) + \zeta_t, 
\end{aligned}
\end{equation}
where $x_t \in \mathbb{R}^n$ is the system state, $y_t \in \mathbb{R}^m$ is the noisy measurement. The function ${f}: \mathbb{R}^n \to \mathbb{R}^n$ is referred to as the transition function, while ${g}: \mathbb{R}^n \to \mathbb{R}^m$ is called measurement function; ${\xi}_t$ denotes process noise, and ${\zeta}_t$ represents measurement noise. Typically, the initial state \(x_0\), the process noise \(\{\xi_t\}\), and the measurement noise \(\{\zeta_t\}\) are all mutually independent, with the process and measurement noises being independent and identically distributed across time. 
This state-space model description in \eqref{eq.SSM} can be represented as a hidden Markov model: 
\begin{equation}\label{eq.HMM}
\begin{aligned}
x_{0} &\sim p(x_0),
\\
x_{t} &\sim p(x_t|x_{t-1}),
\\
y_{t} &\sim p(y_t|x_{t}).
\end{aligned}  
\end{equation}
Here, $p(x_t|x_{t-1})$ and $p(y_{t}|x_{t})$ are the transition and output probabilities respectively while $p(x_0)$ denotes the initial state distribution.
In essence, \eqref{eq.HMM} and \eqref{eq.SSM} are different representations of the same system model. For example, consider the transition model \(x_t = Ax_{t-1} + \xi_{t-1}\), where \(\xi_{t-1} \sim \mathcal{N}(\xi_{t-1}; 0, Q)\), with \(Q\) denoting the covariance matrix of the process noise. This can be equivalently described by the transition probability \(p(x_t|x_{t-1}) = \mathcal{N}(x_t; Ax_{t-1}, Q)\).

The objective of state estimation is to recover the system state $x_t$ from noisy measurements $y_{t}$. Typically, finding the optimal estimate involves two key steps: calculating the posterior distribution $p(x_t|y_{1:t})$, and determining the optimal estimate $\hat{x}_{t|t}$ using the posterior distribution. A principled framework for posterior distribution calculation is Bayesian filtering, which computes $p(x_t|y_{1:t})$ recursively through two steps:
\begin{subequations}\label{eq.BF}
\begin{align}
p(x_t|y_{1:t-1}) &= \int p(x_t|x_{t-1}) p(x_{t-1}|y_{1:t-1}) \d x_{t-1}, \label{eq.prediction}
\\
p(x_t|y_{1:t}) &= \frac{p(y_t|x_t) p(x_t|y_{1:t-1})}{\int p(y_t|x_t) p(x_t|y_{1:t-1}) \d x_t }. 
\label{eq.update}
\end{align}   
\end{subequations}
Here, \eqref{eq.prediction} is called 
prediction step while \eqref{eq.update} is called update step. The former, known as the Chapman-Kolmogorov equation, utilizes the transition probability $p(x_t|x_{t-1})$ to predict the prior distribution $p(x_t|y_{1:t-1})$. 
Based on this prior, the update step leverages the Bayes formula to calculate the posterior distribution $p(x_t|y_{1:t})$, where the output probability serves as 
likelihood function. After calculating $p(x_t|y_{1:t})$,  common estimation criteria like the minimum mean square error or maximum a posteriori estimation can be used to determine the optimal estimate.
% \begin{subequations}
% \begin{align}
% \text{MMSE: }& \hat{x}_{t|t} = \argmin_{\hat{x}_t} \Exp\left\{\| x_t - \hat{x}_t \|^2 | y_{1:t} \right\}, \nonumber 
% % \label{eq.MMSE} 
% % = \Exp \left\{ x_t|y_{1:t} \right\},
% \\
% \text{MAP: } & \hat{x}_{t|t} = \argmax_{{x}_t} p(x_t|y_{1:t}).\nonumber 
% % \label{eq.MAP} 
% \end{align}
% \end{subequations}

As discussed in Section \ref{sec.1}, neither \eqref{eq.prediction} nor \eqref{eq.update} can be calculated analytically when the transition or output probabilities are non-linear or non-Gaussian, requiring the approximation of state distributions in practice. Given the high computational burden of discrete approximations like PF, Gaussian approximations are widely adopted in industrial applications. More specifically, in Gaussian filters, both the prior and posterior distributions are approximated by Gaussian distributions:
\begin{equation}\label{eq.Gaussian approximation}
\begin{aligned}
p(x_t|y_{1:t-1}) &\approx \mathcal{N}(x_t; \hat{x}_{t|t-1}, P_{t|t-1}), \\ p(x_t|y_{1:t}) &\approx \mathcal{N}(x_t; \hat{x}_{t|t}, P_{t|t}),   
\end{aligned}    
\end{equation}
which is usually achieved through
\textit{enabling approximation} framework. In particular, this framework first approximates the transition or output probabilities with linear Gaussian models
and then performs the well-known KF. Due to the closure property of Gaussians under linear transformations and the conjugate property of Gaussians under conditioning, the resulting prior and posterior distributions are naturally approximated as Gaussian.
The linearization techniques in existing Gaussian filters can be categorized into two types: Taylor series expansion and stochastic linear regression. As summarized in Table \ref{tab.linearization techniques}, the former uses the Jacobian matrix $g'(\bar{x})$ to provide local affine approximations of system models at point $\bar{x}$, while the latter minimizes the expectation of the square of the  residual $y - Ax - b$ to find optimal linear parameters.
\begin{table*}
    \centering
    \caption{Linearization techniques for existing Gaussian filters}
    \begin{tabular}{ccc}
    \toprule
     Linearization Technique & \makecell{Basic Principle:   $\mathcal{N}(y; g(x), \Sigma) \approx \mathcal{N}(y; Ax + b, \Lambda)$} & Representative Algorithms \\
         \midrule
     Taylor Series Expansion & 
     \makecell{$g(x) = g(\bar{x}) + g'(\bar{x})(x - \bar{x})$ \\ $\Downarrow$ \\ $A = g'(\bar{x}), b = g(\bar{x}) - g'(\bar{x})\bar{x}, \Lambda = \Sigma$}
     & EKF\cite{smith1962application}, IEKF\cite{gelb1974applied}
     \\
         \midrule
     Statistical Linear Regression  & \makecell{ $\argmin_{A,b} \Exp_{x}\left\{ (y-Ax-b)^{\top}(y-Ax-b) \right\}$ \\ $\Downarrow$ \\ $A = \Cov(x,y)^{\top} \Var(x), b = \Exp\left\{ y\right\} - A \Exp\left\{ x\right\}, \Lambda = \Var\left\{ y\right\} - A \Var\left\{ x\right\} A^{\top} $}
     & \makecell{UKF \cite{julier1995new}, Gauss–Hermite KF \cite{arasaratnam2007discrete},\\ cubature KF \cite{arasaratnam2009cubature}, and PLF\cite{garcia2015posterior}}
     \\
    \bottomrule
    \end{tabular}

    \label{tab.linearization techniques}
\end{table*}

While these techniques offer practical methods for Gaussian approximation, it still remains unclear whether this \textit{enabling approximation} framework truly achieve an optimal Gaussian approximation for Bayesian filtering. Unfortunately, there is even no clear way to formally judge whether a Gaussian approximation is optimal. This paper aims to establish a framework for defining and identifying the optimal Gaussian approximation for Bayesian filtering. Specifically, we seek to address the following three questions:

\begin{enumerate}[label=\textbf{Q\arabic*:}]
\item What defines the optimal Gaussian approximation in Bayesian filtering, and what conditions must it meet?
\item How can we design a filter that effectively achieves this optimal approximation?
\item What theoretical guarantees can we provide for the algorithm's convergence and the boundedness of the estimation error?
\end{enumerate}

\section{Optimal Gaussian Approximation for Bayesian Filtering}\label{sec.Gaussian approximation}
In this section, we will address \textbf{Q1} by exploring the necessary conditions for determining the optimal Gaussian approximation for Bayesian filtering.
\subsection{Optimization Viewpoint of Bayesian Filtering}
Inspired by the optimization-centric view on Bayes's rule \cite{knoblauch2019generalized}, we show that the prior and posterior distributions can be interpreted as solutions of two variational problems, as shown in the subsequent proposition. 
% This interpretation is inspired by the optimization-centric view on Bayes's rule \cite{knoblauch2019generalized}.

\begin{proposition}[Variational Problems for Bayesian filtering]\label{prop.optimization}
The prior distribution can be regraded as the maxmizer of an variational problem:
\begin{equation}\label{eq.BF prediction optimizaton}
\begin{aligned}
p(x_t|y_{1:t-1}) = \argmax_{q(x_t)} 
\Exp_{\substack{p(x_{t-1}|y_{1:t-1}) \\ p(x_t|x_{t-1})}}\left\{
\log{q(x_t)}
\right\}.
\end{aligned}
\end{equation} Similarly, the posterior distribution can be regarded as the minimizer of a functional:
\begin{equation}\label{eq.BF update optimizaton}
\begin{aligned}
p(x_t|y_{1:t}) =& \argmin_{q(x_t)} \big\{\Exp_{q(x_t)} \left\{-\log{p(y_t|{x}_t)}\right\}\\
&+ D_{\mathrm{KL}}\left(q(x_t) || p(x_t|y_{1:t-1})\right) \big\}.   
\end{aligned}
\end{equation}  
Note that in both \eqref{eq.BF prediction optimizaton} and \eqref{eq.BF update optimizaton}, $q: \mathbb{R}^n \to \mathbb{R}$ represents the candidate density function. Besides, we use the notation $ \Exp_{\substack{p(x)\\p(y)}}\left\{f(x,y) \right\}$ to represent the expectation of $f(x,y)$ with respect to both distributions $p(x)$ and $p(y)$, i.e., $\Exp_{\substack{p(x)\\p(y)}}\left\{f(x,y) \right\} \triangleq \Exp_{p(x)} \Exp_{p(y)} \left\{f(x,y) \right\}$. 
\end{proposition}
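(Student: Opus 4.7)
The plan is to prove both parts by reducing each variational problem to a recognizable Kullback–Leibler minimization and then invoking Gibbs' inequality. In both cases the algebra is routine once the objective is rewritten in a revealing form; the only conceptual point needing care is the implicit normalization constraint on the candidate density $q(x_t)$, without which neither problem is well posed (for example the prediction objective is unbounded above if unnormalized $q$ were admitted). Throughout, $q$ should be interpreted as ranging over probability density functions on $\mathbb{R}^n$ for which the relevant integrals are finite.

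For the prediction step, I would first swap the order of integration in the double expectation to obtain
\begin{equation*}
\Exp_{\substack{p(x_{t-1}|y_{1:t-1}) \\ p(x_t|x_{t-1})}}\{\log q(x_t)\} = \int \left[\int p(x_t|x_{t-1}) p(x_{t-1}|y_{1:t-1}) \d x_{t-1}\right] \log q(x_t) \d x_t.
\end{equation*}
The inner integral is precisely the Chapman–Kolmogorov expression \eqref{eq.prediction} for $p(x_t|y_{1:t-1})$, so the objective collapses to a cross-entropy. Adding and subtracting $\log p(x_t|y_{1:t-1})$ in the integrand rewrites it as $-H(p(x_t|y_{1:t-1})) - D_{\mathrm{KL}}(p(x_t|y_{1:t-1}) \,\|\, q(x_t))$, where the entropy term $H$ is independent of $q$. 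Nonnegativity of the KL divergence, with equality iff $q = p(x_t|y_{1:t-1})$, identifies the prior as the unique maximizer.

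For the update step, I would expand the KL divergence and combine both terms of the objective into a single expectation:
\begin{equation*}
\Exp_{q(x_t)}\{-\log p(y_t|x_t)\} + D_{\mathrm{KL}}(q(x_t) \,\|\, p(x_t|y_{1:t-1})) = \Exp_{q(x_t)}\left\{\log \frac{q(x_t)}{p(y_t|x_t)\, p(x_t|y_{1:t-1})}\right\}.
\end{equation*}
Applying Bayes' theorem inside the logarithm gives $p(y_t|x_t)\,p(x_t|y_{1:t-1}) = p(x_t|y_{1:t})\, p(y_t|y_{1:t-1})$. Because the evidence $p(y_t|y_{1:t-1})$ does not depend on $q$, the objective reduces to $D_{\mathrm{KL}}(q(x_t) \,\|\, p(x_t|y_{1:t})) - \log p(y_t|y_{1:t-1})$, and Gibbs' inequality pins down the unique minimizer as $q(x_t) = p(x_t|y_{1:t})$.

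I do not expect any substantive obstacle: both reductions amount to one algebraic rewrite followed by one use of Gibbs' inequality. The only loose end worth flagging in the proof itself is the regularity condition that all distributions admit densities with respect to a common reference measure and that the relevant cross-entropies and KL divergences are finite; with those qualifications implicit, the stated optima hold verbatim, and both parts of the proposition follow in a few lines.
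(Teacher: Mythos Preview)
Your proposal is correct and follows essentially the same route as the paper: both parts reduce the objective to a KL divergence plus a $q$-independent constant and then invoke Gibbs' inequality, with the paper absorbing your entropy and log-evidence terms into a generic normalizer $Z$ while you make them explicit. Your added remarks on the implicit normalization constraint on $q$ and the Fubini swap are welcome clarifications the paper leaves tacit.
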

\begin{proof}
By recognizing that the focus is solely on the extremizer and not on the objective value itself, it follows that for any constant 
$Z>0$, the right-hand side of \eqref{eq.BF prediction optimizaton} can be rewritten as
\begin{equation}
\begin{aligned}\label{eq.BF prediction optimiation 1}
\argmin_{q(x_t)}\left\{\Exp_{\int p(x_t|x_{t-1}) p(x_{t-1}|y_{1:t-1}) \d x_{t-1}}\left\{\log{\frac{
Z}{q(x_t)}} \right\} \right\}.
\end{aligned}
\end{equation}
If we choose $Z = \int p(x_t|x_{t-1}) p(x_{t-1}|y_{1:t-1}) \d x_{t-1}$  and recall that the KL divergence reaches its minimum value uniquely when its arguments are identical, we find that the solution to \eqref{eq.BF prediction optimiation 1} is indeed  $$\int p(x_t|x_{t-1}) p(x_{t-1}|y_{1:t-1}) \d x_{t-1},$$ which corresponds exactly to the prior distribution defined in \eqref{eq.prediction}. 
Similarly, the right-hand side of \eqref{eq.BF update optimizaton} is equal to
\begin{equation}
\begin{aligned}\label{eq.BF update optimiation 1}
\argmin_{q(x_t)}\left\{\Exp_{q(x_t)}\left\{\log{\frac{q(x_t)}{p(y_t|x_t)p(x_t|y_{1:t-1})/Z}} \right\} \right\}. 
\end{aligned}
\end{equation}   
If we choose $Z = \int p(y_t|x_t)p(x_t|y_{1:t-1}) \d x_t$, the solution to \eqref{eq.BF update optimiation 1} is $$\frac{p(y_t|x_t)p(x_t|y_{1:t-1})}{\int p(y_t | x_t){p}(x_t|y_{1:t-1})\, \mathrm{d}x_t},$$ which is precisely the posterior distribution defined in \eqref{eq.update}.  
\end{proof}

% \begin{remark}
% The optimization target \eqref{eq.BF update optimizaton} related to the update step of Bayesian filtering is essentially the negative of the evidence lower bound (ELBO) used in variational inference \cite{blei2017variational}. Specifically, variational inference aims to maximize the ELBO, whereas our methodology seeks to minimize its negative counterpart. 
% \end{remark}

This proposition shows that the prior distribution \( p(x_t|y_{1:t-1}) \) can be viewed as the solution to a variational problem that maximizes the expected logarithm of a candidate density \( q(x_t) \) over the joint distribution of the previous state and the transition probability. This reflects the idea that the prior is derived by considering all possible transitions from the previous state, and selecting the distribution that maximizes the expected log-density under these transitions. Similarly, the posterior distribution \( p(x_t|y_{1:t}) \) is the solution to a variational problem that minimizes a cost combining the expected negative log-likelihood of the measurement model, \( -\log{p(y_t|{x}_t)} \), with the KL divergence between the candidate density and the prior distribution \( D_{\mathrm{KL}}(q(x_t) || p(x_t|y_{1:t-1})) \). This captures the Bayesian update process, where the posterior distribution adjusts the prior distribution based on new measurements to balance prior knowledge with new information.

In both \eqref{eq.BF prediction optimizaton} and \eqref{eq.BF update optimizaton}, there are no constraints on the candidate distribution, meaning we seek an optimal candidate distribution over the entire probability space. However, such variational problems generally lack analytical solutions. A tractable approach to solve this problem is to restrict the candidate distribution to a parameterizable family of distributions, with the Gaussian distribution family being the most commonly used \cite{kingma2013auto}. Therefore, by utilizing Gaussian approximations in \eqref{eq.Gaussian approximation}, the two variational problems \eqref{eq.BF prediction optimizaton} and \eqref{eq.BF update optimizaton} that depict Bayesian filtering can be reduced to two optimization problems:
\begin{subequations}\label{eq.Gaussian optimization}
\begin{align}
\hat{x}_{t|t-1}, P_{t|t-1} =&\argmax_{\hat{x}_{t}, P_t}
L(\hat{x}_t, P_t), \nonumber
\\
L(\hat{x}_t, P_t)=&\Exp_{\substack{\mathcal{N}(x_{t-1}; \hat{x}_{t-1|t-1}, P_{t-1|t-1}) \\ p(x_t|x_{t-1})}}  \left\{
\log{\mathcal{N}(x_t; \hat{x}_t, P_t)}\right\}, \label{eq.Gaussian optimization prediction}
\\
\hat{x}_{t|t}, P_{t|t} =& \argmin_{\hat{x}_t, P_t} J(\hat{x}_t, P_t), \nonumber
\\
J(\hat{x}_t, P_t) =& D_{\mathrm{KL}}\left(\mathcal{N}(x_t; \hat{x}_t, P_t) || \mathcal{N}(x_{t}; \hat{x}_{t|t-1}, P_{t|t-1})\right)   \nonumber
\\
&-\Exp_{\mathcal{N}(x_t; \hat{x}_t, P_t)} \big\{\log{p(y_t|{x}_t)}\big\}.\label{eq.Gaussian optimization update}
\end{align}    
\end{subequations}
Here, $L(\hat{x}_t, P_t)$ is called prediction cost while $J(\hat{x}_t, P_t)$ is called update cost. Compared with \eqref{eq.BF prediction optimizaton} and \eqref{eq.BF update optimizaton}, \eqref{eq.Gaussian optimization prediction} and \eqref{eq.Gaussian optimization update} transform the generally unsolvable variational problems to optimization problems for Gaussian parameters. This allows us to find the optimal Gaussian approximation for Bayesian filtering by studying the optimality conditions of \eqref{eq.Gaussian optimization}.

\subsection{Structure of Stationary Points of \texorpdfstring{$L(\hat{x}_t, P_t)$}{L(x\_t, P\_t)}}\label{sec.stationary points of L}
To establish the optimality conditions for problems \eqref{eq.Gaussian optimization prediction} and \eqref{eq.Gaussian optimization update}, we examine the stationary points of $L(\hat{x}_t, P_t)$ and $J(\hat{x}_t, P_t)$. Interestingly, the extreme condition of the former reduces to a straightforward moment-matching equation, while that of the latter results in two mutually coupled implicit equations, whose roots are generally intractable. The following lemma helps elucidate the structure of $L(\hat{x}_t, P_t)$:
\begin{lemma}[Stationary Points for Maximum Gaussian Likelihood]\label{lemma.Gaussian MLE}
For probability density function $p(x)$, the stationary points of a maximum expected Gaussian likelihood problem 
\begin{equation}\label{eq.maximum expected Gaussian likelihood}
\mu^*, \Sigma^* = \argmax_{\mu, \Sigma} \Exp_{p(x)} \left\{ \log \mathcal{N}(x; \mu, \Sigma) \right\},    
\end{equation}
can be written as
\begin{subequations}
\begin{align}
\mu^* &= \Exp_{p(x)}\left\{ x \right\}, \label{eq.moment matching mean}
\\
\Sigma^* &= \Exp_{p(x)} \left\{ (x-\mu^*) (x-\mu^*)^{\top} \right\}\nonumber
\\
&= \Exp_{p(x)} \left\{ x x^{\top} \right\} - \mu^* {\mu^*}^{\top}. \label{eq.moment matching variance}    
\end{align}
\end{subequations}
\end{lemma}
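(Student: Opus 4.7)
The plan is a direct first-order analysis: expand the log-Gaussian density inside the expectation, differentiate with respect to $\mu$ and $\Sigma$, and recognize that the stationarity conditions are exactly the moment-matching equations. Since the problem is unconstrained over $\mathbb{R}^n$ and the cone of positive-definite matrices, it suffices to show that any interior stationary point has the claimed form.

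First I would write
\begin{equation*}
\log \mathcal{N}(x; \mu, \Sigma) = -\tfrac{n}{2}\log(2\pi) - \tfrac{1}{2}\log|\Sigma| - \tfrac{1}{2}(x-\mu)^{\top}\Sigma^{-1}(x-\mu),
\end{equation*}
and define $F(\mu,\Sigma) \triangleq \mathbb{E}_{p(x)}\{\log \mathcal{N}(x; \mu, \Sigma)\}$. Interchanging expectation and the (bounded, smooth) gradient, the first-order condition in $\mu$ reads $\nabla_{\mu} F = \Sigma^{-1}\mathbb{E}_{p(x)}\{x-\mu\} = 0$. Since $\Sigma^{-1}$ is nonsingular, this immediately yields \eqref{eq.moment matching mean}.

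Next, for the covariance I would differentiate with respect to $\Sigma$ using the standard identities $\partial \log|\Sigma| / \partial \Sigma = \Sigma^{-1}$ and $\partial \, a^{\top}\Sigma^{-1} a / \partial \Sigma = -\Sigma^{-1} a a^{\top} \Sigma^{-1}$. This gives
\begin{equation*}
\nabla_{\Sigma} F = -\tfrac{1}{2}\Sigma^{-1} + \tfrac{1}{2}\Sigma^{-1}\,\mathbb{E}_{p(x)}\{(x-\mu)(x-\mu)^{\top}\}\,\Sigma^{-1}.
\end{equation*}
Setting this to zero and left/right multiplying by $\Sigma$ yields $\Sigma^{*} = \mathbb{E}_{p(x)}\{(x-\mu^{*})(x-\mu^{*})^{\top}\}$, which is \eqref{eq.moment matching variance}; the second form follows by expanding the product and using $\mu^{*} = \mathbb{E}_{p(x)}\{x\}$.

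The only delicate point, and what I would expect reviewers to scrutinize most, is the matrix differentiation in $\Sigma$, because $\Sigma$ is constrained to be symmetric positive definite. I would address this either by parameterizing $\Sigma$ by its independent entries (using the chain rule to combine off-diagonal partials in the usual symmetric form), or by simply remarking that the unconstrained derivative above is already symmetric so that restricting to symmetric matrices does not alter the stationarity condition. Positive-definiteness of $\Sigma^{*}$ then holds automatically whenever $p(x)$ is not supported on an affine subspace, so the stationary point lies in the interior of the feasible set and the formal derivation is justified.
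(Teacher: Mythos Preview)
Your proposal is correct and follows essentially the same approach as the paper: write out the Gaussian log-density, take the expectation under $p(x)$, differentiate with respect to $\mu$ and $\Sigma$ using the standard matrix-calculus identities, and read off the moment-matching conditions. The only difference is that you add a brief (and welcome) remark about the symmetry/positive-definiteness constraint on $\Sigma$, which the paper's proof omits.
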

% The proof of this lemma can be found in Appendix \ref{appendix.proof of lemma Gaussian mLE}.
\begin{proof}
The log-likelihood function of a Gaussian distribution $\mathcal{N}(x; \mu, \Sigma)$ is given by
\begin{equation}\nonumber
\begin{aligned}
\log \mathcal{N}(x; \mu, \Sigma) =& -\frac{1}{2}\big[ \log |\Sigma| 
\\
& + (x - \mu)^\top \Sigma^{-1} (x - \mu) + \log (2\pi)^n \big],    
\end{aligned}
\end{equation}
where $n$ is the dimensionality of $x$. Taking the expectation with respect to the probability density function $p(x)$, we obtain
\begin{equation}\nonumber
\begin{aligned}
\Exp_{p(x)}\left\{ \log \mathcal{N}(x; \mu, \Sigma) \right\} =& -\frac{1}{2} \big[ \log |(2\pi)^n \Sigma| \\
&+ \Exp_{p(x)} \left\{ (x - \mu)^\top \Sigma^{-1} (x - \mu) \right\} \big].
\end{aligned}
\end{equation}

We first compute the gradient of this expectation with respect to $\mu$. The relevant term involving $\mu$ is the quadratic form $(x - \mu)^\top \Sigma^{-1} (x - \mu)$. Differentiating this term with respect to $\mu$ gives
\begin{equation}\nonumber
\frac{\partial}{\partial \mu} \Exp_{p(x)} \left\{ (x - \mu)^\top \Sigma^{-1} (x - \mu) \right\} = -2 \Sigma^{-1} \Exp_{p(x)} \left\{ x - \mu \right\}.
\end{equation}
Setting this derivative equal to zero, we obtain \eqref{eq.moment matching mean}. Next, we compute the gradient with respect to $\Sigma$. The relevant terms involving $\Sigma$ are the log-determinant $\log |\Sigma|$ and the quadratic form $(x - \mu)^\top \Sigma^{-1} (x - \mu)$. The gradient of the log-determinant with respect to $\Sigma$ is
$\frac{\partial}{\partial \Sigma} \log |\Sigma| = \Sigma^{-1}$. The gradient of the quadratic form is
\begin{equation}\nonumber
\begin{aligned}
&\frac{\partial}{\partial \Sigma} \Exp_{p(x)} \left\{ (x - \mu)^\top \Sigma^{-1} (x - \mu) \right\} \\
=& -\Sigma^{-1} \Exp_{p(x)} \left\{ (x - \mu)(x - \mu)^\top \right\} \Sigma^{-1}.    
\end{aligned}
\end{equation}
Combining these results, we have
\begin{equation}\nonumber
\begin{aligned}
&\frac{\partial}{\partial \Sigma} \Exp_{p(x)}\left\{ \log \mathcal{N}(x; \mu, \Sigma) \right\} 
\\
=& -\frac{1}{2} \left[ \Sigma^{-1} - \Sigma^{-1} \Exp_{p(x)} \left\{ (x - \mu)(x - \mu)^\top \right\} \Sigma^{-1} \right].    
\end{aligned}
\end{equation}
Setting this derivative equal to zero gives
\begin{equation}\nonumber
\Sigma^{-1} = \Sigma^{-1} \Exp_{p(x)} \left\{ (x - \mu)(x - \mu)^\top \right\} \Sigma^{-1},
\end{equation}
which implies \eqref{eq.moment matching variance}.
\end{proof}
This lemma states that the stationary points for maximizing the expected Gaussian likelihood are achieved when the Gaussian distribution matches the mean and variance of the given distribution. Therefore, as a special case of \eqref{eq.maximum expected Gaussian likelihood}, solving \eqref{eq.Gaussian optimization prediction} requires to match the first and second order moment:
\begin{equation}\label{eq.BF prediction solution}
\begin{aligned}
\hat{x}_{t|t-1} &= \Exp_{\substack{\mathcal{N}(x_{t-1}; \hat{x}_{t-1|t-1}, P_{t-1|t-1}) \\ p(x_t|x_{t-1})}}  \left\{x_t \right\},
\\
P_{t|t-1} &= \Exp_{\substack{\mathcal{N}(x_{t-1}; \hat{x}_{t-1|t-1}, P_{t-1|t-1}) \\ p(x_t|x_{t-1})}}  \left\{x_t x_t^{\top} \right\} - \hat{x}_{t|t-1} \hat{x}_{t|t-1}^{\top}.
\end{aligned}    
\end{equation}
For the system obeys the transition model in \eqref{eq.SSM} with $\xi_t$ being the zero mean process noise, \eqref{eq.BF prediction solution} can be further expressed as
\begin{equation}\label{eq.BF prediction solution 1}
\begin{aligned}
\hat{x}_{t|t-1} =& \Exp_{\mathcal{N}(x_{t-1}; \hat{x}_{t-1|t-1}, P_{t-1|t-1}) }  \left\{f(x_{t-1}) \right\},
\\
P_{t|t-1} =& \Exp_{\mathcal{N}(x_{t-1}; \hat{x}_{t-1|t-1}, P_{t-1|t-1})}  \left\{f(x_{t-1}) f^{\top}(x_{t-1}) \right\} \\
&+ \Var \left\{ \xi_{t-1} \right\}
- \hat{x}_{t|t-1} \hat{x}_{t|t-1}^{\top}.   
\end{aligned}    
\end{equation}
By leveraging variable substitution and the independence between process noise and state, we find that the optimal solution in \eqref{eq.BF prediction solution 1} essentially requires finding the expectation of a nonlinear function with respect to a Gaussian distribution.  
The subsequent example explains this momoment-matching operation in terms of the canonical KF:
\begin{example}[Prediction step of Kalman filter]\label{example.KF prediction}
For linear Gaussian systems $x_t = A x_{t-1} + \xi_{t-1}$ where the process noise satisfies $\xi_{t-1} \sim \mathcal{N}(\xi_{t-1}; 0, Q)$,  if the posterior distribution at time $t-1$ is given by $x_{t-1} \sim \mathcal{N}(x_{t-1}; \hat{x}_{t-1|t-1}, P_{t-1|t-1})$, the prediction step using \eqref{eq.BF prediction solution 1} yields the following stationary points:
\begin{equation}\nonumber
\begin{aligned}
\hat{x}_{t|t-1} =& \Exp_{\mathcal{N}(x_{t-1}; \hat{x}_{t-1|t-1}, P_{t-1|t-1}) }  \left\{A x_{t-1} \right\}
\\
=& A \Exp_{\mathcal{N}(x_{t-1}; \hat{x}_{t-1|t-1}, P_{t-1|t-1}) }  \left\{ x_{t-1} \right\}
\\
=& A \hat{x}_{t-1|t-1},
\\
P_{t|t-1} =& \Exp_{\mathcal{N}(x_{t-1}; \hat{x}_{t-1|t-1}, P_{t-1|t-1})}  \left\{A x_{t-1} x_{t-1}^{\top} A^{\top} \right\} \\
&+ \Var \left\{ \xi_{t-1} \right\}
- A \hat{x}_{t-1|t-1} \hat{x}_{t-1|t-1}^{\top} A^{\top}
\\
=& A P_{t-1|t-1} A^{\top} + Q. 
\end{aligned}
\end{equation}
Note that the prior covariance matrix satisfies $P_{t-1|t-1} = \Exp_{\mathcal{N}(x_{t-1}; \hat{x}_{t-1|t-1}, P_{t-1|t-1})}  \left\{x_{t-1} x_{t-1}^{\top} \right\} - \hat{x}_{t-1|t-1} \hat{x}_{t-1|t-1}^{\top}$.
\end{example}
In linear systems as shown in Example \ref{example.KF prediction},  the expectation  operator in \eqref{eq.BF prediction solution 1} is allowed to be interchanged with the affine function. However, for nonlinear functions \(f\), computing the expectation requires numerical methods.
As discussed in Section \ref{sec.1}, moment-matching KF methods, such as UKF, GHKF, and CKF, approximate this expectation using techniques like the unscented transform, Gauss–Hermite quadrature, and spherical cubature. These methods provide optimal Gaussian approximations for the prediction step by numerically solving the expectation.

\subsection{Structure of Stationary Points of \texorpdfstring{$J(\hat{x}_t, P_t)$}{J(x\_t, P\_t)}}\label{sec.stationary points of J}
Compared to the simple form of the stationary points of \(L(\hat{x}_t, P_t)\), the stationary points of \(J(\hat{x}_t, P_t)\) are relatively more complicated. For simplicity in notation, we define the negation of the log-likelihood as \(\ell(x_t, y_t)\), where \(\ell(x_t, y_t) = -\log{p(y_t|x_t)}\), and refer to \(\ell(x_t, y_t)\) as the measurement-dependent loss. Using the analytical form of the KL divergence for two Gaussian distributions, the update cost can be formulated as
\begin{equation}\label{eq.explict form of J}
\begin{aligned}
&J(\hat{x}_t, P_t) 
\\
=& \Exp_{\mathcal{N}(x_t; \hat{x}_t, P_t)} \left\{\ell(x_t, y_t)\right\}\\
&+ D_{\text{KL}}\left(\mathcal{N}(x_t; \hat{x}_t, P_t) || \mathcal{N}(x_t;\hat{x}_{t|t-1}, P_{t|t-1})\right)
\\
=& \Exp_{\mathcal{N}(x_t; \hat{x}_t, P_t)} \left\{\ell(x_t, y_t)\right\} 
\\
&+ \frac{1}{2} \left(\hat{x}_{t|t-1}-\hat{x}_t \right)^{\top} P_{t|t-1}^{-1}\left(\hat{x}_{t|t-1}-\hat{x}_t \right)
\\
&+ \frac{1}{2} \mathrm{Tr}\left(P_{t|t-1}^{-1} P_t \right)
-\frac{1}{2} \log \frac{\left|P_t\right|}{\left|P_{t|t-1}\right|}-\frac{1}{2}n .
\end{aligned}
\end{equation}
To find the stationary points of \(J(\hat{x}_t, P_t)\), we need to calculate the partial derivatives with respect to \(\hat{x}_t\) and \(P_t\). The following lemma is helpful in simplifying these partial derivative calculations:
\begin{lemma}[Gradient of expectation under Gaussian distribution]\label{lemma.Gaussian expectation}
Assuming that $f: \mathbb{R}^n \to \mathbb{R}$ is twice differentiable, we have the following results:

(i). The gradient of $\Exp_{\mathcal{N}(x; \mu, \Sigma)} \left\{ f(x) \right\}$ w.r.t. the mean $\mu$ satisfies
\begin{equation}\label{eq.Bonnet’s Theorem}
\begin{aligned}
\frac{\partial}{\partial \mu} \Exp_{\mathcal{N}(x; \mu, \Sigma)} \left\{ f(x) \right\} &= \Exp_{\mathcal{N}(x; \mu, \Sigma)} \left\{ \frac{\partial}{\partial x} f(x) \right\} 
\\
&= \Sigma^{-1} \Exp_{\mathcal{N}(x; \mu, \Sigma)} \left\{ (x-\mu) f(x) \right\}.
\end{aligned}
\end{equation}

% (2). The gradient of $\Exp_{\mathcal{N}(x; \mu, \Sigma)} \left\{ f(x) \right\}$ w.r.t. the variance $\Sigma$ is half of the expectation of the Hessian of $f(x)$:
% \begin{equation}
% \frac{\partial}{\partial \Sigma} \Exp_{\mathcal{N}(x; \mu, \Sigma)} \left\{ f(x) \right\} = \frac{1}{2} \Exp_{\mathcal{N}(x; \mu, \Sigma)} \left\{ \frac{\partial^2}{\partial x^{\top}x} f(x) \right\}.
% \end{equation}
(ii). The Hessian matrix of  $\Exp_{\mathcal{N}(x; \mu, \Sigma)} \left\{ f(x) \right\}$  w.r.t. the mean $\mu$ satisfies
\begin{equation}\label{eq.Price’s Theorem}
\begin{aligned}
&\frac{\partial^2}{\partial \mu^2 } \Exp_{\mathcal{N}(x; \mu, \Sigma)} \left\{ f(x) \right\} 
\\
=& \Exp_{\mathcal{N}(x; \mu, \Sigma)} \left\{ \frac{\partial^2}{\partial x^2} f(x) \right\}\\
=& -2 \Sigma^{-1} \left( \frac{\partial}{\partial \Sigma^{-1} }\Exp_{\mathcal{N}(x; \mu, \Sigma)} \left\{ f(x) \right\} \right)  \Sigma^{-1}
\\
=& \Sigma^{-1} \Exp_{\mathcal{N}(x; \mu, \Sigma)} \left\{ (x-\mu) (x-\mu)^{\top} f(x) \right\} \Sigma^{-1}
\\
&-\Sigma^{-1} \Exp_{\mathcal{N}(x; \mu, \Sigma)}\left\{f(x) \right\}.
\end{aligned}
\end{equation}
\end{lemma}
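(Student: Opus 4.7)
The plan rests on two identities about the Gaussian density that can be read off directly from its closed form: first, differentiating $\log \mathcal{N}(x;\mu,\Sigma)$ with respect to $\mu$ gives $\partial_\mu \mathcal{N}(x;\mu,\Sigma) = \Sigma^{-1}(x-\mu)\,\mathcal{N}(x;\mu,\Sigma)$; second, because $\mathcal{N}$ depends on $\mu$ and $x$ only through the combination $x-\mu$, we have the ``swap'' identity $\partial_\mu \mathcal{N}(x;\mu,\Sigma) = -\partial_x \mathcal{N}(x;\mu,\Sigma)$. These two identities will let me move derivatives between $\mu$, $x$, and the density under the integral sign.

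For part (i), I would interchange differentiation and integration (justified below), then apply both identities. Applying the swap identity gives $\partial_\mu \int f(x)\mathcal{N}\,\mathrm{d}x = -\int f(x)\partial_x \mathcal{N}\,\mathrm{d}x$, and a single integration by parts with vanishing boundary terms at infinity transfers the derivative onto $f$, yielding the first equality. Applying the first identity directly inside the integral produces the score-function representation $\Sigma^{-1}\Exp\{(x-\mu)f(x)\}$, which is the second equality.

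For part (ii), I would obtain the identity $\partial^2_{\mu}\Exp\{f\}=\Exp\{\partial^2_x f\}$ by applying part (i) componentwise twice, using $\partial_{x_j}f$ as the integrand in the second application. For the explicit moment expression, I would differentiate $\Sigma^{-1}\Exp\{(x-\mu)f(x)\}$ with respect to $\mu^\top$: the explicit $(x-\mu)$ factor contributes $-\Sigma^{-1}\Exp\{f\}$, while differentiating the Gaussian weight contributes $\Sigma^{-1}\Exp\{(x-\mu)(x-\mu)^\top f\}\Sigma^{-1}$ via the first identity, yielding the last line of \eqref{eq.Price’s Theorem}. Finally, for the middle equality involving $\partial/\partial\Sigma^{-1}$, I would exploit the fact that $\log\mathcal{N}$ depends on $\Sigma^{-1}$ through $\tfrac{1}{2}\log|\Sigma^{-1}|-\tfrac{1}{2}(x-\mu)^\top\Sigma^{-1}(x-\mu)$, so $\partial_{\Sigma^{-1}}\mathcal{N}=\tfrac{1}{2}\bigl(\Sigma-(x-\mu)(x-\mu)^\top\bigr)\mathcal{N}$. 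Sandwiching by $-2\Sigma^{-1}(\cdot)\Sigma^{-1}$ then reproduces the explicit moment expression, closing the three-way chain of equalities.

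The main obstacle, though more a matter of hygiene than depth, is twofold. First, the interchange of differentiation and integration and the vanishing of boundary terms in the integration by parts require regularity assumptions on $f$; I would assume $f$ together with its first two derivatives has at most polynomial growth, so that $f(x)\mathcal{N}(x;\mu,\Sigma)\to 0$ and $\partial_x f(x)\mathcal{N}(x;\mu,\Sigma)\to 0$ as $\|x\|\to\infty$, making dominated convergence and Fubini applicable. Second, in the derivation involving $\partial/\partial \Sigma^{-1}$, care is needed with the symmetric matrix variable; I would either treat entries independently and symmetrize at the end, or use $\mathrm{vec}$ and Kronecker identities, so that the identity $\partial^2_\mu\Exp\{f\}=-2\Sigma^{-1}(\partial_{\Sigma^{-1}}\Exp\{f\})\Sigma^{-1}$ holds as a genuine matrix equation rather than up to symmetrization. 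Neither step is conceptually deep, so the proof reduces to carefully executing these standard calculations.
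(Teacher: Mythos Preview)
Your proposal is correct and follows the standard derivation of these identities (score-function/Stein identity plus the swap $\partial_\mu\mathcal{N}=-\partial_x\mathcal{N}$ followed by integration by parts). In fact you supply considerably more than the paper does: the paper's own ``proof'' consists solely of naming the results (Bonnet's Theorem for (i), Price's Theorem for (ii)), noting their connection to Stein's lemma, and citing a technical report for details, without carrying out any computation. So there is no substantive comparison to make beyond observing that your argument is the fleshed-out version of what the paper defers to the literature; the regularity caveats you flag (polynomial growth to justify dominated convergence and vanishing boundary terms, and care with the symmetric matrix derivative $\partial/\partial\Sigma^{-1}$) are exactly the points one must handle but which the paper leaves implicit.
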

\begin{proof}
The results presented are all related to Stein's lemma \cite{stein1981estimation}. Result (i) is known as Bonnet’s Theorem \cite{bonnet1964transformations} and result (ii) is referred to as Price’s Theorem \cite{price1958useful}. Detailed discussions about these results can be found in this technical report \cite{lin2019stein}. \end{proof}
Before studying the structure of the stationary points, we have the following assumption:
\begin{assumption}\label{assump.twice differentiable}
The measurement-dependent loss function $\ell(x_t, y_t)$ is twice differentiable with respect to the state $x_t$.    
\end{assumption}
Based on Assumption \ref{assump.twice differentiable} and the result of Lemma \ref{lemma.Gaussian expectation}, we have the partial derivative of $J(\hat{x}_t, P_t)$ with respect to $\hat{x}_t$ and $P_t^{-1}$:
\begin{subequations}\label{eq.derivative}
\begin{align}
&\frac{\partial J(\hat{x}_t, P_t)}{\partial \hat{x}_t} \nonumber
\\
=& \Exp_{\mathcal{N}(x_t; \hat{x}_t, P_t)} \left\{ \frac{\partial \ell(x_t, y_t)}{\partial x_t} \right\} + P_{t|t-1}^{-1}(\hat{x}_t - \hat{x}_{t|t-1}), \label{eq.derivative x}
\\
&\frac{\partial J(\hat{x}_t, P_t)}{\partial P_t^{-1}} \nonumber
\\
=& -\frac{1}{2} P_t \cdot \Exp_{\mathcal{N}(x_t; \hat{x}_t, P_t)} \left\{ \frac{\partial^2 \ell(x_t, y_t)}{\partial x_t^2} \right\} \cdot P_t \nonumber
\\
&- \frac{1}{2}P_t P^{-1}_{t|t-1}P_t + \frac{1}{2}P_t.
\label{eq.derivative P}
\end{align}
\end{subequations}
To find extrema, we could attempt to set the first derivatives to zero $$\frac{\partial{J}(\hat{x}_{t|t}, P_{t|t})}{\partial{\hat{x}_{t|t}}} = \frac{\partial{J}(\hat{x}_{t|t}, P_{t|t})}{\partial{P_{t|t}^{-1}}} = 0,$$ we have
\begin{subequations}\label{eq.Gaussian BF}
\begin{align}
\hat{x}_{t|t} &=  \hat{x}_{t|t-1} - P_{t|t-1}\Exp_{\mathcal{N}(x_t; \hat{x}_{t|t}, P_{t|t})} \left\{ \frac{\partial \ell(x_t, y_t)}{\partial x_t} \right\}, \label{eq.Gaussian BF x}
\\
P_{t|t}^{-1} &= P_{t|t-1}^{-1} + \Exp_{\mathcal{N}(x_t; \hat{x}_{t|t}, P_{t|t})} \left\{ \frac{\partial^2 \ell(x_t, y_t)}{\partial x_t^2} \right\}. \label{eq.Gaussian BF P}
\end{align}    
\end{subequations}
As shown in \eqref{eq.Gaussian BF}, the first-order condition is generally not possible to isolate for $\hat{x}_{t|t}$ and $P_{t|t}$. An exception for it is the well-known KF, where the expectation in \eqref{eq.Gaussian BF} is a constant value, as shown in the subsequent example:
\begin{example}[Update step of Kalman filter]\label{example.KF update}
For linear Gaussian systems with output probability $p(y_t|x_t) = \mathcal{N}(y_t; Cx_t, R)$, the optimal Gaussian approximation of the posterior mean in  \eqref{eq.Gaussian BF x} can be written as 
\begin{equation}\label{eq.KF x update}
\begin{aligned}
&\hat{x}_{t|t} 
\\
=&  \hat{x}_{t|t-1} + P_{t|t-1}\Exp_{\mathcal{N}(x_t; \hat{x}_{t|t}, P_{t|t})} \left\{\frac{\partial}{\partial{x_t}} \log\left\{\mathcal{N}(y_t;Cx_t, R)\right\} \right\}
\\
=&
\hat{x}_{t|t-1} - P_{t|t-1}\Exp_{\mathcal{N}(x_t; \hat{x}_{t|t}, P_{t|t})} \left\{\frac{\partial}{\partial{x_t}} \left\{\frac{1}{2} \|y_t - Cx_t\|^2_{R^{-1}} \right\} \right\}
\\
=&
\hat{x}_{t|t-1} + P_{t|t-1}\Exp_{\mathcal{N}(x_t; \hat{x}_{t|t}, P_{t|t})} \left\{C^{\top}R^{-1}(y_t-Cx_t) \right\}
\\
=&
\hat{x}_{t|t-1} + P_{t|t-1}C^{\top}R^{-1}(y_t-C\hat{x}_{t|t}).
\end{aligned}
\end{equation}
From \eqref{eq.KF x update}, we can obtain
\begin{equation}\label{eq.KF x update2}
\hat{x}_{t|t} = \hat{x}_{t|t-1} + K_t(y_t-C\hat{x}_{t|t-1}), 
\end{equation}
with $K_t$ being the gain matrix of KF defined as
$K_t \triangleq P_{t|t-1}C^{\top}(CP_{t|t-1}C^{\top} + R)^{-1}$. Similarly, the optimal Gaussian approximation of the posterior covariance \eqref{eq.Gaussian BF P} can be written as
\begin{equation}\label{eq.KF P update}
\begin{aligned}
&P_{t|t} \\
=& \Big(P_{t|t-1}^{-1} - \Exp_{\mathcal{N}(x_t; \hat{x}_{t|t}, P_{t|t})} \big\{\frac{\partial^2}{\partial{x_t^{2}}} \log\left\{\mathcal{N}(y_t;Cx_t, R)\right\} \big\}\Big)^{-1}
\\
=& 
\left(P_{t|t-1}^{-1} + C^{\top}R^{-1}C \right)^{-1}
\\
=&
P_{t|t-1} - P_{t|t-1}C^{\top}(R+CP_{t|t-1}C^{\top})^{-1}CP_{t|t-1}.
\end{aligned}
\end{equation}
 Note that \eqref{eq.KF x update2} and \eqref{eq.KF P update} consist of the update step of the canonical KF.
\end{example}
As shown in Example \ref{example.KF update}, for the special case of a linear Gaussian system, the extreme conditions are specified by two decoupled equations whose analytical solutions correspond to the analytical form of the KF update. Specifically, the expectation on the right-hand side of \eqref{eq.Gaussian BF x} depends solely on the posterior mean, while the expectation in \eqref{eq.Gaussian BF P} is determined entirely by the system, meaning it is unrelated to either the mean or covariance. However, for a general nonlinear or non-Gaussian system, the expectations in \eqref{eq.Gaussian BF} depend on both the posterior mean and covariance, rendering the stationary points analytically intractable.

\begin{remark}
As discussed in Section \ref{sec.2}, existing Gaussian filters rely on approximation techniques that solve \eqref{eq.Gaussian BF} by linearizing the measurement model and then performing the KF updates as \eqref{eq.KF x update2} and \eqref{eq.KF P update}. We contend that this type of approximation technique inevitably introduces linearization errors. For example, the Taylor series expansion technique unavoidably results in higher-order error terms. Moreover, these approximation techniques are applicable only for Gaussian noises \cite{sarkka2023bayesian}. Therefore, there is an urgent need to develop a new method that can directly solve the Gaussian approximation for Bayesian filtering update to avoid linearization errors.    
\end{remark}

\section{Natural Gradient Gasussian Approximation}\label{sec.NANO}
In the previous section, we observed that existing moment-matching KF methods already provide exact numerical solutions for the optimal Gaussian approximation in the prediction step. However, the update step is not sufficiently resolved in current Gaussian filters, as these methods rely on linearization to approximate the stationary point, which introduce  linearization errors. In this section, we aim to tackle \textbf{Q2}. Specifically, we will design an algorithm to solve \eqref{eq.Gaussian optimization update}.

By examining the structure of the extreme conditons defined by \eqref{eq.Gaussian BF}, we find that it is challenging to directly obtain an analytical form of the stationary point because it is typically impossible to isolate the updates for the mean and covariance in \eqref{eq.Gaussian BF}. Therefore, a more practical approach to finding the optimal solution is to directly minimize the update cost  \(J(\hat{x}_t, P_t)\). To find the steepest descent in optimizing the parameters of Gaussian distributions \cite{amari1998natural,martens2020new}, we derive a natural gradient iteration for finding the optimal Gaussian approximation.

For simplicity, we stack the Gaussian parameters into a single column vector $v$ and calculate the derivative with respect to it: 
\begin{equation}\label{eq.stack definition}
\begin{aligned}
v = \begin{bmatrix}
\hat{x}_t \\ \rvec(P_t^{-1})
\end{bmatrix}, \; \frac{\partial}{\partial{v}} J(\hat{x}_t, P_t) &= \begin{bmatrix}
\frac{\partial}{\partial \hat{x}_t} J(\hat{x}_t, P_t)  \\ \rvec\left(\frac{\partial}{\partial P_t^{-1}} J(\hat{x}_t, P_t)\right)
\end{bmatrix}.
\end{aligned}
\end{equation}
Here, we consider the inverse of the covariance matrix instead of its original form. This consideration is inspired by the structure of the information filter \cite{fraser1967new,anderson2005optimal}, an equivalent form of the KF, where the inverse of the covariance matrix is employed instead of the covariance matrix itself. This is because the inverse can potentially simplify the mathematical expression of the update step in Bayesian filtering \cite{fraser1967new,anderson2005optimal}.
Additionally, to easily represent the iteration, we define 
\begin{equation}\nonumber
\delta v \triangleq \begin{bmatrix}
\delta \hat{x}_t \\ \rvec(\delta P_t^{-1})
\end{bmatrix} = 
\begin{bmatrix}
\hat{x}_t^{(i+1)} - \hat{x}_t^{(i)} \\
\mathrm{vec} \left(\left( P_t^{-1} \right)^{(i+1)} - \left(P_t^{-1} \right)^{(i)}\right)
\end{bmatrix}
,       
\end{equation}
where $i$ is the iteration index. Under this notation, the natural gradient parameter update can be defined as
\begin{equation}\label{eq.natural gradient update}
\delta v = - \left[ \mathcal{F}_v^{-1} \frac{\partial}{\partial v}  J(\hat{x}_t, P_t) \right]_{v=v^{(i)}},  
\end{equation}
where $\mathcal{F}_v$ is the fisher information matrix associated with the Gaussian distribution $\mathcal{N}(x_t; \hat{x}_t, P_t)$ and $v^{(i)}$ represents the value of $v$ in the $i$-th iteration. The next proposition provides the formulation of Fisher information matrix:
\begin{proposition}\label{prop.fisher information matrix}
The inverse of the Fisher information matrix $\mathcal{F}^{-1}_v$ associated with $\mathcal{N}(x_t; \hat{x}_t, P_t)$ is 
\begin{equation}\label{eq.fisher matrix inverse}
\begin{aligned}
\mathcal{F}^{-1}_v = 
\begin{bmatrix}
P_t & 0 \\
0 & 2 (P_t^{-1} \otimes P_t^{-1})
\end{bmatrix}, 
\end{aligned}   
\end{equation}
where $\otimes$ is the kronecker product.
\end{proposition}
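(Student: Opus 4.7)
The plan is to compute $\mathcal{F}_v$ directly from the outer-product definition $\mathcal{F}_v = \Exp_{\mathcal{N}(x_t;\hat{x}_t,P_t)}\{(\partial \log p/\partial v)(\partial \log p/\partial v)^\top\}$, verify that it is block-diagonal, evaluate each diagonal block, and invert the result using the Kronecker identity $(A \otimes B)^{-1} = A^{-1}\otimes B^{-1}$.

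First, I would write $\log \mathcal{N}(x_t;\hat{x}_t,P_t) = \tfrac{1}{2}\log|P_t^{-1}| - \tfrac{1}{2}(x_t-\hat{x}_t)^\top P_t^{-1}(x_t-\hat{x}_t)+\text{const}$ and compute the two score components,
\[
\frac{\partial \log p}{\partial \hat{x}_t} = P_t^{-1}(x_t-\hat{x}_t), \quad \frac{\partial \log p}{\partial \rvec(P_t^{-1})} = \tfrac{1}{2}\,\rvec\!\bigl(P_t-(x_t-\hat{x}_t)(x_t-\hat{x}_t)^\top\bigr),
\]
using the standard matrix-calculus rules $\partial \log|\Lambda|/\partial \Lambda = \Lambda^{-\top}$ and $\partial \mathrm{Tr}(\Lambda S)/\partial \Lambda = S^\top$, evaluated at symmetric arguments. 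Setting $\epsilon = x_t - \hat{x}_t$, the off-diagonal block of $\mathcal{F}_v$ is proportional to $\Exp\{\epsilon\,\rvec(\epsilon\epsilon^\top)^\top\}$, whose entries are third-order central Gaussian moments and therefore vanish, giving the claimed block-diagonal structure. The $(1,1)$ block follows immediately from $\Exp\{\epsilon\epsilon^\top\}=P_t$, yielding $P_t^{-1}P_tP_t^{-1}=P_t^{-1}$, whose inverse is $P_t$.

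The $(2,2)$ block reduces to $\tfrac{1}{4}\Cov(\rvec(\epsilon\epsilon^\top))$. I would evaluate this using Isserlis' theorem, which for zero-mean Gaussians gives $\Cov(\epsilon_i\epsilon_j,\epsilon_k\epsilon_l)=(P_t)_{ik}(P_t)_{jl}+(P_t)_{il}(P_t)_{jk}$. Translating to $\rvec$ form, the first term is precisely the $((i,j),(k,l))$-entry of $P_t \otimes P_t$, while the second produces $K_{n,n}(P_t \otimes P_t)$, where $K_{n,n}$ is the commutation matrix; hence $\Cov(\rvec(\epsilon\epsilon^\top))=(I+K_{n,n})(P_t \otimes P_t)$. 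Because the score vector $\rvec(P_t-\epsilon\epsilon^\top)$ lies in the subspace of symmetric $n \times n$ matrices—on which $K_{n,n}$ acts as the identity—this block acts as $\tfrac{1}{2}(P_t \otimes P_t)$, whose inverse (via $(P_t \otimes P_t)^{-1} = P_t^{-1}\otimes P_t^{-1}$) is $2(P_t^{-1}\otimes P_t^{-1})$. Assembling the two blocks then yields \eqref{eq.fisher matrix inverse}.

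The main technical obstacle will be the $(2,2)$-block calculation. Carrying the Isserlis expansion through the $\rvec$/Kronecker bookkeeping is mechanical but easy to bungle, and one must justify the restriction to the symmetric subspace so that $\tfrac{1}{2}(P_t \otimes P_t)$ is the ``right'' representative of the Fisher block whose inverse $2(P_t^{-1}\otimes P_t^{-1})$ actually drives the natural-gradient update \eqref{eq.natural gradient update}, despite the redundant $n^2$-dimensional $\rvec(P_t^{-1})$ parameterization being rank-deficient on antisymmetric perturbations.
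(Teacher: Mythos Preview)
Your proposal is correct. The paper itself does not prove this proposition at all---it simply cites Section~2.2 of \cite{barfoot2020multivariate}---so your direct computation via the outer-product definition, Isserlis' theorem, and the commutation matrix is substantially more than the paper provides. You have also correctly identified the one genuine subtlety: the $\rvec(P_t^{-1})$ parameterization is redundant, so the outer-product Fisher block $\tfrac{1}{4}(I+K_{n,n})(P_t\otimes P_t)$ is singular, and the formula $2(P_t^{-1}\otimes P_t^{-1})$ is really the inverse of its restriction to the symmetric subspace (where $K_{n,n}$ acts as the identity and the block becomes $\tfrac{1}{2}(P_t\otimes P_t)$). Since the gradients $\partial J/\partial\,\rvec(P_t^{-1})$ that appear in \eqref{eq.natural gradient update} are always $\rvec$'s of symmetric matrices, this restricted inverse is exactly what the natural-gradient update needs, which justifies the paper's stated form. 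Your plan handles this correctly; just be explicit in the write-up that $\mathcal{F}_v^{-1}$ here means the inverse on the symmetric subspace (equivalently, a Moore--Penrose pseudoinverse composed with the symmetrizer), not a literal matrix inverse in $\mathbb{R}^{n^2\times n^2}$.
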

The proof of this proposition can be found in Section 2.2 of \cite{barfoot2020multivariate}.
Combining \eqref{eq.stack definition} and \eqref{eq.fisher matrix inverse} with \eqref{eq.natural gradient update}, we have
\begin{equation}\nonumber
\begin{aligned}
\delta \hat{x}_t &= - \left[ P_t \frac{\partial}{\partial \hat{x}_t} J(\hat{x}_t, P_t) \right]_{v=v^{(i)}}, 
\\
\rvec \left(\delta P_t^{-1} \right) &= -2 \left[ \left(P_t^{-1} \otimes P_t^{-1} \right) \rvec\left(\frac{\partial}{\partial P_t^{-1}} J(\hat{x}_t, P_t)\right) \right]_{v=v^{(i)}}.
\end{aligned}    
\end{equation}
Transforming this into matrix form, we can derive the following iterative updates:
\begin{equation}\label{eq.iterative updates}
\begin{aligned}
\left( P_t^{-1} \right)^{(i+1)} =& \left( P_t^{-1} \right)^{(i)}
\\
& -2 \left( P_t^{-1} \right)^{(i)}  
\left. \frac{\partial}{\partial P_t^{-1}} J(\hat{x}_t, P_t) \right|_{v^{(i)}} 
\left( P_t^{-1} \right)^{(i)}, 
% \label{eq.iterative updates P}
\\
\hat{x}^{(i+1)}_t =& \hat{x}^{(i)}_t - P_t^{(i+1)} 
\left.
\frac{\partial}{\partial \hat{x}_t} J(\hat{x}_t, P_t) \right|_{v^{(i)}}. 
% \label{eq.iterative updates x}
\end{aligned}
\end{equation}
% Note that in \eqref{eq.iterative updates x}, we use the updated covariance matrix $P_t^{(i+1)}$ instead of its value in the $i$-th iteration. This is because the updated value generally provide a more precise approximation of the optimal parameter. 
Combining the iterative updates \eqref{eq.iterative updates} with the formulation of partial derivative, we have
\begin{equation}\label{eq.natural gradient update 1}
\begin{aligned}
\left( P_t^{-1} \right)^{(i+1)} =& P_{t|t-1}^{-1}  + \Exp_{\mathcal{N}(x_t; \hat{x}_t^{(i)}, P_t^{(i)})} \left\{ \frac{\partial^2 \ell(x_t, y_t)}{\partial x_t^2} \right\},\\
\hat{x}^{(i+1)}_t =& \hat{x}^{(i)}_t - P_t^{(i+1)} 
\Exp_{\mathcal{N}(x_t; \hat{x}_t^{(i)}, P_t^{(i)})} \left\{ \frac{\partial \ell(x_t, y_t)}{\partial x_t} \right\} \\
&-
P_t^{(i+1)} P_{t|t-1}^{-1}(\hat{x}_t^{(i)} - \hat{x}_{t|t-1}).
\end{aligned}  
\end{equation}
One practical issue when performing \eqref{eq.natural gradient update 1} is that the derivatives of $\ell(x_t, y_t)$ can be hard to compute. To avoid the need to compute derivatives of
the measurement-dependent loss, we can once again apply Lemma \ref{lemma.Gaussian expectation} to acquire the derivative-free formulation. By applying \eqref{eq.Bonnet’s Theorem} and \eqref{eq.Price’s Theorem}, we have
\begin{equation}\label{eq.derivative-free formulation}
\begin{aligned}
&\Exp_{\mathcal{N}(x_t; \hat{x}_t^{(i)}, P_t^{(i)})} \left\{ \frac{\partial \ell(x_t, y_t)}{\partial x_t} \right\} 
\\
=& \left( P_t^{-1} \right)^{(i)} 
\Exp_{\mathcal{N}(x_t; \hat{x}_t^{(i)}, P_t^{(i)})} \left\{ \left(x_t - x_t^{(i)} \right)\ell(x_t, y_t) \right\},
\\
&\Exp_{\mathcal{N}(x_t; \hat{x}_t^{(i)}, P_t^{(i)})} \left\{ \frac{\partial^2 \ell(x_t, y_t)}{\partial x_t^2} \right\}
\\
=& \left( P_t^{-1} \right)^{(i)} \Exp_{\mathcal{N}(x_t; \hat{x}_t^{(i)}, P_t^{(i)})} \big\{ (x_t - \hat{x}_t^{(i)} ) (x_t - \hat{x}_t^{(i)})^{\top} \ell(x_t, y_t) \big\}  \left( P_t^{-1} \right)^{(i)}
\\
&- \left( P_t^{-1} \right)^{(i)} \Exp_{\mathcal{N}(x_t; \hat{x}_t^{(i)}, P_t^{(i)})} \left\{
\ell(x_t, y_t) \right\}.
\end{aligned}
\end{equation}
With the result in \eqref{eq.derivative-free formulation}, we have the derivative-free update scheme shown in \eqref{eq.derivative-free update}.
\begin{figure*}[t]
\begin{equation}\label{eq.derivative-free update}
\begin{aligned}
\left( P_t^{-1} \right)^{(i+1)} =& P_{t|t-1}^{-1} 
+ 
\left( P_t^{-1} \right)^{(i)} \cdot \Exp_{\mathcal{N}(x_t; \hat{x}_t^{(i)}, P_t^{(i)})} \Big\{ (x_t - \hat{x}_t^{(i)} ) (x_t - \hat{x}_t^{(i)} )^{\top} \ell(x_t, y_t) \Big\} \cdot \left( P_t^{-1} \right)^{(i)} - \left( P_t^{-1} \right)^{(i)} \Exp_{\mathcal{N}(x_t; \hat{x}_t^{(i)}, P_t^{(i)})} \left\{
\ell(x_t, y_t) \right\},
\\
\hat{x}^{(i+1)}_t =& \hat{x}^{(i)}_t - P_t^{(i+1)} \cdot \left( P_t^{-1} \right)^{(i)} \cdot
\Exp_{\mathcal{N}(x_t; \hat{x}_t^{(i)}, P_t^{(i)})} \left\{ \left(x_t - x_t^{(i)} \right)\ell(x_t, y_t) \right\} -
P_t^{(i+1)} P_{t|t-1}^{-1}(\hat{x}_t^{(i)} - \hat{x}_{t|t-1}).
\end{aligned}    
\end{equation}
\end{figure*}
This update scheme is still practically intractable for two reasons. First, the expectations generally do not have analytical forms. Second, the update scheme in \eqref{eq.derivative-free update} generally cannot guarantee that the covariance matrix will be positive definite. 

To address the first issue, we could use well-established numerical integration methods, such as the unscented transform \cite{julier1995new}, Gauss–Hermite integration \cite{golub1969calculation}, or spherical cubature integration \cite{arasaratnam2009cubature} to approximate the expectation calculations.
For the second issue, one possible and efficient solution is to provide a sufficiently good initialization. For example, we could solve the maximum a posterior estimation problem:
\begin{equation}\label{eq.MAP initialziation}
\begin{aligned}
\hat{x}_{t|t}^{\mathrm{MAP}} &= \argmax_{x_t} \left\{
\mathcal{N}(x_t; \hat{x}_{t|t-1}, P_{t|t-1})
 \cdot \exp\{-\ell(x_t, y_t)\} \right\}
\\
& = \argmax_{x_t} \left\{
\log{\mathcal{N}(x_t; \hat{x}_{t|t-1}, P_{t|t-1})}
 - \ell(x_t, y_t) \right\}, 
\end{aligned}
\end{equation}
and use Laplace's approximation \cite{kass1991laplace} to construct the initial mean and covariance for the iteration in \eqref{eq.derivative-free update}, as shown in the subsequent equation:
{\small 
\begin{equation}\label{eq.initialization}
\begin{aligned}
\hat{x}_{t}^{(0)} =& \hat{x}_{t|t}^{\mathrm{MAP}},
\\
\left(P_{t}^{-1}\right)^{(0)} = &  \left. \frac{\partial^2 \left\{-\log{\mathcal{N}(x_t; \hat{x}_{t|t-1}, P_{t|t-1})}
 + \ell(x_t, y_t)\right\}}{\partial x_t^{2}}\right|_{x_t = \hat{x}_{t|t}^{\mathrm{MAP}}}.
\end{aligned}
\end{equation}}    
This method works quite well in most scenarios. Besides this initialization trick, other methods such as Cholesky decomposition \cite{salimbeni2018natural} or square-root parameterization \cite{glasmachers2010exponential} can also be leveraged to ensure the positive definiteness of the covariance matrix.

Another important consideration for the iterative scheme is the stopping criterion. As suggested by \cite{garcia2015posterior}, we use the KL divergence between two consecutive Gaussian distributions, specifically $\mathcal{N}^{(i)}= \mathcal{N}(x_t; \hat{x}_{t|t}^{(i)}, P_{t|t}^{(i)})$ and $\mathcal{N}^{(i+1)} = \mathcal{N}(x_t; \hat{x}_{t|t}^{(i+1)}, P_{t|t}^{(i+1)})$ to determine when to stop the iteration: \begin{equation}\label{eq.stopping criterion} D_{\mathrm{KL}}(\mathcal{N}^{(i)} \| \mathcal{N}^{(i+1)}) < \gamma, \end{equation} where $\gamma$ is a predefined threshold. This approach is more effective compared to using $D_{\mathrm{KL}}(\mathcal{N}^{(i+1)} \| \mathcal{N}^{(i)}) < \gamma$ because $\mathcal{N}^{(i+1)}$ is generally more concentrated. By using $D_{\mathrm{KL}}(\mathcal{N}^{(i)} \| \mathcal{N}^{(i+1)})$, we ensure that the criterion remains sensitive to convergence while avoiding premature termination of the algorithm.

Recall that in Section \ref{sec.stationary points of L}, we proved that the prediction step of the moment-matching KF algorithms essentially follows the optimal solution of Gaussian filtering. By combining this step with our natural gradient descent update step \eqref{eq.derivative-free update}, we developed a new iterative filter. To emphasize that natural gradient descent is our key contribution, we call it the \textbf{N}atural gr\textbf{A}dient Gaussia\textbf{N} appr\textbf{O}ximation filter, or NANO filter for short. The pseudocode of the NANO filter is summarized in Algorithm~\ref{alg.1}. Note that all the expectation computations appearing in Algorithm~\ref{alg.1} are suggested to use the efficient unscented transform \cite{julier2004unscented}.
% using unscented transform is summarized in Algorithm~\ref{alg.1}.
% Note that the details of the unscented transform can be found in Appendix \ref{appendix.unscented transform}.

\begin{algorithm}[t]
\caption{NANO Filter}\label{alg.1}
\begin{algorithmic} 
    \State \textbf{Input:} Stopping thresohold $\gamma$
    \State \textbf{Initialization:} State estimate $\hat{x}_{0|0}$ and covariance $P_{0|0}$
    \For{each time step $t$}
        \State \textbf{Predict:}
        \State Calculate predicted state mean $\hat{x}_{t|t-1}$ and covariance $P_{t|t-1}$ using \eqref{eq.BF prediction solution 1} 
        \State \textbf{Update:}
        \State Obtain the noisy measurement $y_t$ 
        \State  Initialize the state estimate $\hat{x}_{t}^{(0)}$ and covariance $P_{t}^{(0)}$ using \eqref{eq.initialization}
        \For{each iteration number $i$} \If{\eqref{eq.stopping criterion} is not satisfied}
        \State {Update state estimate and covariance using  \eqref{eq.derivative-free update}}
        \EndIf
        \EndFor
        \State 
        $
        \hat{x}_{t|t} = \hat{x}_t^{(i)}, P_{t|t} = P_t^{(i)}
        $
    \EndFor
\end{algorithmic}
\end{algorithm}

\section{Theoretical Analysis}\label{sec.theoretical analysis}
In this section, we will answer \textbf{Q3}, i.e., we will provide the convergence and stability analysis for NANO filter.
\subsection{Convergence Analysis}
After deriving the update scheme of natural gradient Gaussian filtering in  \eqref{eq.natural gradient update 1}, a key question arises: does it converge to the optimal solution of  \eqref{eq.Gaussian optimization update}? The following theorem confirms the local convergence of our update scheme:
\begin{theorem}\label{theorem.local convergence}
Consider the Taylor series expansion that is second order in $\delta \hat{x}_t$ and first order in $\delta P_t^{-1}$. Under this approximation, the iterative update in \eqref{eq.natural gradient update 1} guarantees convergence, i.e.,
\begin{equation}\label{eq.Taylor series expansion}
\begin{aligned}
J_t^{(i+1)} \approx&\ J_t^{(i)} +  \left. \frac{\partial{J}}{\partial{\hat{x}_t^\top}} \right|_{v^{(i)}} \delta \hat{x}_t 
+ \frac{1}{2} \left(\delta \hat{x}_t\right)^\top \left( \left. \frac{\partial^2{J}}{\partial{\hat{x}_t^2}} \right|_{v^{(i)}} \right) \delta \hat{x}_t \\
&+ \mathrm{Tr} \left( \left. \frac{\partial J}{\partial P_t^{-1}} \right|_{v^{(i)}}  \delta P_t^{-1} \right) \\
\leq&\ J_t^{(i)},
\end{aligned}    
\end{equation}
where $J_t^{(i)}$ is the update cost at the $i$-th iteration, defined as $J_t^{(i)} \triangleq J(\hat{x}_t^{(i)}, P_t^{(i)})$. Moreover, equality in \eqref{eq.Taylor series expansion} holds if and only if $\delta \hat{x}_t = 0$ and $\delta P_t^{-1} = 0$.
\end{theorem}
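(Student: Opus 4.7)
The plan is to substitute the natural gradient updates into the right-hand side of \eqref{eq.Taylor series expansion} and show that each surviving piece is non-positive, exploiting the positive definiteness of both $P_t^{(i)}$ and $P_t^{(i+1)}$. Writing $g_x \triangleq (\partial J/\partial \hat{x}_t)|_{v^{(i)}}$ and $G_P \triangleq (\partial J/\partial P_t^{-1})|_{v^{(i)}}$, the update rule \eqref{eq.iterative updates} gives $\delta \hat{x}_t = -P_t^{(i+1)} g_x$ and $\delta P_t^{-1} = -2\,(P_t^{-1})^{(i)} G_P\,(P_t^{-1})^{(i)}$. Note the sequential structure: the new covariance $P_t^{(i+1)}$ (rather than $P_t^{(i)}$) multiplies $g_x$ in the mean step, and this is exactly what will be needed to combine the first-order and second-order $\hat{x}_t$-contributions.

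The decisive ingredient is the identity
\begin{equation}\nonumber
\left.\frac{\partial^2 J}{\partial \hat{x}_t^2}\right|_{v^{(i)}} \;=\; \Exp_{\mathcal{N}(x_t;\hat{x}_t^{(i)},P_t^{(i)})}\!\left\{\frac{\partial^2 \ell(x_t,y_t)}{\partial x_t^2}\right\} + P_{t|t-1}^{-1} \;=\; (P_t^{-1})^{(i+1)},
\end{equation}
obtained by differentiating \eqref{eq.derivative x} once more, invoking Bonnet's theorem from Lemma \ref{lemma.Gaussian expectation} to move the derivative inside the expectation, and recognizing the right-hand side as the $P^{-1}$ update in \eqref{eq.natural gradient update 1}. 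With this identity, the $\hat{x}_t$-part of \eqref{eq.Taylor series expansion} collapses:
\begin{equation}\nonumber
g_x^\top \delta \hat{x}_t + \tfrac{1}{2}\,\delta \hat{x}_t^\top (P_t^{-1})^{(i+1)}\, \delta \hat{x}_t \;=\; -\,g_x^\top P_t^{(i+1)} g_x + \tfrac{1}{2}\,g_x^\top P_t^{(i+1)} g_x \;=\; -\tfrac{1}{2}\, g_x^\top P_t^{(i+1)} g_x,
\end{equation}
which is non-positive because $P_t^{(i+1)} \succ 0$. For the trace term, symmetry of $G_P$ (evident from \eqref{eq.derivative P}) together with positive definiteness of $(P_t^{-1})^{(i)}$ yields
\begin{equation}\nonumber
\mathrm{Tr}\!\left(G_P\, \delta P_t^{-1}\right) \;=\; -2\,\mathrm{Tr}\!\left(G_P (P_t^{-1})^{(i)} G_P (P_t^{-1})^{(i)} \right) \;=\; -2\,\|M\|_F^2 \;\le\; 0,
\end{equation}
where $M \triangleq (P_t^{-1/2})^{(i)} G_P\, (P_t^{-1/2})^{(i)}$ is symmetric.

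Summing the two bounds produces $J_t^{(i+1)} - J_t^{(i)} \le -\tfrac{1}{2}\, g_x^\top P_t^{(i+1)} g_x - 2\,\|M\|_F^2 \le 0$. Equality in each piece forces $g_x = 0$ and $G_P = 0$ (again invoking the strict positive definiteness of $P_t^{(i+1)}$ and $(P_t^{-1})^{(i)}$), which in turn is equivalent to $\delta \hat{x}_t = 0$ and $\delta P_t^{-1} = 0$, giving the claimed equality characterization. The main obstacle I anticipate is the bookkeeping around the Hessian identity: had pure natural gradient been used with $P_t^{(i)}$ in place of $P_t^{(i+1)}$ in the mean update, or had the Hessian not matched the updated inverse covariance, the quadratic and linear $\hat{x}_t$-terms would not have cancelled down to a clean negative-definite form. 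A secondary point to verify is that $(P_t^{-1})^{(i+1)}$ is indeed positive definite under the Taylor approximation regime — this is what the Laplace initialization in \eqref{eq.initialization} is designed to ensure and what underlies the word \emph{local} in the statement.
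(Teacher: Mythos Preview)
Your proposal is correct and follows essentially the same route as the paper: identify the Hessian $\partial^2 J/\partial \hat{x}_t^2|_{v^{(i)}}$ with $(P_t^{-1})^{(i+1)}$ via Bonnet's/Price's theorem, substitute the natural-gradient increments, and reduce the Taylor difference to two non-positive quadratic forms. The only cosmetic distinction is that the paper expresses everything in terms of $\delta \hat{x}_t$ and $\delta P_t^{-1}$ (and uses the Kronecker identity $\mathrm{Tr}(P\,\delta P^{-1}\,P\,\delta P^{-1}) = \mathrm{vec}(\delta P^{-1})^\top (P\otimes P)\,\mathrm{vec}(\delta P^{-1})$ for the trace term), whereas you work with $g_x$, $G_P$ and a Frobenius-norm rewriting; the resulting expressions are algebraically identical.
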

\begin{proof}
This proof is inspired by Section 6.2.2 in \cite{barfoot2024state}. According to \eqref{eq.Price’s Theorem} and \eqref{eq.natural gradient update 1}, the second-order derivative of $J(\hat{x}_t, P_t)$ with respect to $\hat{x}_t$ satisfies
\begin{equation}\label{eq.2J x2}
\begin{aligned}
\left. \frac{\partial^2 J}{\partial \hat{x}_t^{2}} \right|_{v^{(i)}}
=&
\left. \frac{\partial^2}{\partial \hat{x}_t^2 } \Exp_{\mathcal{N}(x_t; \hat{x}_t, P_t)} \left\{\ell(x_t, y_t)\right\} \right|_{v^{(i)}} + P_{t|t-1}^{-1}
\\
=& \Exp_{\mathcal{N}(x_t; \hat{x}_t^{(i)}, P_t^{(i)})} \left\{\frac{\partial^2 \ell(x_t, y_t)}{\partial {x}_t^2 }\right\} + P_{t|t-1}^{-1}
\\
=& \left(P_t^{-1}\right)^{(i+1)}.
\end{aligned}    
\end{equation}
According to \eqref{eq.iterative updates}, we have
\begin{equation}\label{eq.delta x}
\begin{aligned}
\left. \frac{\partial J}{\partial \hat{x}_t}  \right|_{v^{(i)}} &=  - \left( P_t^{-1} \right)^{(i+1)} \delta \hat{x}_t,
\\
\left. \frac{\partial J}{\partial P_t^{-1}}  \right|_{v^{(i)}} &=  - \frac{1}{2} \left( P_t \right)^{(i)} \delta P_t^{-1} \left( P_t \right)^{(i)}.
\end{aligned}
\end{equation}
Combined with \eqref{eq.2J x2} and \eqref{eq.delta x}, the Tayor-series expansion can be expressed as
\begin{equation}\nonumber
\begin{aligned}
& J_t^{(i+1)} - J_t^{(i)} \\
\approx & -\frac{1}{2} \delta \hat{x}_t^{\top
} \left( P_t^{-1} \right)^{(i+1)} \delta \hat{x}_t -\frac{1}{2} \mathrm{Tr} \left(P_t^{(i)} \delta P_t^{-1} P_t^{(i)} \delta P_t^{-1}\right)
\\
=& -\frac{1}{2} \delta \hat{x}_t^{\top
} \left( P_t^{-1} \right)^{(i+1)} \delta \hat{x}_t 
\\
&-\frac{1}{2} \mathrm{vec}(\delta P_t^{-1})^{\top} \left( P_t^{(i)} \otimes P_t^{(i)} \right) \mathrm{vec}(\delta P_t^{-1}) 
\\
\leq 
&0.
\end{aligned}
\end{equation}\end{proof}
This theorem indicates that natural gradient descent iteration in the update step of the NANO filter provides a guarantee of local convergence. This guarantee is achieved by approximating the objective function with second-order accuracy around the mean and first-order accuracy around the inverse of the covariance matrix.
% The proof of Theorem \ref{theorem.local convergence} is provided in Appendix \ref{appendix.proof of convergence}.
The key idea of proof is to show that the difference in update costs between consecutive iterations can be expressed as a semi-negative definite quadratic form. Technically, achieving this semi-negative definiteness relies on the use of the Fisher information matrix, which corrects the gradient direction to provide the steepest descent on the Gaussian manifold. This adjustment makes the gradient ``natural" because it aligns with the geometry of the Riemannian space of Gaussian parameters.

For linear Gaussian systems, the NANO filter achieves the optimal solution of  \eqref{eq.Gaussian optimization update} in a single iteration, as stated in the following corollary:

\begin{corollary}\label{corollary.convergence}
For linear Gaussian systems in Example \ref{example.KF update}, the update rule given by \eqref{eq.natural gradient update 1} converges to the optimal solution of \eqref{eq.Gaussian optimization update} within one iteration. In other words, a single iteration of NANO filter is equivalent to KF. \end{corollary}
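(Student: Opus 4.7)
The plan is to exploit the fact that, for a linear Gaussian measurement model $p(y_t|x_t) = \mathcal{N}(y_t; Cx_t, R)$, the measurement-dependent loss
$$\ell(x_t, y_t) = \tfrac{1}{2}(y_t - Cx_t)^\top R^{-1}(y_t - Cx_t) + \text{const}$$
is \emph{quadratic} in $x_t$, so its first derivative $\partial \ell/\partial x_t = -C^\top R^{-1}(y_t - Cx_t)$ is affine in $x_t$ and its second derivative $\partial^2 \ell/\partial x_t^2 = C^\top R^{-1} C$ is constant. The expectations appearing in the natural gradient update in \eqref{eq.natural gradient update 1} therefore evaluate exactly in closed form and, crucially, become independent of the current covariance iterate $P_t^{(i)}$.

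Concretely, I would initialize the iteration with $\hat{x}_t^{(0)} = \hat{x}_{t|t-1}$ and $P_t^{(0)} = P_{t|t-1}$, then substitute the two derivative expressions above into \eqref{eq.natural gradient update 1}. The covariance update immediately gives
$$\bigl(P_t^{-1}\bigr)^{(1)} = P_{t|t-1}^{-1} + C^\top R^{-1} C,$$
which, via the Woodbury matrix inversion lemma, matches the KF posterior covariance \eqref{eq.KF P update}. For the mean, the prior-correction term $P_t^{(1)} P_{t|t-1}^{-1}(\hat{x}_t^{(0)} - \hat{x}_{t|t-1})$ vanishes by the choice of initialization, and the gradient expectation reduces to $-C^\top R^{-1}(y_t - C\hat{x}_{t|t-1})$, yielding
$$\hat{x}_t^{(1)} = \hat{x}_{t|t-1} + P_t^{(1)} C^\top R^{-1}(y_t - C\hat{x}_{t|t-1}).$$
Applying Woodbury once more identifies $P_t^{(1)} C^\top R^{-1}$ with the Kalman gain $K_t = P_{t|t-1}C^\top (CP_{t|t-1}C^\top + R)^{-1}$, so the one-step iterate coincides with the KF update \eqref{eq.KF x update2}.

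To argue that this is indeed the optimum of \eqref{eq.Gaussian optimization update} rather than just one step of an iteration that would continue to move, I would verify that the stationarity conditions \eqref{eq.Gaussian BF x}--\eqref{eq.Gaussian BF P} are satisfied at $(\hat{x}_t^{(1)}, P_t^{(1)})$. This is automatic: since $\partial^2 \ell/\partial x_t^2$ is constant and $\partial \ell/\partial x_t$ is affine, the expectations on the right-hand sides of \eqref{eq.Gaussian BF} decouple from $(\hat{x}_{t|t}, P_{t|t})$, so \eqref{eq.Gaussian BF} becomes a decoupled linear system whose unique solution is exactly the KF pair, as already worked out in Example~\ref{example.KF update}. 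Equivalently, one can observe that the update cost $J(\hat{x}_t, P_t)$ in \eqref{eq.explict form of J} is a quadratic form in $\hat{x}_t$ and in $P_t^{-1}$ in the linear Gaussian case, so the Taylor expansion used in Theorem~\ref{theorem.local convergence} is exact rather than approximate, and natural gradient descent reaches the global minimum in a single Newton-like step.

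The only technical manipulation is the Woodbury identity translating the information-form covariance into the gain-form covariance; there is no genuine obstacle, since the nonlinear difficulty that motivates the iteration simply disappears when $\ell$ is quadratic.
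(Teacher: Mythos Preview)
Your argument is correct, and the covariance step, the Woodbury identification of $P_{t|t}C^\top R^{-1}$ with the Kalman gain, and the stationarity verification all match the paper's proof. The one substantive difference is your choice to \emph{fix} the initialization $\hat{x}_t^{(0)}=\hat{x}_{t|t-1}$, $P_t^{(0)}=P_{t|t-1}$, which makes the prior-correction term $P_t^{(1)}P_{t|t-1}^{-1}(\hat{x}_t^{(0)}-\hat{x}_{t|t-1})$ vanish trivially. The paper instead keeps $\hat{x}_t^{(0)}$ and $P_t^{(0)}$ arbitrary and shows the algebraic cancellation directly: writing $P_{t|t-1}^{-1}=(P_t^{-1})^{(1)}-C^\top R^{-1}C$ inside the mean update, the $\hat{x}_t^{(0)}$-dependent pieces cancel exactly, leaving $\hat{x}_t^{(1)}=\hat{x}_{t|t-1}+P_{t|t}C^\top R^{-1}(y_t-C\hat{x}_{t|t-1})$ regardless of the starting point.

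Your route is shorter and perfectly adequate for the corollary as literally stated, and your added stationarity check nicely explains \emph{why} one step suffices. The paper's route buys a genuinely stronger conclusion---one natural-gradient step lands on the KF solution from \emph{any} initial Gaussian---which is the point the authors emphasize immediately after the proof as evidence that the natural gradient is the ``right'' geometry on the Gaussian manifold. Note also that Algorithm~\ref{alg.1} does not initialize at the prior but at the Laplace/MAP point \eqref{eq.initialization}, so the initialization-free version is the one that actually covers the algorithm as implemented.
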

\begin{proof}
For linear Gaussian systems with output probability $p(y_t|x_t) = \mathcal{N}(y_t; Cx_t, R)$, the one-step iteration of the covariance matrix in \eqref{eq.natural gradient update 1} is given by: 
\begin{equation}\label{eq.1 step iteration}
\begin{aligned}
\left( P_t^{-1} \right)^{(1)} =& P_{t|t-1}^{-1}  + \Exp_{\mathcal{N}(x_t; \hat{x}_t^{(0)}, P_t^{(0)})} \left\{ \frac{\partial^2 \ell(x_t, y_t)}{\partial x_t^2} \right\},
\\
=& P_{t|t-1}^{-1} + C^{\top} R^{-1} C.
\end{aligned}  
\end{equation}
Comparing \eqref{eq.1 step iteration} with KF covariance update equation \eqref{eq.KF P update}, we observe that $P_t^{(1)} = P_{t|t}$. Then, refering to \eqref{eq.KF x update} and using \eqref{eq.1 step iteration}, the one-step iteration of the mean vector in \eqref{eq.natural gradient update 1} can be expressed as:
\begin{equation}\label{eq.1 step iteration mean}
\begin{aligned}
\hat{x}^{(1)}_t =& \hat{x}^{(0)}_t - P_t^{(1)} 
\Exp_{\mathcal{N}(x_t; \hat{x}_t^{(0)}, P_t^{(0)})} \left\{ \frac{\partial \ell(x_t, y_t)}{\partial x_t} \right\} \\
&-
P_t^{(1)} P_{t|t-1}^{-1}\left(\hat{x}_t^{(0)} - \hat{x}_{t|t-1}\right)
\\
=& \hat{x}^{(0)}_t + P_{t|t} C^{\top} R^{-1} (y_t - C\hat{x}_{t}^{(0)})
\\
&-P_{t|t} \left(P_{t|t}^{-1} - C^{\top}R^{-1}C  \right) \left(\hat{x}_t^{(0)} - \hat{x}_{t|t-1}\right)
\\
=& \hat{x}_{t|t-1} + P_{t|t} C^{\top} R^{-1} \left( 
y_t - C\hat{x}_{t|t-1}
\right). 
\end{aligned}  
\end{equation}
Furthermore, noting that:
\begin{equation}\nonumber
\begin{aligned}
&P_{t|t} C^{\top} R^{-1} \\=& 
P_{t|t-1} C^{\top} R^{-1} 
\\
&- P_{t|t-1} C^{\top} (R+CP_{t|t-1}C^{\top})^{-1}C P_{t|t-1} C^{\top}R^{-1}
\\
=& P_{t|t-1} C^{\top} \big(I - (R+C P_{t|t-1} C^{\top})^{-1} (C P_{t|t-1} C^{\top}+R) 
\\
&+ (R+CP_{t|t-1}C^{\top})^{-1} R  \big) R^{-1}
\\
=& P_{t|t-1} C^{\top}(R+C P_{t|t-1} C^{\top})^{-1},
\end{aligned}
\end{equation}
we recognize that \( P_{t|t} C^\top R^{-1} \) is the Kalman gain \( K_t \). Therefore, \eqref{eq.1 step iteration mean} becomes:
\begin{equation}\nonumber
\hat{x}^{(1)}_t = \hat{x}_{t|t-1} + K_t\left( 
y_t - C\hat{x}_{t|t-1}
\right),   
\end{equation}
which is the standard KF mean update equation.
Thus, the one-step iteration of the NANO filter in \eqref{eq.natural gradient update 1} yields the optimal estimate, regardless of the initialization \( \hat{x}^{(0)}_t \) and \( P^{(0)}_t \).
\end{proof}
Corollary \ref{corollary.convergence} holds regardless of the initialization of the Gaussian parameters \( \hat{x}^{(0)}_t \) and \( P^{(0)}_t \), supporting the fact that the natural gradient is the steepest descent direction in the Gaussian manifold.

\subsection{Stability Analysis}
Next, we will analyze the stability of the proposed NANO filter. Stability is the most critical property, as it ensures that the estimation error remains bounded throughout the filtering process. Our stability analysis is conducted based on the state-space model given in \eqref{eq.SSM}. 
% \begin{equation}
% 	\begin{ali gned}
% 		&x_{t} = f(x_{t-1}) + \mathbf{v}_{t};\\
% 		&y_{t} = h(x_{t}) + \mathbf{w}_{t};
% 	\end{aligned}
% \end{equation}
We focus on the case where the process and measurement noise are both zero-mean Gaussian noise, satisfying $\mathbf{\xi}_{t} \sim \mathcal{N}(0, Q_{t})$ and $\mathbf{\zeta}_{t} \sim \mathcal{N}(0, R_{t})$. In this case, the measurement-dependent loss $\ell(x_t, y_t)$ is the log-likelihood loss function that satisfies
$ 
\ell(x_{t},y_{t}) = -\log p(y_{t}|x_{t}) 
= C_{0} + \frac{1}{2}(y_{t}-g(x_{t}))^{\top}R_{t}^{-1}(y_{t} - g(x_{t}))$.
Here, $C_0>0$ is a constant value irrelevant to the state and measurement. For the purpose of stability analysis, we make the following regularity assumptions on the system functions $f$ and $g$.
\begin{assumption}\label{assump.bounded derivative}
The derivatives of the functions \(f\) and \(g\) are bounded. Specifically, there exists a constant \(C > 0\) such that
\begin{equation}		\left|\frac{\partial f^{i}}{\partial x^{j}}\right|, \left|\frac{\partial g^{k}}{\partial x^{j}}\right|, \left|\frac{\partial^{2} g^{k}}{\partial x^{i}\partial x^{j}}\right|\leq C < \infty,
\end{equation} 
for all $x\in\mathbb{R}^{n}$, $1\leq i,j\leq n$, $1\leq k\leq m$. Here, \(x^i\) represents the \(i\)-th component of the state vector \(x\); similarly, \(f^i\) and \(g^k\) denote the corresponding components of the functions \(f\) and \(g\), respectively. 
\end{assumption} 

Using the standardization of the state $x_t$ under the Gaussian distribution, we can rewrite the stationary point condition for update step \eqref{eq.Gaussian BF} in the following tensor form:
\begin{subequations}\label{eq.tensor stationary point}
\begin{equation}\label{eq.tensor stationary point x_hat}
\begin{aligned} 
\hat{x}_{t|t}^{i} =& \hat{x}_{t|t-1}^{i} + (P_{t|t-1})^{ij}\int (R_{t}^{-1})_{kl} \frac{\partial g^{k}}{\partial x^{j}}(\hat{x}_{t|t}+S_{t|t}z)\\
	& \times (y_{t}^{l} - g^{l}(\hat{x}_{t|t}+S_{t|t}z))\frac{1}{(2\pi)^{\frac{n}{2}}}e^{-\frac{1}{2}\|z\|^{2}} \d z,
\end{aligned}
\end{equation}
\begin{equation}\label{eq.tensor stationary point P}
\begin{aligned} 
(P_{t|t}^{-1})_{ij} =&  (P_{t|t-1}^{-1})_{ij} \\
&+ \int  (R_{t}^{-1})_{kl}\biggl[\frac{\partial g^{k}}{\partial x^{i}}(\hat{x}_{t|t} + S_{t|t}z)\frac{\partial g^{l}}{\partial x^{j}}(\hat{x}_{t|t} + S_{t|t}z) \\
&\qquad-\frac{\partial^{2}g^{k}}{\partial x^{i}\partial x^{j}}(\hat{x}_{t|t}+S_{t|t}z) (y_{t}^{l}-g^{l}(\hat{x}_{t|t}+S_{t|t}z))\biggr]\\	&\qquad\times\frac{1}{(2\pi)^{\frac{n}{2}}}e^{-\frac{1}{2}\|z\|^{2}} \d z,
\end{aligned} 
\end{equation}
\end{subequations}
where $S_{t|t}S_{t|t}^{\top} = P_{t|t}$ is obtained by Cholesky decomposition, and we use the Einstein summation convention \cite{einstein2003meaning}. 

Next, we will substitute the measurement model, $y_{t}^{l} = g^{l}(x_{t}) + \zeta_{t}^{l}$ into the tensor form of the stationary point condition in \eqref{eq.tensor stationary point}, and consider the Taylor expansion of $g(x_t)$ at $\hat{x}_{t|t}$. For the stationary point condition of $\hat{x}_{t|t}$, let us first define the auxiliary function $h_{j}^{kl}(x)$, for $1 \leq j \leq n$ and $1 \leq k,l \leq n$:
\begin{equation}\nonumber
\begin{aligned} 
h_{j}^{kl}(x) &= \int  \frac{\partial g^{k}}{\partial x^{j}}(\hat{x}_{t|t}+S_{t|t}z)
\\
&\times(g^{l}(x) - g^{l}(\hat{x}_{t|t}+S_{t|t}z))\frac{1}{(2\pi)^{\frac{n}{2}}}e^{-\frac{1}{2}\|z\|^{2}} \d z,
\end{aligned} 
\end{equation}
then \eqref{eq.tensor stationary point x_hat} can be rewritten as
\begin{equation}\label{eq.tensor stationary point x_hat 1}
\begin{aligned} 
\hat{x}_{t|t}^{i} =& \hat{x}_{t|t-1}^{i} + (P_{t|t-1})^{ij}(R_{t}^{-1})_{kl}h_{j}^{kl}(x_{t}) 
\\
&+ (P_{t|t-1})^{ij}(R_{t}^{-1})_{kl} \Exp_{\mathcal{N}(x_t; \hat{x}_{t|t},P_{t|t})}\biggl\{\frac{\partial g^{k}}{\partial x^{j}}(x_t)\biggr\}\zeta_{t}^{l}
 \\
		 =&\hat{x}_{t|t-1}^{i} + (P_{t|t-1})^{ij}(R_{t}^{-1})_{kl}\\
		 &\times\biggl[h_{j}^{kl}(\hat{x}_{t|t}) + \frac{\partial h_{j}^{kl}}{\partial x^{q}}(x_{t}^{q}-\hat{x}_{t|t}^{q}) + \psi_{j}^{kl}(x_{t}-\hat{x}_{t|t})\biggr]\\
		&+ (P_{t|t-1})^{ij}(R_{t}^{-1})_{kl} \Exp_{\mathcal{N}(x_t; \hat{x}_{t|t},P_{t|t})}\biggl\{\frac{\partial g^{k}}{\partial x^{j}}(x_t)\biggr\} \zeta_{t}^{l},
	\end{aligned}
\end{equation}
where $$\psi_{j}^{kl}(x_{t} - \hat{x}_{t|t})\leq \kappa_{1}(\|x_{t}-\hat{x}_{t|t}\|^{2}), \forall \ 0 \leq \|x_{t}-\hat{x}_{t|t}\|<\epsilon_{1},$$ for some $\kappa_{1},\epsilon_{1} > 0$, are high-order terms in the Taylor expansion of $h^{kl}_j$.

Let us denote the estimation error as $e_{t|t} = x_{t} - \hat{x}_{t|t}$ and $e_{t|t-1} = x_{t} - \hat{x}_{t|t-1}$. Regardless of the specific methods used in the prediction step, the prediction error $e_{t|t-1}$ can be expressed as 
\begin{equation}\label{eq.taylor expansion of e_{t|t-1}}
\begin{aligned} 
e_{t|t-1}^{i} = &f^{i}(x_{t-1}) + \xi_{t}^{i}- \hat{x}_{t|t-1}^{i}
\\
= &\frac{\partial f^{i}}{\partial x^{j}}(\hat{x}_{t-1|t-1})(x_{t-1}^{j} - \hat{x}_{t-1|t-1}^{j}) 
\\
&+ \tilde{\psi}^{i}(x_{t-1}-\hat{x}_{t-1|t-1}) + \xi_{t}^{i},
\end{aligned} 
\end{equation}
where
\begin{equation}\nonumber
\begin{aligned}
\tilde{\psi}^{i}(x_{t-1} - \hat{x}_{t-1|t-1} ) &\leq \kappa_{2}(\|x_{t-1}-\hat{x}_{t-1|t-1}\|^{2}), 
\\
\forall \; 0 &\leq \|x_{t-1}-\hat{x}_{t-1|t-1}\|<\epsilon_{2},    
\end{aligned}
\end{equation}
for some $\kappa_{2},\epsilon_{2} > 0$, which are higher-order terms in the Taylor expansion of \( f^i \). Subtracting \( x_t \) from both sides of \eqref{eq.tensor stationary point x_hat 1} and applying \eqref{eq.taylor expansion of e_{t|t-1}}, we obtain
\begin{equation}\label{eq.taylor expansion of e_{t|t}}
\begin{aligned}
e_{t|t}^{i} =& \frac{\partial f^{i}}{\partial x^{j}}(\hat{x}_{t-1|t-1})e_{t-1|t-1}^{j} + \tilde{\psi}^{i}(x_{t-1}-\hat{x}_{t-1|t-1}) 
\\
&+\xi_{t}^i-(P_{t|t-1})^{ij}(R_{t}^{-1})_{kl}
\\
&\times\biggl[h_{j}^{kl}(\hat{x}_{t|t}) + \frac{\partial h_{j}^{kl}}{\partial x^{q}}(\hat{x}_{t|t})e_{t|t}^{q} + \psi_{j}^{kl}(x_{t}-\hat{x}_{t|t})\biggr]
\\
&- (P_{t|t-1})^{ij}(R_{t}^{-1})_{kl}\Exp_{\mathcal{N}(x_t;\hat{x}_{t|t},P_{t|t})}\biggl\{\frac{\partial g^{k}}{\partial x^{j}}(x_t)\biggr\}\zeta_{t}^{l}.
\end{aligned}
\end{equation}
In matrix form, \eqref{eq.taylor expansion of e_{t|t}} becomes
\begin{equation}\nonumber
e_{t|t} = F_{t-1}e_{t-1|t-1} - H_{t}e_{t|t} - \bar{h}_{t} + \Psi_{t} + \xi_{t} - G_{t}\zeta_{t},
\end{equation}
where $F_{t}$, $H_{t}$, $\bar{h}_{t}$ and $G_{t}$ are matrix- or vector-valued functions with components given by
\begin{equation}\nonumber
\begin{aligned}
 (F_{t-1})^i_j &= \frac{\partial f^{i}}{\partial x^{j}}(\hat{x}_{t-1|t-1}),
 \\
(H_{t})_{q}^{i} &= (P_{t|t-1})^{ij}(R_{t}^{-1})_{kl}\frac{\partial h_{j}^{kl}}{\partial x^{q}}(\hat{x}_{t|t}),
\\
\bar{h}_{t}^{i} &= (P_{t|t-1})^{ij}(R_{t}^{-1})_{kl}h_{j}^{kl}(\hat{x}_{t|t}),
\\
(G_{t})_{l}^{i} &= (P_{t|t-1})^{ij}(R_{t}^{-1})_{kl} \Exp_{\mathcal{N}(x_t;\hat{x}_{t|t},P_{t|t})}\biggl\{\frac{\partial g^{k}}{\partial x^{j}}(x_t)\biggr\},
\end{aligned}
\end{equation}
and 
\begin{equation}\nonumber
\begin{aligned} 
\Psi_{t}^{i} &= \tilde{\psi}^{i}(e_{t-1|t-1}) -(P_{t|t-1})^{ij}(R_{t}^{-1})_{kl}\psi_{j}^{kl}(x_{t}-\hat{x}_{t|t})
\end{aligned}
\end{equation}
are the high-order terms. Therefore, we have
\begin{equation}\label{eq.evolution of e}
\begin{aligned}
e_{t|t} = &(I+H_{t})^{-1}F_{t-1}e_{t-1|t-1} - (I+H_{t})^{-1}\bar{h}_{t} \\
&+ (I+H_{t})^{-1}\Psi_{t} + (I+H_{t})^{-1}\left(\xi_{t}-G_{t} \zeta_{t}\right).
\end{aligned} 
\end{equation}

Similarly, for the stationary point condition of $P_{t|t}$, let us define the auxiliary function $\tilde{h}_{ij}^{kl}(x)$, for $1 \leq i,j \leq n$ and $1 \leq k,l \leq n$:
\begin{equation}\nonumber
\begin{aligned} 
\tilde{h}_{ij}^{kl}(x) =& \int\biggl[\frac{\partial g^{k}}{\partial x^{i}}(\hat{x}_{t|t} + S_{t|t}z)\frac{\partial g^{l}}{\partial x^{j}}(\hat{x}_{t|t} + S_{t|t}z)\\
		&-\frac{\partial^{2}g^{k}}{\partial x^{i}\partial x^{j}}(\hat{x}_{t|t}+S_{t|t}z) (g^{l}(x)-g^{l}(\hat{x}_{t|t}+S_{t|t}z))\biggr]\\
		&\times\frac{1}{(2\pi)^{\frac{n}{2}}}e^{-\frac{1}{2}\|z\|^{2}} \d z,
	\end{aligned} 
\end{equation}
then \eqref{eq.tensor stationary point P} can be rewritten as
\begin{equation}\nonumber
% \label{eq.tensor stationary point P 1}
\begin{aligned}
(P_{t|t}^{-1})_{ij} = &(P_{t|t-1}^{-1})_{ij} + (R_{t}^{-1})_{kl}\tilde{h}_{ij}^{kl}(x_{t}) 
\\
&- (R_{t}^{-1})_{kl}\Exp_{\mathcal{N}(x_t;\hat{x}_{t|t},P_{t|t})}\biggl\{\frac{\partial^{2} g^{k}}{\partial x^{i}\partial x^{j}}(x_t)\biggr\}\zeta_{t}^{l}
\\
= &(P_{t|t-1}^{-1})_{ij} + (R_{t}^{-1})_{kl}\biggl[\tilde{h}_{ij}^{kl}(\hat{x}_{t|t}) 
\\
&+ \frac{\partial\tilde{h}_{ij}^{kl}}{\partial x^{q}}(\hat{x}_{t|t})(x_{t}^{q} - \hat{x}_{t|t}^{q}) + \varphi_{ij}^{kl}(x_{t} - \hat{x}_{t|t})\biggr] 
\\
&- (R_{t}^{-1})_{kl} \Exp_{\mathcal{N}(x_t;\hat{x}_{t|t},P_{t|t})}\biggl\{\frac{\partial^{2} g^{k}}{\partial x^{i}\partial x^{j}}(x_t)\biggr\} \zeta_{t}^{l},
\end{aligned}
\end{equation}
where $$\varphi_{ij}^{kl}(x_{t}-\hat{x}_{t|t})\leq \kappa_{3} \|x_{t}-\hat{x}_{t|t} \|^{2}, \forall \; 0 \leq \|x_{t} - \hat{x}_{t|t}\| < \epsilon_{3},$$ for some $\kappa_{3},\epsilon_{3} > 0$, are high-order terms in the Taylor expansion of $\tilde{h}^{kl}_{ij}$.

% Although $P_{t|t}$, as an approximation to the conditional covariance matrix, is always supposed to be positive definite, the stationary point expression \eqref{eq.Gaussian BF P} (and also \eqref{eq.tensor stationary point P 1} in tensor form) does not guarantee this because of the existence of noise term. Therefore, for practical implementation, extra methods should be employed to maintain the positive-definiteness and boundedness of $P_{t|t}$, and in this stability analysis, we also need the following assumption:
% \begin{assumption}\label{assump:3}
% 	There exist constants $\underline{p},\bar{p} > 0$, such that
% 	\begin{equation}
% 		\begin{aligned} 
% 		&0<\underline{p}I\leq P_{t|t}\leq \bar{p}I < \infty,\ \forall \ t\geq 0,\\
% 		&0<\underline{p}I\leq P_{t|t-1}\leq \bar{p}I < \infty,\ \forall \ t\geq 0.
% 	\end{aligned}
% 	\end{equation}
% \end{assumption}

The main idea of stability analysis is based on the application of Lyapunov functions, just as in the case of Kalman-type nonlinear filter \cite{reif1999stochastic,xu2016stochastic}. In order to construct the Lyapunov function, let us first introduce the auxiliary covariance matrix $\tilde{P}_{t|t}$ and $\tilde{P}_{t|t-1}$, which evolve according to
\begin{equation}\nonumber
\begin{aligned}
(\tilde{P}_{t|t}^{-1})_{ij} = &(\tilde{P}_{t|t-1}^{-1})_{ij} + (R_{t}^{-1})_{kl}\tilde{h}_{ij}^{kl}(\hat{x}_{t|t}),
\\
(\tilde{P}_{t|t-1})^{ij} = &\frac{\partial f^{i}}{\partial x^{k}}(\hat{x}_{t-1|t-1})\frac{\partial f^{j}}{\partial x^{l}}(\hat{x}_{t-1|t-1})
\\
&\times(\tilde{P}_{t-1|t-1})^{kl}+ (Q_{t})^{ij},
\end{aligned}
\end{equation}
and in matrix form
\begin{equation}\label{eq.matrix form of auxiliary matrix}
\begin{aligned}
\tilde{P}_{t|t}^{-1} &= \tilde{P}_{t|t-1}^{-1} + D_{t},
\\
\tilde{P}_{t|t-1} &= F_{t-1}\tilde{P}_{t-1|t-1}F_{t-1}^{\top} + Q_{t},
\end{aligned}
\end{equation}
where $D_{t}$ is the matrix with each component $(D_{t})_{ij} = (R_{t}^{-1})_{kl}\tilde{h}_{ij}^{kl}(\hat{x}_{t|t})$. 
Note that the evolution of $\tilde{P}_{t|t}$ and $\tilde{P}_{t|t-1}$ does not depend directly on the errors $e_{t|t}$, $e_{t|t-1}$ or the noise terms $\xi_{t}$, $\zeta_{t}$. This makes the following positive definiteness and boundedness assumption largely a condition on the system itself, much like the detectable and controllable conditions for linear systems \cite{jazwinski2007stochastic}.
\begin{assumption}\label{assump.boundness of P}
There exist constants $\underline{p},\bar{p} > 0$, such that
\begin{equation}\nonumber
\begin{aligned} 
&0<\underline{p}I\leq \tilde{P}_{t|t}\leq \bar{p}I < \infty,\ \forall \ t\geq 0,
\\
&0<\underline{p}I\leq \tilde{P}_{t|t-1}\leq \bar{p}I < \infty,\ \forall \ t\geq 0.
\end{aligned}
\end{equation}
\end{assumption}

Our main stability result is stated in the following theorem. Generally speaking, this theorem proves the stability of our proposed method for those systems with almost linear measurement functions and small noise.
\begin{theorem}
Under Assumption \ref{assump.bounded derivative} and \ref{assump.boundness of P}, the estimation error $e_{t|t}$ is exponentially bounded in the mean square for systems with almost linear measurement functions, i.e., there exist $\epsilon,\epsilon',\lambda > 0$, such that
\begin{equation}\nonumber
\Exp \left \|e_{t|t} \right\|^{2}\leq \epsilon \|e_{0|0}\|^{2}\biggl(\frac{1}{1+\lambda}\biggr)^{t} + \epsilon',\ \forall \ t\geq 0,
\end{equation} 
as long as the initial error and the strength of the system noise are small enough, that is $\|e_{0|0}\|\leq \delta$, $\Exp\left\{\xi_{t}\xi_{t}^{\top} \right\} \leq \delta I$ and $\Exp\left\{\zeta_{t}\zeta_{t}^{\top} \right\} \leq \delta I$ for some given $\delta > 0$. 
\end{theorem}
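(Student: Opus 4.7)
The plan is to adapt the stochastic-stability technique developed by Reif et al.\ for the extended Kalman filter, using the auxiliary covariance $\tilde{P}_{t|t}$ from \eqref{eq.matrix form of auxiliary matrix} to build a Lyapunov function
$$V_t \;=\; e_{t|t}^{\top}\tilde{P}_{t|t}^{-1}e_{t|t}.$$
The target is a supermartingale-like inequality of the form $\Exp\{V_t\mid\mathcal{F}_{t-1}\}\leq (1-\tfrac{\lambda}{1+\lambda})V_{t-1}+\mu$ for constants $\lambda,\mu>0$, which, combined with the sandwich $\tfrac{1}{\bar p}\|e_{t|t}\|^{2}\leq V_t\leq\tfrac{1}{\underline p}\|e_{t|t}\|^{2}$ from Assumption~\ref{assump.boundness of P}, yields the claimed exponential bound after iteration with $\epsilon=\bar p/\underline p$ and $\epsilon'=\mu(1+\lambda)/(\lambda\underline p)$.

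First I would substitute the error recursion \eqref{eq.evolution of e} into $V_t$, isolating four contributions: the \emph{linear contraction} $(I+H_t)^{-1}F_{t-1}e_{t-1|t-1}$, the \emph{nonlinear bias} $-(I+H_t)^{-1}\bar h_t$, the \emph{Taylor remainder} $(I+H_t)^{-1}\Psi_t$, and the \emph{noise term} $(I+H_t)^{-1}(\xi_t-G_t\zeta_t)$. The core algebraic identity to verify is a discrete Riccati-type contraction
$$(I+H_t)^{-\top}F_{t-1}^{\top}\tilde{P}_{t|t}^{-1}F_{t-1}(I+H_t)^{-1}\;\preceq\;(1-\alpha)\,\tilde{P}_{t-1|t-1}^{-1},$$
which follows from \eqref{eq.matrix form of auxiliary matrix} together with the uniform spectral bounds of Assumption~\ref{assump.boundness of P} and the derivative bounds of Assumption~\ref{assump.bounded derivative}; this matches the structure used for EKF stability but with $H_t$ and $D_t$ playing the roles of the gain and information increment.

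Next I would control the nonlinear bias $\bar h_t$. A direct inspection of $h_j^{kl}(\hat x_{t|t})$ shows that if $g$ were affine the integrand becomes linear in $z$ and vanishes against the standard Gaussian; for almost-linear $g$, Taylor-expanding $g^l(\hat x_{t|t}+S_{t|t}z)-g^l(\hat x_{t|t})$ to second order produces a bound $\|\bar h_t\|\leq C_g\,\mathrm{Tr}(P_{t|t})$ where $C_g$ is proportional to $\sup\|\partial^{2}g/\partial x^{2}\|$, hence small by hypothesis. The Taylor remainders $\tilde\psi$ and $\psi_j^{kl}$ contribute terms of order $\|e_{t-1|t-1}\|^{2}$ and $\|e_{t|t}\|^{2}$ respectively, which can be absorbed into $\alpha V_t$ provided the errors stay inside $\min(\epsilon_1,\epsilon_2,\epsilon_3)$; the noise term contributes the additive constant $\mu$ through $\Exp\{\xi_t\xi_t^{\top}\},\Exp\{\zeta_t\zeta_t^{\top}\}\leq\delta I$ combined with the uniform bound on $(I+H_t)^{-1}G_t$. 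Cross terms between the zero-mean noise and the deterministic parts disappear under the conditional expectation, which is what makes $V_t$ a supermartingale up to the constant $\mu$.

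The main obstacle, which is delicate, is that $H_t$, $G_t$, $\bar h_t$ and $\tilde P_{t|t}$ all depend on the current estimate $\hat x_{t|t}$, so the contraction coefficient $\alpha$ must be shown to remain uniformly bounded away from zero along the entire trajectory; this in turn requires an induction ensuring that $\|e_{t|t}\|$ stays within the Taylor-validity radii $\epsilon_1,\epsilon_2,\epsilon_3$ with appropriate probability. The standard device is to choose $\delta$ small enough that the Lyapunov inequality itself propagates the invariant $\Exp V_t\leq R$ for some $R<\min_i\epsilon_i^{2}/\underline p$, closing the induction and producing the constants $(\epsilon,\epsilon',\lambda)$ in the statement.
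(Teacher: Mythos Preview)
Your proposal is correct and follows essentially the same route as the paper: the same Lyapunov function $V_t=e_{t|t}^{\top}\tilde P_{t|t}^{-1}e_{t|t}$, the same decomposition of \eqref{eq.evolution of e} into contraction, bias, remainder and noise pieces, the same Riccati-type contraction inequality (which the paper establishes by first checking it for linear $g$ and then invoking continuity in $g$), and the same supermartingale iteration against the sandwich from Assumption~\ref{assump.boundness of P}. Two minor remarks: in your displayed contraction inequality the factors $F_{t-1}$ and $(I+H_t)^{-1}$ are in the wrong order (it should read $F_{t-1}^{\top}(I+H_t)^{-\top}\tilde P_{t|t}^{-1}(I+H_t)^{-1}F_{t-1}$), and the cross term $2e_{t-1|t-1}^{\top}F_{t-1}^{\top}(I+H_t)^{-\top}\tilde P_{t|t}^{-1}(I+H_t)^{-1}\bar h_t$ is linear rather than quadratic in $\|e\|$, which the paper handles by shifting the Lyapunov function to $(e_{t|t}-\epsilon_1)^{\top}\tilde P_{t|t}^{-1}(e_{t|t}-\epsilon_1)$ for a suitable small vector $\epsilon_1$---your plan to simply absorb it works too via Young's inequality since $\|\bar h_t\|$ is controlled by the second derivatives of $g$.
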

\begin{proof}
Utilizing \eqref{eq.evolution of e} and \eqref{eq.matrix form of auxiliary matrix}, $e_{t|t}^{\top} \tilde{P}_{t|t}^{-1}e_{t|t}$ can be computed as follows:
\begin{equation}
\begin{aligned} 
&e_{t|t}^{\top}\tilde{P}_{t|t}^{-1}e_{t|t} 
\\
=& e_{t-1|t-1}^{\top}F_{t-1}^{\top}(I+H_{t})^{-1}
\\
&\quad\quad\times\left[\left(F_{t-1}\tilde{P}_{t-1|t-1}F_{t-1}^{\top} +Q_{t}\right)^{-1} + D_{t}\right]
\\
&\quad\quad \times(I+H_{t})^{-1}F_{t-1}e_{t-1|t-1}
\\
&\quad - 2e_{t-1|t-1}^{\top}F_{t-1}^{\top}(I+H_{t})^{-1}\tilde{P}_{t|t}^{-1}(I+H_{t})^{-1}\bar{h}_{t}
\\
&\quad + \bar{h}_{t}^{\top}(I+H_{t})^{-1}\tilde{P}_{t|t}^{-1}(I+H_{t})^{-1}\bar{h}_{t} 
\\
&\quad+ \Psi_{t}^{\top}(I+H_{t})^{-1}\tilde{P}_{t|t}^{-1}(I+H_{t})^{-1}\Psi_{t}
\\
&\quad + \left(\xi_{t}-G_{t}\zeta_{t}\right)^{\top}(I+H_{t})^{-1}\tilde{P}_{t|t}^{-1}(I+H_{t})^{-1} \left(\xi_{t}-G_{t}\zeta_{t}\right)
\\
&\quad + 2e_{t-1|t-1}F_{t-1}^{\top}(I+H_{t})^{-1}
\\
&\qquad\times \tilde{P}_{t|t}^{-1}(I+H_{t})^{-1}\left(\Psi_{t} + \xi_{t} - G_{t}\zeta_{t}\right)
\\
&\quad - 2\bar{h}_{t}^{\top}(I+H_{t})^{-1}\tilde{P}_{t|t}^{-1}(I+H_{t})^{-1}\left(\Psi_{t} + \xi_{t} - G_{t}\zeta_{t}\right)
\\
&\quad + 2\Psi_{t}^{\top}(I+H_{t})^{-1}\tilde{P}_{t|t}^{-1}(I+H_{t})^{-1}\left(\xi_{t} - G_{t}\zeta_{t}\right).
\end{aligned}
\label{eq:51}
\end{equation}

For the first term on the right-hand side of \eqref{eq:51}, with the fact that the measurement function $h(x)$ is almost linear, there exists $\lambda_{1} > 0$ such that
\begin{equation}
\begin{aligned} 
F_{t-1}^{\top}&(I+H_{t})^{-1}\left[\left(F_{t-1}\tilde{P}_{t-1|t-1}F_{t-1}^{\top} +Q_{t}\right)^{-1} + D_{t}\right]
\\	&\times(I+H_{t})^{-1}F_{t-1}\leq \frac{1}{1+\lambda_{1}}\tilde{P}_{t-1|t-1}.
\end{aligned} 
\label{eq:52}
\end{equation}
In fact, if $g(x) = Cx$ is a linear function with some constant matrix $C\in\mathbb{R}^{m\times n}$, then
\begin{equation}\nonumber
\begin{aligned} 
& H_{t} = P_{t|t-1}C^{\top}R_{t}^{-1}C,
\\
& D_{t} = C^{\top}R_{t}^{-1}C,
\end{aligned} 
\end{equation}
and $\tilde{P}_{t|t} = P_{t|t}$, $\tilde{P}_{t|t-1} = P_{t|t-1}$. In this way, we have
\begin{equation}\nonumber
\begin{aligned} 
&F_{t-1}^{\top}(I+H_{t})^{-1}\left[\left(F_{t-1}\tilde{P}_{t-1|t-1}F_{t-1}^{\top} +Q_{t}\right)^{-1} + D_{t}\right]
\\
&\qquad\times(I+H_{t})^{-1}F_{t-1}
\\
=& F_{t-1}^{\top}(I+P_{t|t-1}C^{\top}R_{t}^{-1}C)^{-1}P_{t|t-1}^{-1}F_{t-1}
\\
=& F_{t-1}^{\top}(I+P_{t|t-1}C^{\top}R_{t}^{-1}C)^{-1}(F_{t-1}^{\top}P_{t-1|t-1}F_{t-1}^{\top})^{-1}
\\
&\quad\times(I + (F_{t-1}P_{t-1|t-1}F_{t-1}^{\top})^{-1}Q_{t})^{-1}F_{t-1}.
\end{aligned} 
\end{equation}
By observing that
\begin{align*} 
(I+P_{t|t-1}C^{\top}R_{t}^{-1}C)^{-1}&\leq \frac{1}{1+ \lambda}I,
\\
(I + (F_{t-1}P_{t-1|t-1}F_{t-1}^{\top})^{-1}Q_{t})^{-1} 
& \leq \frac{1}{1+\lambda'}I,
\end{align*}
with $\lambda = \lambda_{\min}(P_{t|t-1}C^{\top}R_{t}^{-1}C) > 0 $ and $\lambda' = \lambda_{\min}((F_{t-1}P_{t-1|t-1}F_{t-1}^{\top})^{-1}Q_{t}) > 0$, therefore \eqref{eq:52} holds for some $\lambda_{1} > 0$ for linear systems. Because of the continuity of \eqref{eq:52} with respect to the measurement function $g$, \eqref{eq:52} also holds for those systems with almost linear measurement functions.

Other terms on the right-hand side of \eqref{eq:51} can also be bounded. Firstly, there exists $\delta_{1} > 0$, such that
\begin{equation}
\bar{h}_{t}^{\top}(I+H_{t})^{-1}\tilde{P}_{t|t}^{-1}(I+H_{t})^{-1}\bar{h}_{t} \leq \delta_{1}.
\label{eq:55}
\end{equation}
Because $\bar{h}_{t} \equiv 0$ for linear measurement functions, the constant $\delta_{1}$ above can be chosen to be sufficiently small for almost linear measurement functions.

Secondly, there exist $\delta_{2} > 0$, such that
\begin{equation}
- 2e_{t-1|t-1}^{\top}F_{t-1}^{\top}(I+H_{t})^{-1} \tilde{P}_{t|t}^{-1}(I+H_{t})^{-1}\bar{h}_{t}\leq \delta_{2} \|e_{t-1|t-1}\|.
\end{equation}

Finally, because $\xi_{t}$, $\zeta_{t}$ are independent Gaussian random variables, there exists $\delta_{3} > 0$, such that
\begin{equation}
\Exp \left\{(\xi_{t}-G_{t}\zeta_{t})^{\top}(I+H_{t})^{-1} \tilde{P}_{t|t}^{-1}(I+H_{t})^{-1}(\xi_{t}-G_{t}\zeta_{t}) \right\}\leq \delta_{3}. 
	\label{eq:57}
\end{equation}

Other terms on the right-hand side of \eqref{eq:51} are related to the high order terms $\Psi_{t}$. Therefore, combining \eqref{eq:52}, \eqref{eq:55} to \eqref{eq:57}, there exists $\eta > 0$, such that if $\|e_{t|t}\| < \eta$ for all $t\geq 0$, then
\begin{equation}\nonumber
\begin{aligned} 
&\Exp\left\{\left.e_{t|t}^{\top}\tilde{P}_{t|t}^{-1}e_{t|t}\right|e_{t-1|t-1}\right\}
\\
\leq& \frac{1}{1+\lambda_{0}}e_{t-1|t-1}^{\top}\tilde{P}_{t-1|t-1}^{-1}e_{t-1|t-1} + \delta_{2} \cdot \Exp \|e_{t-1|t-1}\| + \delta,
\end{aligned} 
\end{equation}
for some $0 <\lambda_{0} <\lambda_{1}$.

With the fact that $\underline{p}I\leq \tilde{P}_{t|t}\leq \bar{p}I$, there exist a vector $\epsilon_{1}\in\mathbb{R}^{n}$ and a constant $   \epsilon_{2} > 0$, such that 
\begin{equation}
\begin{aligned} 
&\Exp \left\{\left.(e_{t|t}-\epsilon_{1})^{\top}\tilde{P}_{t|t}^{-1}(e_{t|t}-\epsilon_{1})\right|e_{t-1|t-1}\right\}
\\
\leq& \frac{1}{1+\lambda_{0}}(e_{t-1|t-1}-\epsilon_{1})^{\top}\tilde{P}_{t-1|t-1}^{-1}(e_{t-1|t-1}-\epsilon_{1}) + \epsilon_{2},
	\end{aligned} 
\label{eq:59}
\end{equation}
as long as $\|e_{t-1|t-1}\| < \eta$.

Also, if there exists $\tilde{\eta} > 0$, such that $\tilde{\eta}<\|e_{t-1|t-1}\|<\eta$, for $\epsilon_{1}$ with sufficiently small norm and sufficiently small $\epsilon_{2}$, we have the following supermartingale-like property:
\begin{equation}\nonumber
\begin{aligned}
&\Exp \left\{\left.(e_{t|t}-\epsilon_{1})^{\top}\tilde{P}_{t|t}^{-1}(e_{t|t}-\epsilon_{1})\right|e_{t-1|t-1}\right\}
\\
&- (e_{t-1|t-1}-\epsilon_{1})^{\top}\tilde{P}_{t-1|t-1}^{-1}(e_{t-1|t-1}-\epsilon_{1})
\\
\leq& -\frac{\lambda_{0}}{2(1+\lambda_{0})\bar{p}}\tilde{\eta}^{2} + \epsilon_{2} < 0.
	\end{aligned}
\end{equation}

Hence, if the initial estimation error $\|e_{0|0}\|<\eta$, then we can recursively use \eqref{eq:59} to compute the estimation error at time $t$,
\begin{equation}\nonumber
\begin{aligned} 
&\Exp \left\{(e_{t|t}-\epsilon_{1})^{\top}\tilde{P}_{t|t}^{-1}(e_{t|t}-\epsilon_{1}) \right\}
\\
\leq& \epsilon_{2}\sum_{k=0}^{t-1}\frac{1}{(1+\lambda_{0})^{k}} + \frac{1}{\underline{p}}\|e_{0|0} - \epsilon_{1}\|^{2}\left(\frac{1}{1+\lambda_{0}}\right)^{t}\\
\leq&  \frac{1+\lambda_{0}}{\lambda_{0}}\epsilon_{2} + \frac{1}{\underline{p}}\|e_{0|0} - \epsilon_{1}\|^{2}\left(\frac{1}{1+\lambda_{0}}\right)^{t}.
	\end{aligned} 
\end{equation}
Since
\begin{equation}\nonumber
\frac{1}{\bar{p}}\|e_{t|t}-\epsilon_{1}\|^{2}\leq (e_{t|t}-\epsilon_{1})^{\top}\tilde{P}_{t|t}^{-1}(e_{t|t}-\epsilon_{1}),
\end{equation}
and thus,
\begin{equation}\nonumber
\|e_{t|t}\|\leq \sqrt{\bar{p}(|e_{t|t}|-\epsilon_{1})^{\top}\tilde{P}_{t|t}^{-1}(|e_{t|t}|-\epsilon_{1})} + |\epsilon_{1}|,
\end{equation}
and we obtained the exponentially bounded in mean square for $e_{t|t}$, i.e., there exist $\epsilon, \epsilon' > 0$, such that
\begin{equation}\nonumber
\Exp \|e_{t|t}\|^{2}\leq \epsilon\|e_{0|0}\|^{2}\biggl(\frac{1}{1+\lambda_{0}}\biggr)^{t} + \epsilon',\ \forall \ t\geq 0.
\end{equation}
\end{proof}

\section{Robustifying NANO Filter with Gibbs Posterior}\label{sec.robust variants}

In the update step of Bayesian filtering, as illustrated in \eqref{eq.BF update optimizaton}, the objective is to strike a balance between the information provided by the measurement data and the prior distribution. However, obtaining reliable information from the measurement data requires a comprehensive probabilistic model of the measurement-generating process. In other words, a precise specification of the output distribution \( p(y_t|x_t) \) is needed beforehand. Unfortunately, in practical scenarios, it is often challenging to specify the true model of \( p(y_t|x_t) \) due to factors such as sensor malfunctions or unmodeled system dynamics. This leads to a mismatch between the true data-generating process and the assumed output probability model. Measurements influenced by such model misspecifications, which are typically referred to as outliers, require special attention to maintain the reliability of state estimates \cite{cao2023generalized,boustati2020generalised}. One effective approach to handle this issue is to replace the likelihood function in \eqref{eq.BF update optimizaton} with a generalized measurement-dependent loss function, leading to the following variational problem:
\begin{equation}\label{eq.Gibbs optimization}
\begin{aligned}
p_\mathrm{G}(x_t|y_{1:t}) =& \argmin_{q(x_t)} \Big\{\Exp_{q(x_t)} \left\{\ell^G(x_t, y_t) \right\}\\
&+ D_{\mathrm{KL}}\left\{q(x_t) || p(x_t|y_{1:t-1})\right\} \Big\}.    
\end{aligned}    
\end{equation}
Here, $\ell^G: \mathbb{R}^n \times \mathbb{R}^m \to \mathbb{R}$ is the generalized measurement-dependent loss function. It turns out that the solution to \eqref{eq.Gibbs optimization} is known as the Gibbs posterior, which has an analytical form akin to the standard Bayesian posterior:
\begin{equation}\nonumber
\begin{aligned}
p_\mathrm{G}(x_t|y_{1:t}) =  \frac{\exp\{-\ell^G(x_t, y_t)\} p(x_t|y_{1:t-1})}{\int \exp\{-\ell^G(x_t, y_t)\} p(x_t|y_{1:t-1}) \d x_t }.
\end{aligned}   
\end{equation}
Here, we denote the Gibbs posterior as $p_\mathrm{G}(x_t|y_{1:t})$ to differentiate from the standard Bayesian posterior $p(x_t|y_{1:t})$. In fact, the standard Bayesian posterior can be regarded as a special case of the Gibbs posterior by setting the loss function as the negation of the logarithm of the likelihood function, i.e., $\ell^G(x_t, y_t) = -\log p(y_t|x_t)$.

\begin{remark}
A notable fact is that the analysis of the Gaussian approximation in Section \ref{sec.stationary points of J} and the derivation of the NANO filter in Section \ref{sec.NANO} do not depend on the specific form of the loss function $\ell(x_t, y_t)$. Therefore, since Gibbs Bayesian filtering in \eqref{eq.Gibbs optimization} only modifies the loss function $\ell(x_t, y_t)$ in \eqref{eq.BF update optimizaton} to a more general form, $\ell^G(x_t, y_t)$, all of the previous analysis and algorithmic design can naturally extend to this case. 
\end{remark}

Then, we provide several potential choices of the measurement-dependent loss functions that can tackle model misspecifications:

\noindent 
\textbf{Choice 1: 
Composite Likelihood:}
One popular choice is to use the composite likelihood loss, i.e., combine multiple likelihood functions to achieve robust and adaptive performance under varying conditions. For example, for single-dimension measurement case, we can assume that the loss function is a composition of Gaussian likelihood and Laplace likelihood:
\begin{equation}\nonumber
\begin{aligned}
&\ell^{\mathrm{h}}(x_t, y_t) 
\\
=&\begin{cases} 
-\log{\mathcal{N}(y_t; g(x_t), 1)} & \text{if} \quad |y_t - g(x_t)| \leq \delta,
\\
-\log{\mathrm{Laplace}(y_t; g(x_t), \frac{1}{\delta})} & \text{otherwise.}
\end{cases}
\\
\stackrel{c}{=}&
\begin{cases}
\frac{1}{2} |y_t - g(x_t)|^2 & \text{if} \quad |y_t - g(x_t)| \leq \delta,
\\
\delta \left(|y_t - g(x_t)| - \frac{1}{2} \delta\right) & \text{otherwise.}
\end{cases}
\end{aligned}    
\end{equation}
Here, the notation $``\stackrel{c}{=}"$ represents that two expressions are equivalent up to an additive constant, and $\delta>0$ is the threshold variable. 
In fact, $\ell^{\mathrm{h}}(x_t, y_t)$ represents the famous Huber loss \cite{huber1992robust} used in robust regression, which is known to be less sensitive to outliers in data than the squared error loss. In practical applications, we utilize a smooth approximation of the Huber loss, known as the Pseudo-Huber loss, to improve optimization efficiency and ensure differentiability. The Pseudo-Huber loss is defined as:
\begin{equation}\label{eq.pseduo huber}
\begin{aligned}
&\ell^{\mathrm{ph}}(x_t, y_t) = \delta^2 \left( \sqrt{1 + \left(y_t-g(x_t)\right)^2/{\delta}^2} -1 \right).
\end{aligned}    
\end{equation}

\noindent \textbf{Choice 2: Weighted Log-Likelihood}:
Besides the composition of different likelihoods, one natural choice to achieve a robust loss function is to scale the log-likelihood loss function with a data-dependent weighting term:
\begin{equation}\label{eq.weighted likelihood}
\ell^\mathrm{w}(x_t, y_t) = -w(x_t, y_t) \cdot \log{p(y_t|x_t)}.    
\end{equation}
Here, $w: \mathbb{R}^n \times \mathbb{R}^m \to \mathbb{R}_{+}$ is the weighting function. The philosophy behind this form of loss is that reweighting the effect of outliers in the filtering procedure can potentionally improve robustness. The choices of the weighting function can be inspired from several domains. For example, we can use the so-called inverse multi-quadratic weighting function \cite{duran2024outlier} 
\begin{equation}\nonumber
w(x_t, y_t) = (1+ \left \|y_t-g(x_t)\right\|_{R_t^{-1}}^2/c^2 )^{-1},
\end{equation}
where $c>0$ is a constant number. Other choices of the weighting function can also be found in Section 3.3 of \cite{duran2024outlier}. 
\begin{remark}
When $w(x_t, y_t)$ is chosen as the constant value that is smaller than 1, it can be regarded as the result of performing the so-called exponential density
rescaling for convlutional Bayes filter \cite{cao2024convolutional}, which already shows robustness for systems with outliers. In this case, $w(x_t, y_t)$ can be regarded as the Lagrange multiplier which balances the compression and recontruction in the well-known information bottleneck problem \cite{cao2024convolutional}. 
\end{remark}

\noindent \textbf{Choice 3: 
Divergence-Dependent Loss Function:} 
The final option is to utilize robust divergence to fit the data. This approach is based on the fact that minimizing the negative log-likelihood essentially amounts to minimizing the KL divergence between the true data distribution $p_{\mathrm{true}}(y_t)$, which includes outliers, and the assumed likelihood:
\begin{equation}\label{eq.cross entropy}
\begin{aligned}
&\argmin_{q(x_t)} \Exp_{q(x_t)} \left\{D_{\mathrm{KL}}\left(p_{\mathrm{true}}(y_t) \| p(y_t|x_t) \right) \right\} \\
=& \argmin_{q(x_t)} \Exp_{q(x_t)} \Exp_{p_{\mathrm{true}}(y_t)} \left\{-\log{p(y_t|x_t)}\right\}
\\
\approx & \argmin_{q(x_t)} \Exp_{q(x_t)} \left\{-\log{p(y_t|x_t)}\right\}.
\end{aligned}
\end{equation}
In this equation, the first equality arises because the term $p_{\mathrm{true}}(y_t)$ is treated as a constant regarding $q(x_t)$ and thus omitted, since the objective is to find the minimizer, not the minimum value. The final approximation holds because we cannot directly access the true data distribution; instead, we use the sample $y_t$ to approximate the expected value. 

Based on this understanding, one natural idea for designing a robust loss is to replace the KL divergence with a more robust divergence, such as 
$\beta$-divergence or 
$\gamma$-divergence. For instance, replacing the KL divergence with 
$\beta$-divergence in \eqref{eq.cross entropy} leads to the $\beta$ loss function $\ell^\beta(x_t,y_t)$ \cite{cao2023generalized}:
\begin{equation}\label{eq.beta likelihood}
\ell^\beta(x_t,y_t)=-\frac{\beta+1}{\beta}
p(y_t|x_t) ^{\beta}
+\int{p(y|x_t)^{\beta+1}\mathrm{d}y}.   
\end{equation}

\begin{remark}
Formally analyzing which loss function provides superior robustness is challenging as each is grounded in different principles. A common approach is to apply Huber's robust statistics theory \cite{huber2011robust}, which assesses the robustness of an estimator using its influence function. An estimator is considered robust if its influence function remains bounded as an outlier's value increases indefinitely. Previous studies have conducted case-specific analyses of robustness for Huber loss \cite{gandhi2009robust}, weighted loss \cite{duran2024outlier}, and $\beta$ loss \cite{cao2023generalized} in linear Gaussian systems, demonstrating that adjusting the loss function can improve robustness. Since this paper focuses on applying robust loss functions to enhance the NANO filter, a detailed formal analysis is left for future work.
\end{remark}

\section{Discussions}\label{sec.discussion}

\noindent \textbf{Natural Gradient Descent for Gaussian Approximation: } 
The use of natural gradient descent for finding optimal Gaussian approximations is well-documented in the literature, with early works dating back to \cite{opper2009variational} with applications for Gaussian process regression. After that, related methods have been explored in various domains, including robot batch estimation \cite{barfoot2020exactly,barfoot2024state}, Bayesian deep learning \cite{khan2018fast,khan2018fast1}, approximate inference \cite{khan2017conjugate}, optimization \cite{khan2017variational} and more.

We utilize the natural gradient method to find the optimal Gaussian approximation for Bayesian filtering in this paper. The advantage of incorporating the natural gradient in Gaussian filtering may arise from the profound geometric properties of the statistical manifolds generated by the family of Gaussian distributions, as discussed in \cite{amari2012differential,lenglet2006statistics}. Intuitively, the computation of the natural gradient involves second-order derivatives of the probability density functions. Thus, geometric concepts such as Riemannian metrics, curvatures, and geodesics on the statistical manifold can be engaged and reflected in the algorithm, leading to better convergence performance. Nevertheless, a precise description of the relationship between the natural gradient and Gaussian distribution remains an important research direction for further exploration.

\noindent \textbf{Gradient-Based Gaussian Filters:} 
Gradient-based Gaussian filters, which use gradient descent and its variants to solve Gaussian filtering, were initially proposed in \cite{gultekin2017nonlinear}, utilizing Monte Carlo techniques to approximate the exact gradient. Subsequently, cubature rules were introduced to replace the Monte Carlo method for gradient approximation, and various optimization approaches, such as the conditional gradient method, the alternating direction method of multipliers, and the natural gradient descent were employed to replace basic gradient descent for optimization \cite{hu2022iterative, guo2022gaussian, guo2023recursive,yumei2022variational}. 
These methods have certain assumptions and limitations compared to our approach. For instance, (i) the methods in \cite{gultekin2017nonlinear, hu2022iterative, guo2022gaussian,yumei2022variational} are restricted to systems with Gaussian noise; (ii) \cite{gultekin2017nonlinear, guo2023recursive} and \cite{yumei2022variational} require performing linear approximations of the measurement model; (iii) \cite{hu2022iterative} modifies the implicit hard constraints of the original problem into soft constraints, rendering it inequivalent to the original problem and (iv) \cite{li2017bayesian} only finds the maximum a posterior estimate, rather than the entire optimal Gaussian distribution. Also, they both lack rigorous proofs of algorithm convergence and stability guarantees for the estimation error, which we provide in Section \ref{sec.theoretical analysis}. Besides, their optimization problems are constructed based on direct minimization of forward KL divergence between the candidate distribution and the true posterior, which does not support the extension to the Gibbs posterior as proposed in our method.

Moreover, we would like to emphasize that, while these works utilize gradient-based methods for Gaussian approximation, they lack a theoretical analysis of the stationary points related to the optimal Gaussian approximation. Specifically, they neither justify the use of gradient methods in the update process nor address the Gaussian approximation in the prediction step. In contrast, our paper provides a comprehensive analysis covering both the prediction and update steps, including a detailed examination of the stationary points for the optimal Gaussian approximation.

\noindent\textbf{Kalman Filtering as Natural Gradient Descent}: Corollary~\ref{corollary.convergence} shows that the canonical KF can be interpreted as a single iteration of the natural gradient in our proposed NANO filter. This result looks similar to the findings in \cite{YannOllivier}, where the equivalence between KF and online natural gradient descent is established. However, \cite{YannOllivier} focuses on parameter estimation, where the state is treated as a deterministic variable representing the parameters of the measurement model. In that context, the natural gradient is applied to maximize the likelihood with respect to the measurement model. In contrast, our result is built on a more general Bayeisan view, where the state is a random variable, and the natural gradient is used to find the optimal parameters of state distribution. Therefore, our result can be treated as a generalization of the result in \cite{YannOllivier}.

\section{Experiments}\label{sec.experiment}
In this section, we validate the proposed NANO filter and its robust variants using the loss functions introduced in Section \ref{sec.robust variants}. We conduct simulations for both linear and nonlinear systems, along with a real-world experiment
. All the simulations and experiment are evaluated by root mean square error (RMSE) which is defined by
\begin{equation}\nonumber
\mathrm{RMSE} = \sqrt{\frac{\sum_{t=1}^{T}\Vert x_t - \hat{x}_t \Vert^2}{n \cdot T}},  
\end{equation}
where $T$ is the total time step for each trajectory. As it is generally impossible to compare with all existing Gaussian filters, we focus on the most popular ones as baselines for our analysis, including the KF, EKF, UKF, IEKF, and PLF. These filters are all mentioned  repeatedly in the well-regarded textbook \cite{sarkka2023bayesian}.

\subsection{Linear System Simulation: Wiener Velocity Model}
First, we perform simulations for Wiener velocity model, which is a canonical linear Gaussian model commonly employed for target tracking \cite{sarkka2023bayesian}. In this model, the state represents the position and velocity of a moving object in two dimensions. The state vector is defined as $x = \left[p_x \ p_y \ v_x \ v_y \right]^{\top}$, where $p_x$ and $p_y$ are the object's positions along the longitude and lateral directions, and $v_x$ and $v_y$ are the corresponding velocities. The measurements are direct, noisy observations of the position components. The state transition model with a time step $\Delta t = 0.1$ and the measurement model can be described as  
\begin{equation}
\nonumber
\begin{aligned}
x_{t+1} &= \begin{bmatrix}
1 & 0 & \Delta t & 0 \\
0 & 1 & 0 & \Delta t \\
0 & 0 & 1 & 0 \\
0 & 0 & 0 & 1
\end{bmatrix} x_t+\xi_t,
\\
y_t &= \begin{bmatrix}
1 & 0 & 0 & 0 \\
0 & 1 & 0 & 0
\end{bmatrix} x_t+\zeta_t .
\end{aligned}
\end{equation}
Here, $\xi_t \sim \mathcal{N}(\xi_t; 0, Q)$ and $\zeta_t \sim \mathcal{N}(\zeta_t; 0, R)$ are the process and measurement noises with covariance matrices given by
\begin{equation}\nonumber
Q=\begin{bmatrix}
\frac{\Delta t^3}{3} & 0 & \frac{\Delta t^2}{2} & 0 \\
0 & \frac{\Delta t^3}{3} & 0 & \frac{\Delta t^2}{2} \\
\frac{\Delta t^2}{2} & 0 & \Delta t & 0 \\
0 & \frac{\Delta t^2}{2} & 0 & \Delta t
\end{bmatrix}, R=\mathbb{I}_{2 \times 2} .
\end{equation}

\begin{figure}[!t]
\centering
\includegraphics[width=0.45\textwidth]{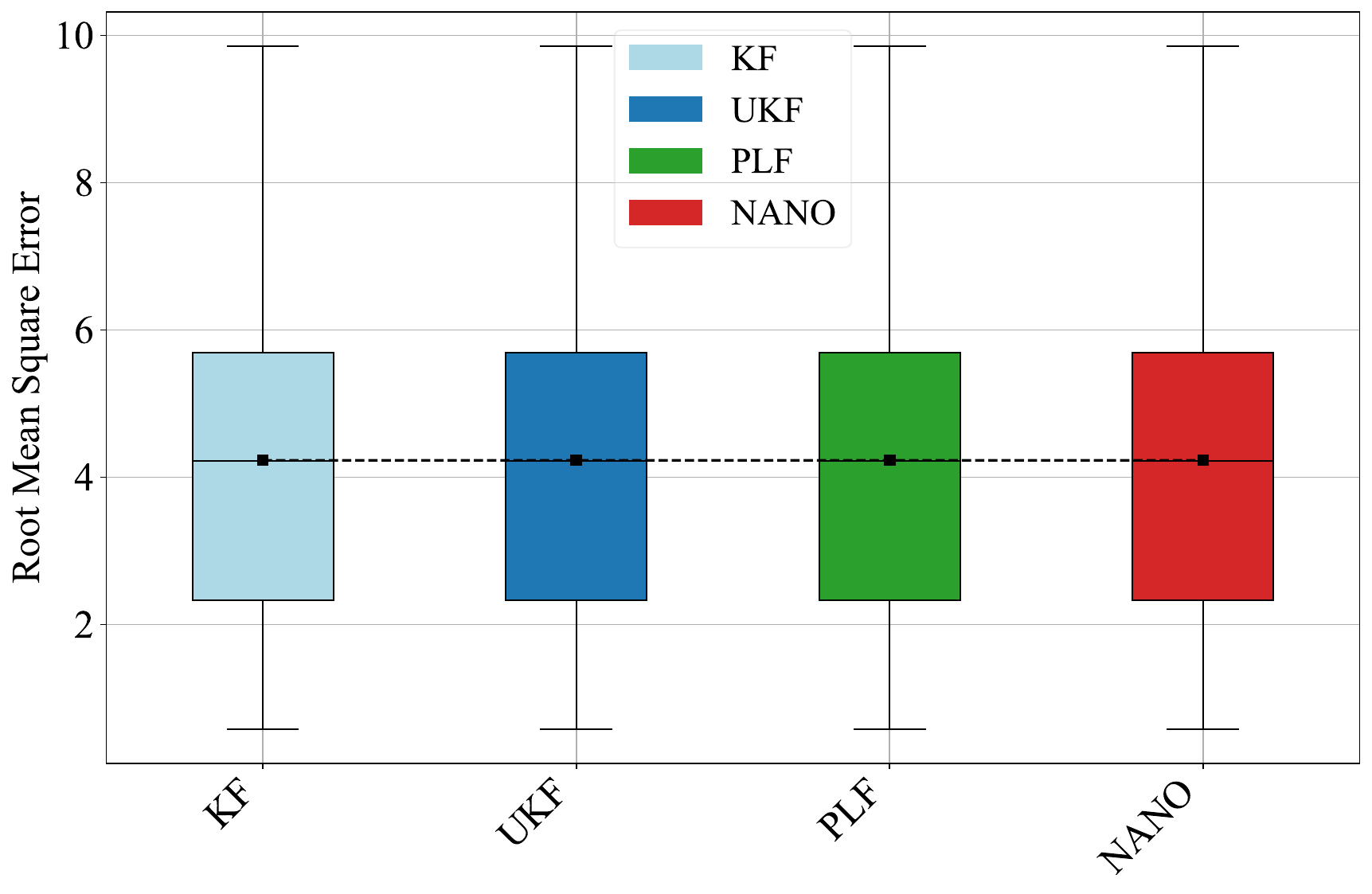}
\caption{Box plot of RMSE of KF, UKF, PLF and NANO filter, for the  standard Wiener velocity model. Note that the small black square `` \scalebox{0.4}{$\blacksquare$}
 " represents the average RMSE over all the Monte Carlo experiments.}
\label{fig.wiener}
\end{figure}

\begin{figure}[!t]
\centering
\includegraphics[width=0.45\textwidth]{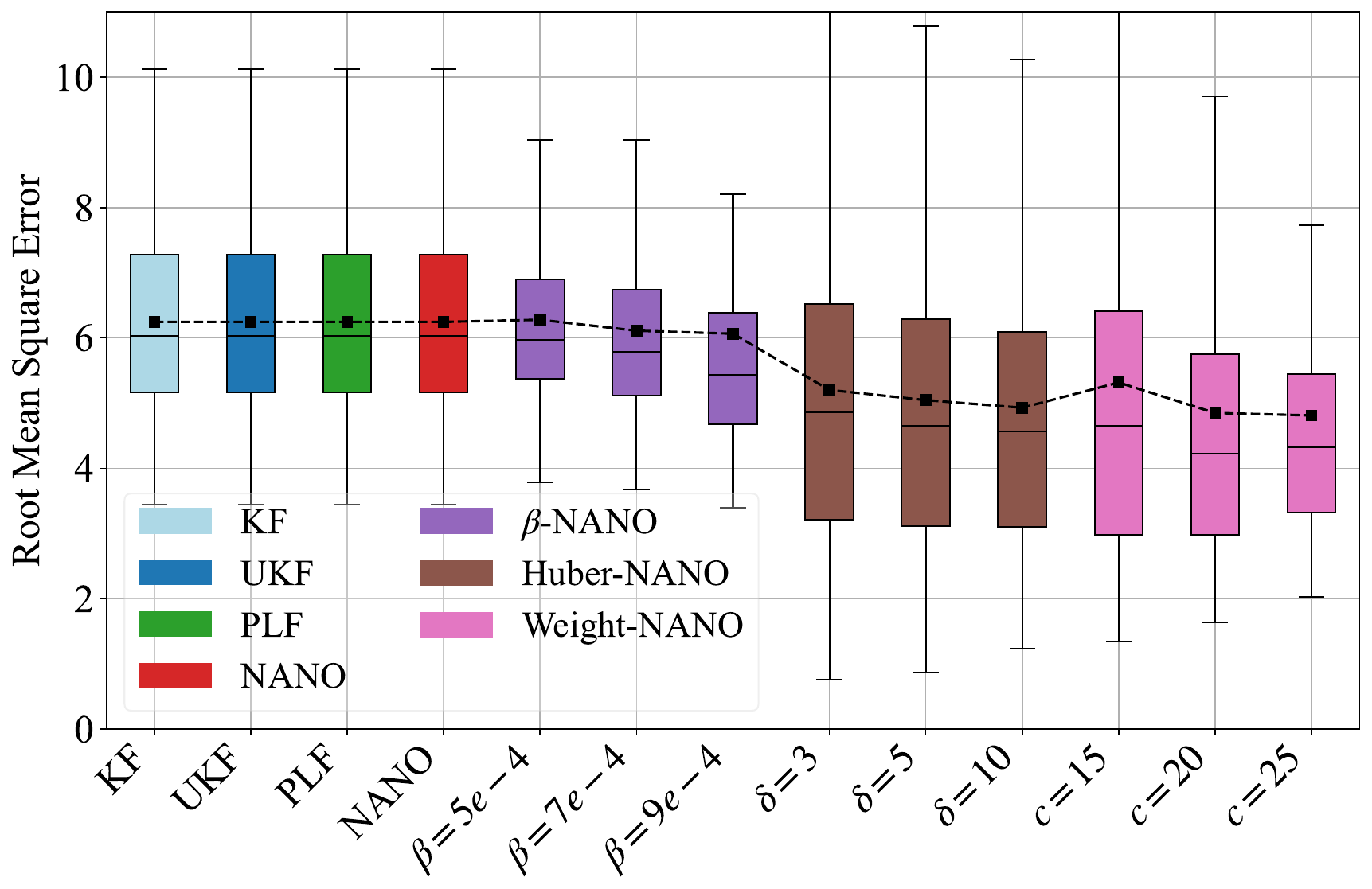}
\caption{Box plot of RMSE for KF, UKF, PLF, NANO filter, and its robust variants with various parameter values, for the Wiener velocity model with measurement outliers.}
\label{fig.wiener robust}
\end{figure}

To validate our method, we consider two scenarios. In the first scenario, the proposed NANO filter is compared against the KF, the optimal filter for linear Gaussian systems, as well as UKF and PLF. As shown in Fig.~\ref{fig.wiener}, the NANO filter achieves the same performance as the KF family within one iteration,  supporting the result in Corollary~\ref{corollary.convergence} that KF equals a single iteration of NANO filter.
In the second scenario, we consider the case where the measurement data deviates from the measurement model due to contamination of outliers. Specifically, the measurement data has a 10\% probability of being contaminated, modeled as $\zeta_t \sim 0.9 \cdot \mathcal{N}(0, R) + 0.1 \cdot \mathcal{N}(0, 1000R)$.
For this case, as illustrated in Fig.~\ref{fig.wiener robust}, we evaluate the robust variants of the NANO filter, including the Huber-NANO filter, Weight-NANO filter, and $\beta$-NANO filter, which utilize the robust loss functions $\ell^{\mathrm{ph}}$ \eqref{eq.pseduo huber}, $\ell^{\mathrm{w}}$ \eqref{eq.weighted likelihood}, and $\ell^{\beta}$ \eqref{eq.beta likelihood}, respectively.  The box plots indicate that the robust variants of the NANO filter generally outperform the standard NANO filter using different parameters. Notably, the Weight-NANO filter with $c = 25$ shows the best performance.

\subsection{Nonlinear System Simulation: Air-traffic Control}
Next, we perform simulations for a nonlinear system, focusing on an air-traffic control scenario where an aircraft executes a maneuvering turn in a horizontal plane at a height $h = 50$ with respect to the radar \cite{garcia2015posterior}. The state vector is defined as $x = \left[p_x \ v_x \ p_y \ v_y \ \omega \right]^{\top}$, where $p_x, p_y$ are the positions, $v_x, v_y$ are the velocities in the planar coordinates, and $\omega$ is the turn rate. The state transition model representing the kinematics of the turning motion and the measurement model are given by:
\begin{equation}
\nonumber
\begin{aligned}
x_{t+1} &= \begin{bmatrix}
1 & \frac{\sin \omega_t \Delta t}{\omega_t} & 0 & -\frac{1-\cos \omega_t \Delta t}{\Omega_t} & 0 \\
0 & \cos \omega_t \Delta t & 0 & -\sin \omega_t \Delta t & 0 \\
0 & \frac{1-\cos \omega_t \Delta t}{\omega_t} & 1 & \frac{\sin \omega_t \Delta t}{\omega_t} & 0 \\
0 & \sin \omega_t \Delta t & 0 & \cos \omega_t \Delta t & 0 \\
0 & 0 & 0 & 0 & 1
\end{bmatrix} x_t+\xi_t, \\
y_t &= \begin{bmatrix}
\sqrt{p_{x,t}^2+p_{y,t}^2+h^2} \\
\operatorname{atan}\left(\frac{p_{y,t}}{p_{x,t}}\right) \\
\operatorname{atan}\left(\frac{h}{\sqrt{p_{x,t}^2+p_{y,t}^2}}\right) \\
\frac{p_{x,t} v_{x,t}+p_{y,t} v_{y,t}}{\sqrt{p_{x,t}^2+p_{y,t}^2+h^2}}
\end{bmatrix}+\zeta_t . \\
\end{aligned}
\end{equation}

Here, $\Delta t = 0.2$ is the sampling period. The process noise $\xi_t \sim \mathcal{N}(0, Q)$ and measurement noise $\zeta_t \sim \mathcal{N}(0, R)$ have covariances given by:
\begin{equation}\nonumber
\begin{aligned}
Q &= \begin{bmatrix}
q_1 \cdot \frac{\Delta t^3}{3} & q_1 \cdot \frac{\Delta t^2}{2} & 0 & 0 & 0 & 
\\
q_1 \cdot \frac{\Delta t^2}{2} & q_1 \cdot \Delta t & 0 & 0 & 0
\\
0& 0& q_1 \cdot \frac{\Delta t^3}{3} & q_1 \cdot \frac{\Delta t^2}{2} & 0
\\
0& 0& q_1 \cdot \frac{\Delta t^2}{2} & q_1 \cdot \Delta t & 0
\\
0 & 0 & 0 & 0 & q_2 \cdot \Delta t
\end{bmatrix}, 
\\
R &= \mathrm{diag} \left( \left[ 
\sigma_r^2, \ \sigma_\varphi^2, \ \sigma_\theta^2, \ \sigma_{\dot{r}}^2 
\right] \right) .
\end{aligned}
\end{equation}

\begin{figure}[!t]
\centering
\includegraphics[width=0.45\textwidth]{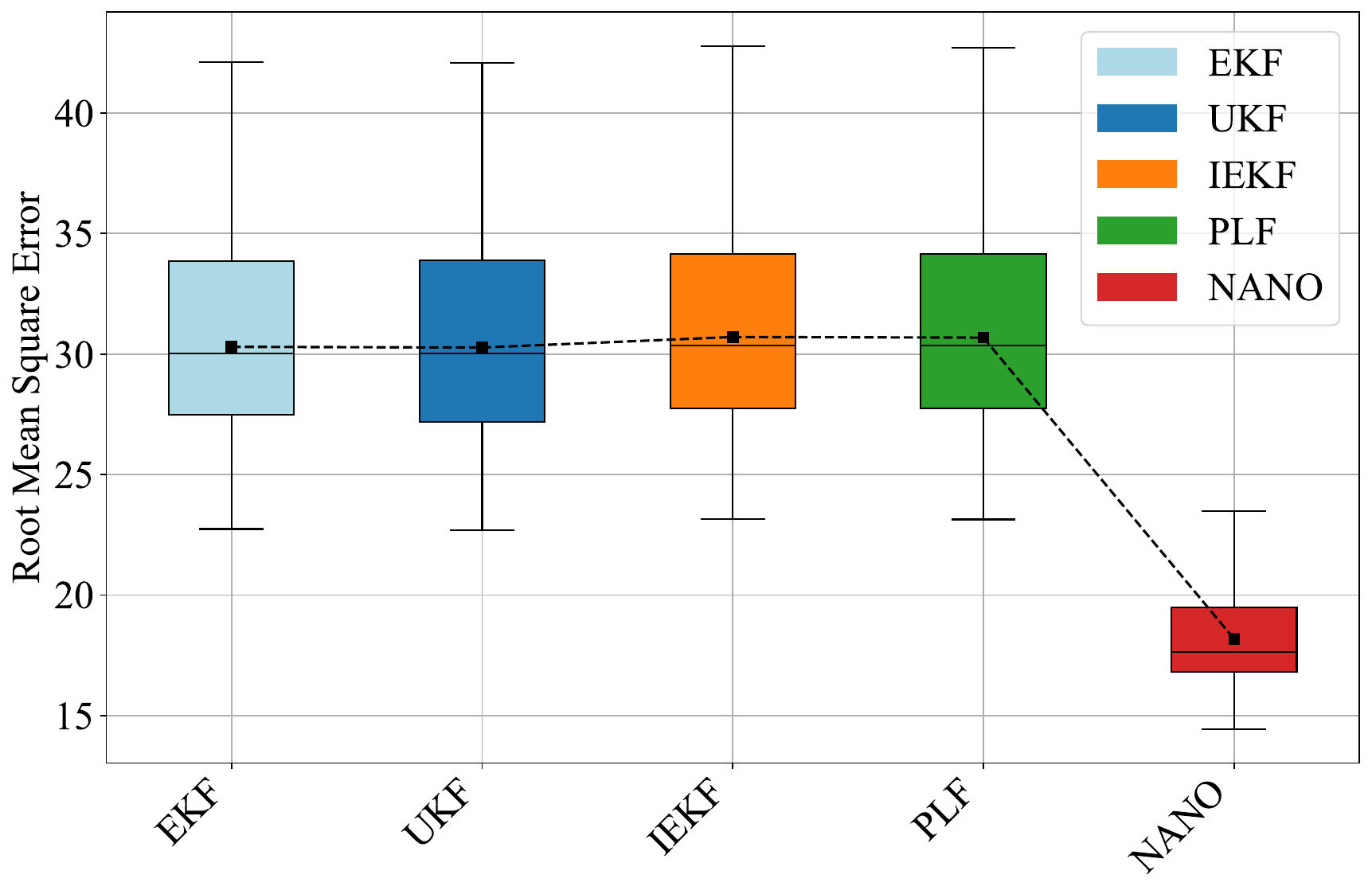}
\caption{Box plot of RMSE for EKF, UKF, IEKF, PLF and NANO filter, for the standard air-traffic control model.}
\label{fig.air traffic}
\end{figure}

\begin{figure}[!t]
\centering
\includegraphics[width=0.48\textwidth]{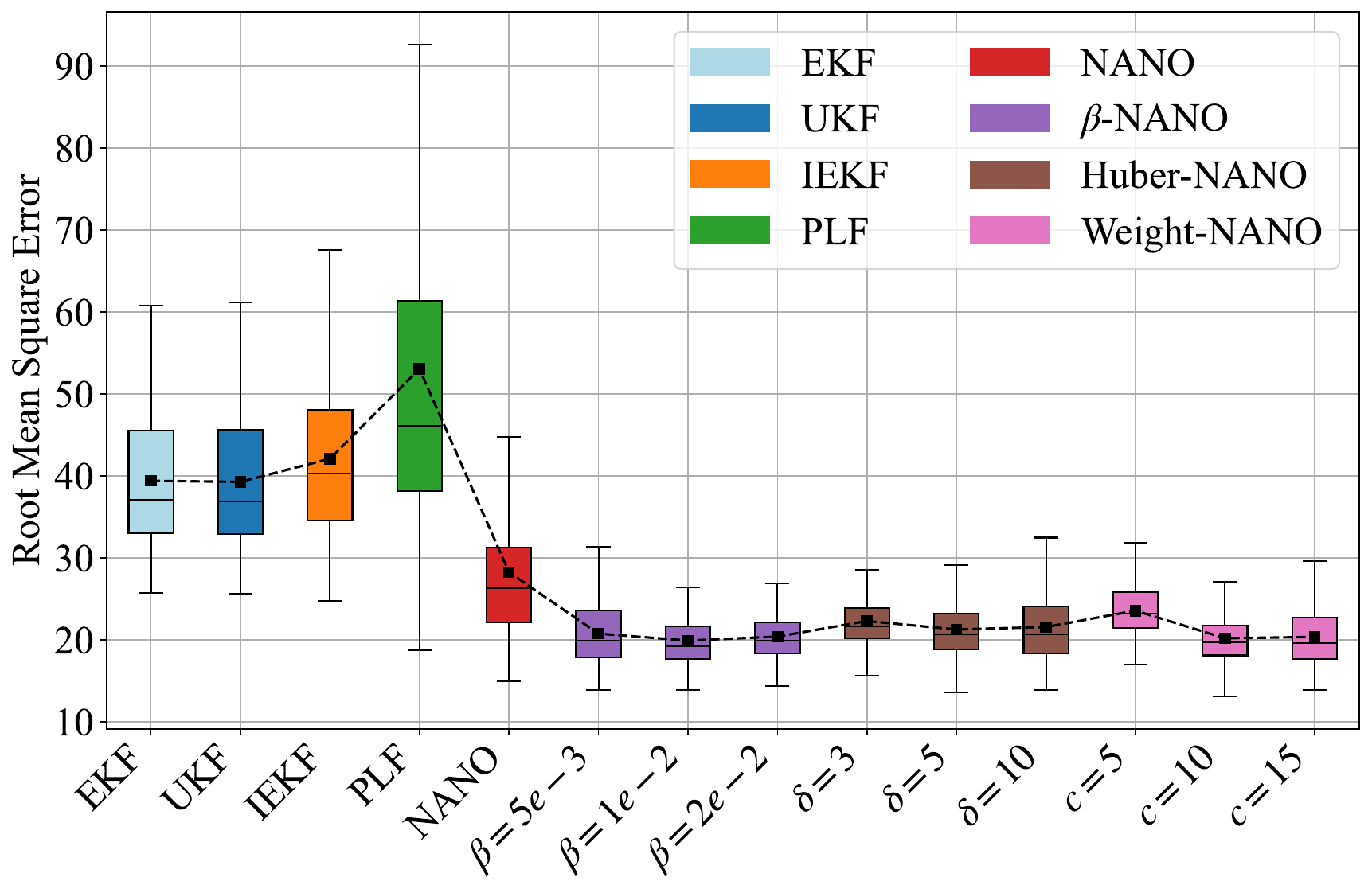}
\caption{Box plot of RMSE for EKF, UKF, IEKF, PLF, NANO filter, and its robust variants with various parameter values, for the air-traffic model with measurement outliers.}
\label{fig.air traffic outlier}
\end{figure}
Here, all noise parameters are the same as those in \cite{garcia2015posterior}. Similar to the linear case, we validate our method under two scenarios for the nonlinear system. In the first scenario, shown in Fig.~\ref{fig.air traffic}, the NANO filter is compared with  EKF, UKF, IEKF, and PLF. The results demonstrate that NANO filter significantly outperforms these established filters, achieving nearly a half reduction in RMSE. This substantial improvement highlights the NANO filter’s superior effectiveness in handling nonlinear dynamics.
In the second scenario, we introduce measurement outliers to reflect conditions where the measurement data deviates from the assumed model. Specifically, we assume the actual measurement noise follows a mixture model: $\zeta_t \sim 0.9 \cdot \mathcal{N}(\zeta_t; 0, R) + 0.1 \cdot \mathcal{N}(\zeta_t; 0, 100\cdot R)$, indicating a 10\% probability that measurements are contaminated by noise with a tenfold increase in standard deviation. As shown in Fig.~\ref{fig.air traffic outlier}, robust variants of the NANO filter—including the 
$\beta$-NANO, Huber-NANO, and Weight-NANO filters—significantly improve estimation accuracy in the presence of measurement outliers. Notably, the $\beta$-NANO filter with $\beta = 10^{-2}$ exhibits the best performance, effectively mitigating the impact of outliers.

\subsection{Real-world Experiment: Unmanned Ground Vehicle Localization}
We conducted a real-world experiment to demonstrate the effectiveness of our method using an unmanned ground vehicle (UGV) equipped with a Lidar sensor for environmental perception, as shown in Fig. \ref{fig.exp platform}. The vehicle's state is represented by $x=\left[\begin{matrix}p_x&p_y&\theta\end{matrix}\right]^\top$, which includes its 2D position ($p_x$ and $p_y$) and orientation angle ($\theta$). The longitudinal velocity $v$ and yaw rate $\omega$ serve as control inputs \cite{elfring2021particle}. The Lidar has a 240-degree detection range with 0.33-degree resolution. The UGV was manually controlled throughout the experiment.

\begin{figure}[!h]
\centering 
\subfloat{\includegraphics[width=0.4095 \linewidth]{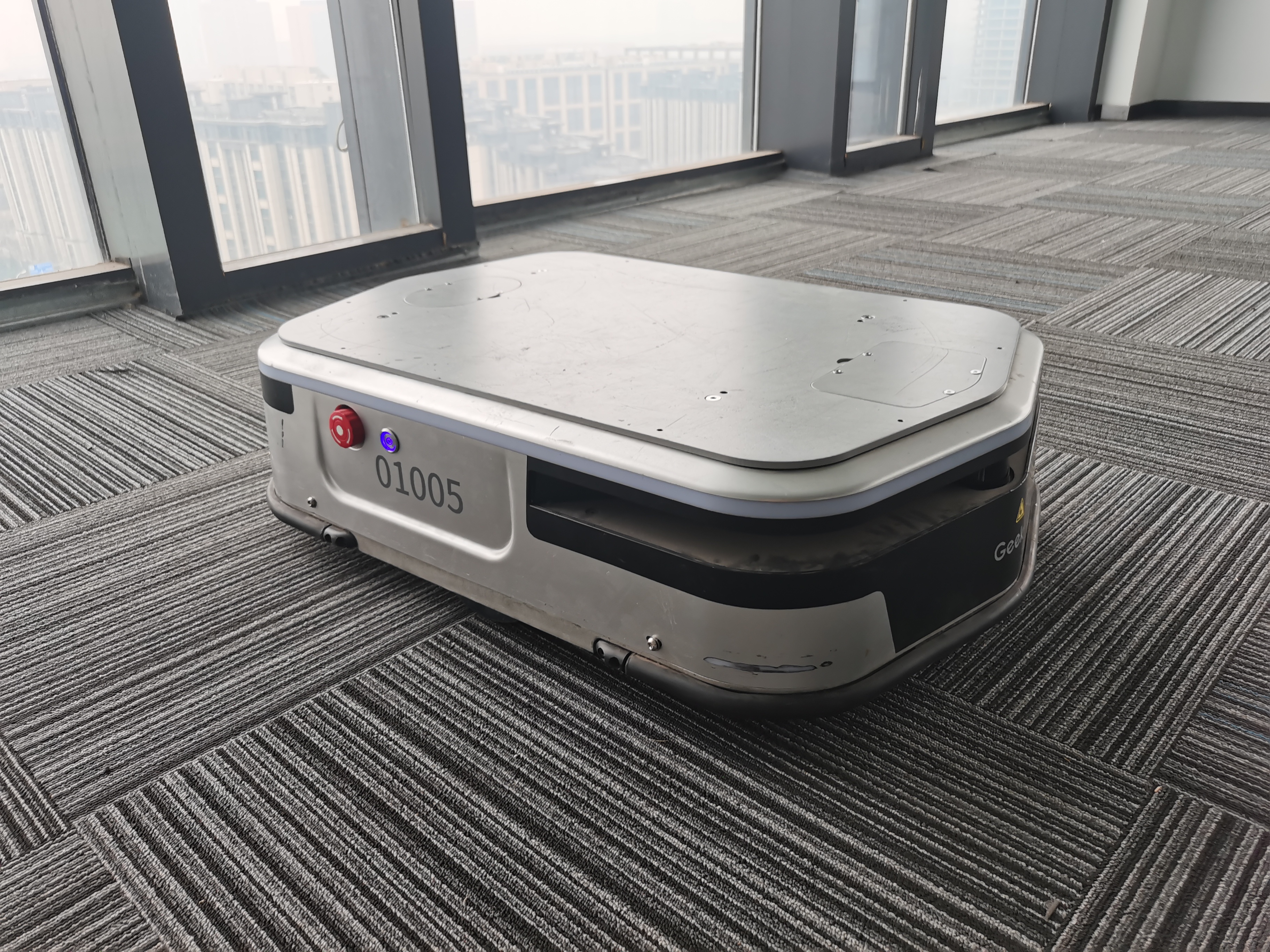}}\hspace{1mm} 
\subfloat{\includegraphics[width=0.441 \linewidth]{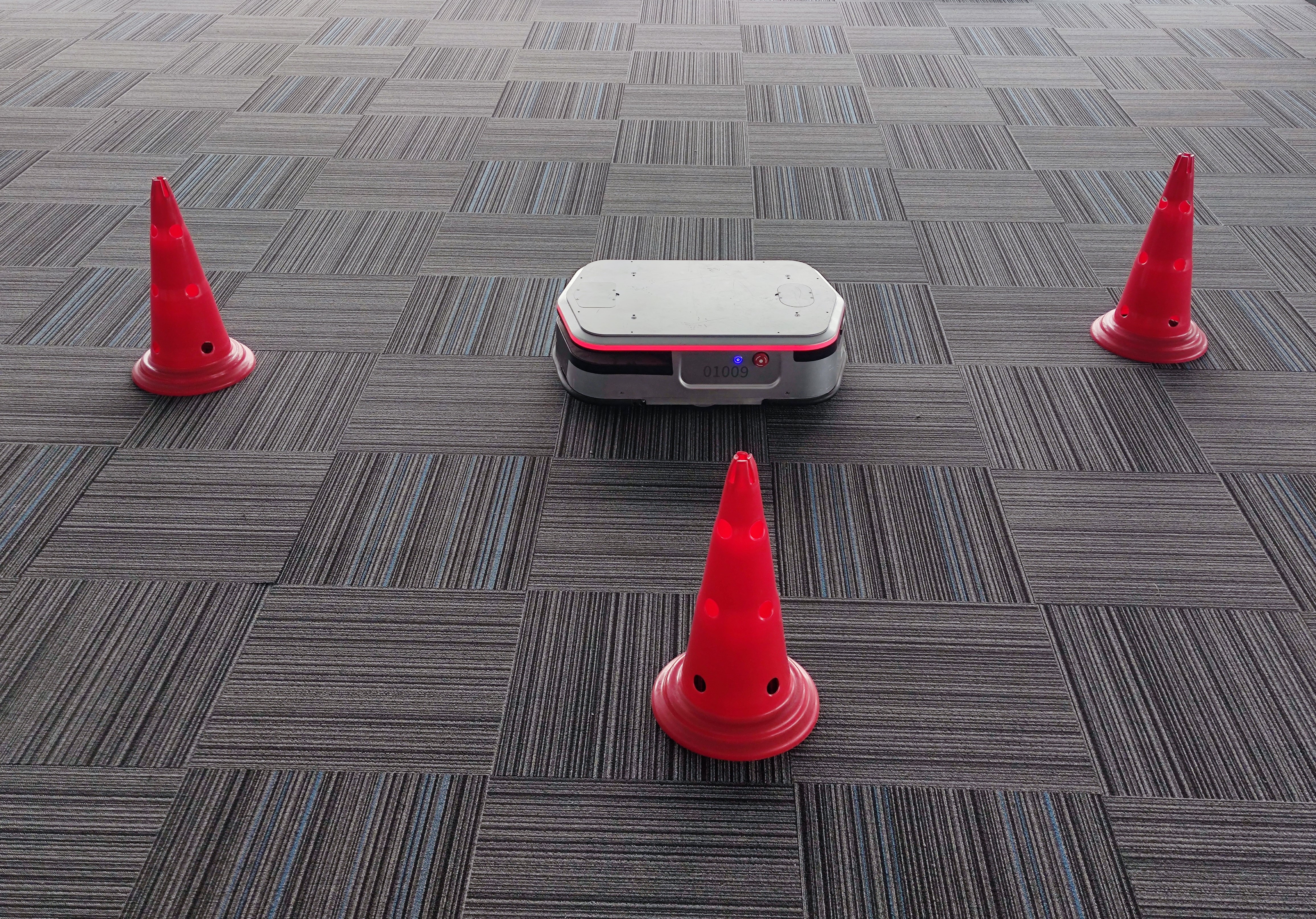}}
\caption{The UGV and the experiment field. The three red traffic cones serve as landmarks for positioning.}
\label{fig.exp platform}
\end{figure}

The vehicle kinematics are modeled as:
\begin{equation}
\nonumber
\begin{aligned}
    \begin{bmatrix}
        p_{x,t+1}\\
        p_{y,t+1}\\
        \theta_{t+1}
\end{bmatrix}=\begin{bmatrix}
        p_{x,t}\\
        p_{y,t}\\
        \theta_t
    \end{bmatrix} + \begin{bmatrix}
        v_t \cdot \cos\theta_t\\
        v_t \cdot \sin\theta_t\\
        \omega_t
    \end{bmatrix}\cdot\Delta t + \xi_t,
\end{aligned}
\end{equation}
where $\Delta t=0.0667$ is the sample period, $v_t$ is the longitudinal velocity, $\omega_t$ is the yaw rate, and $\xi_t$ is the process noise. The measurement model is given by:
\begin{equation}\nonumber
y_t = [d^1_t \ d^2_t \ d^3_t \ \alpha^1_t \ \alpha^2_t \ \alpha^3_t]^\top + \zeta_t,
\end{equation}
where $d^i$ and $\alpha^i$ ($i=1, 2, 3$) denote the relative distance and orientation angle between the UGV and each traffic cone:
\begin{equation}
\begin{aligned}
    d^i &= \|(p^{\mathrm{tc}, i}_{x}-p_x-l\cos\theta, p^{\mathrm{tc}, i}_{y}-p_y-l \sin\theta)\|_2,\\
    \alpha^i &= \arctan\left(\frac{p^{\mathrm{tc}, i}_{y}-p_y-l \sin\theta}{p^{\mathrm{tc}, i}_{x}-p_x-l\cos\theta}\right) - \theta.
\end{aligned}
\end{equation}
Here $(p^{\mathrm{tc}, i}_{x}, p^{\mathrm{tc}, i}_{y})$ is the position of the $i$-th traffic cone, and $l$ represents the longitudinal installation offset of the Lidar with respect to the robot center, as depicted in Fig. \ref{fig.robot system diagram}.

\hspace{-1cm}
\begin{figure}[!htb]
\centering 
\subfloat[]{\includegraphics[width=0.32\linewidth]{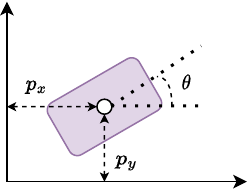}}\hspace{1mm} 
\subfloat[]{\includegraphics[width=0.38\linewidth]{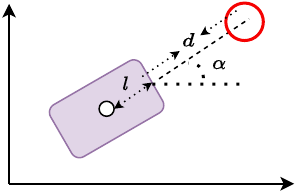}}
\caption{(a) The diagram of the vehicle's states containing the 2D position of the vehicle $p_x$ and $p_y$, and the orientation angle $\theta$. (b) The diagram of the measurement model. Note that the red circle represents the red traffic cone.}
\label{fig.robot system diagram}
\end{figure}
\begin{figure}[!t]
\centering
\includegraphics[width=0.48\textwidth]{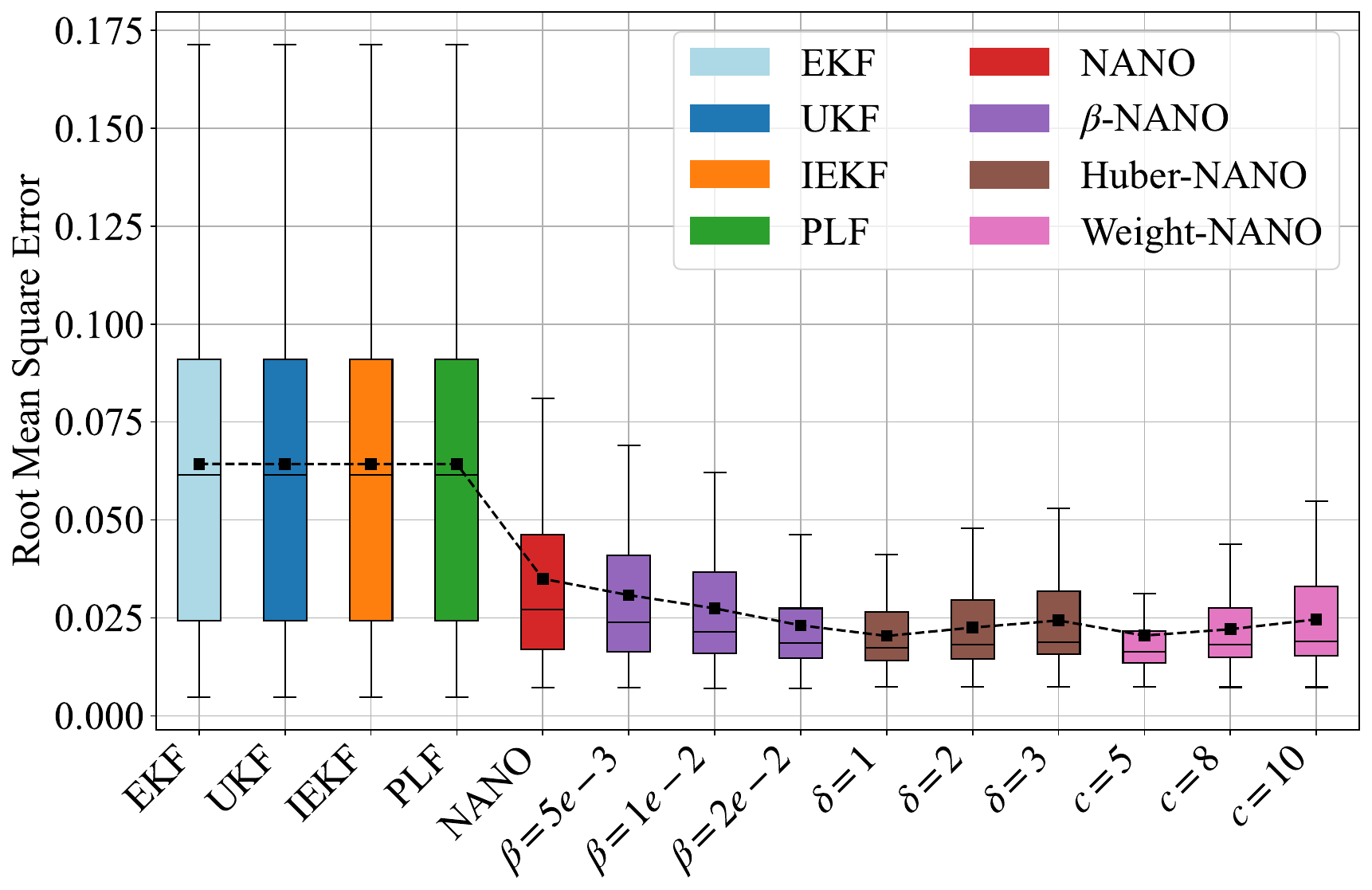}
\caption{Box plot of RMSE for KF, UKF, PLF, NANO and $\beta$-NANO, Huber-NANO, Weight-NANO with different values of the corresponding parameters.}
\label{fig.ugv}
\end{figure}

\begin{figure}[!t]
    \centering
    \subfloat{
        \includegraphics[width=0.22\textwidth]{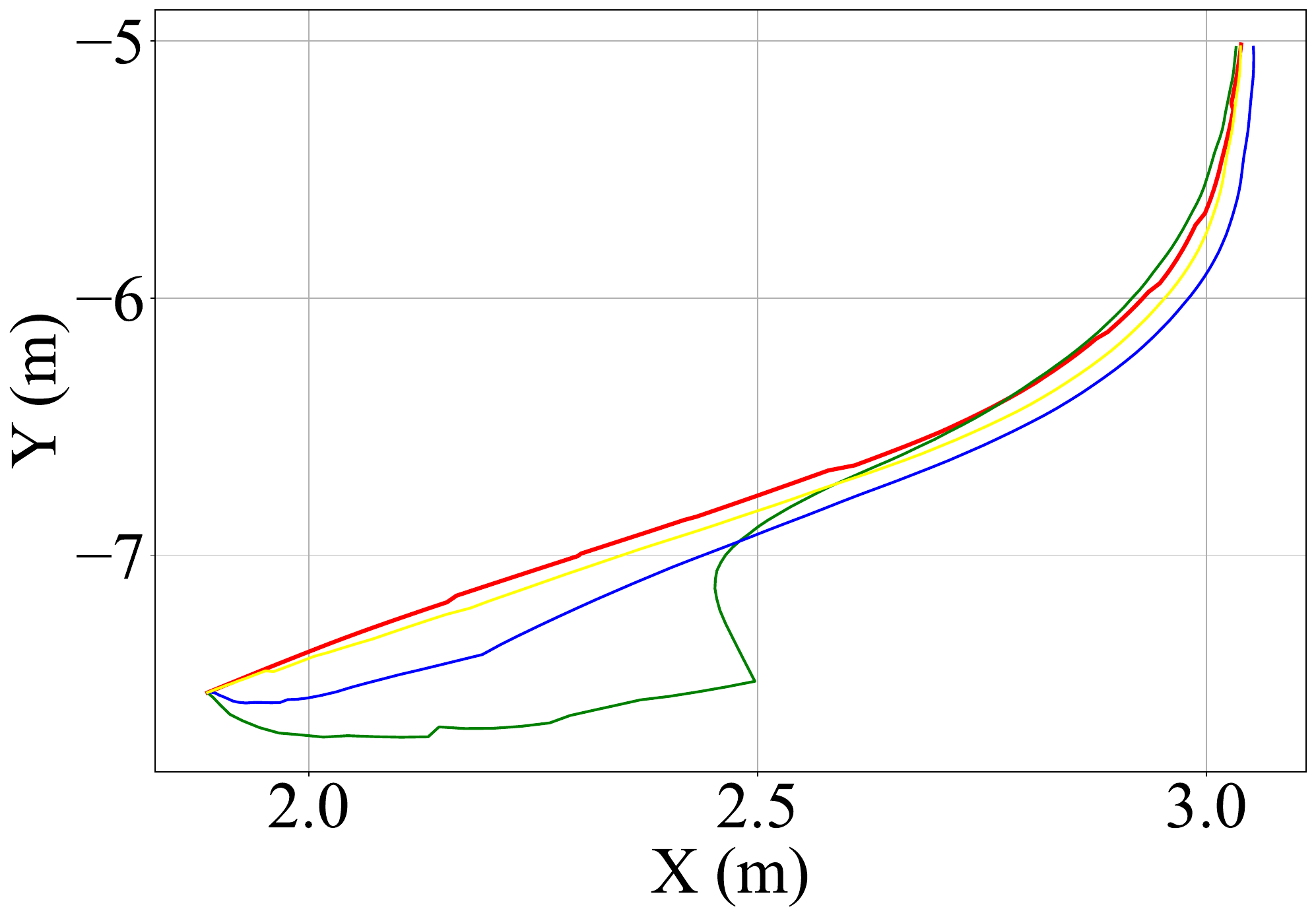}
        \label{ugv_time:fig1}
    }
    \hfill
    \subfloat{
        \includegraphics[width=0.23\textwidth]{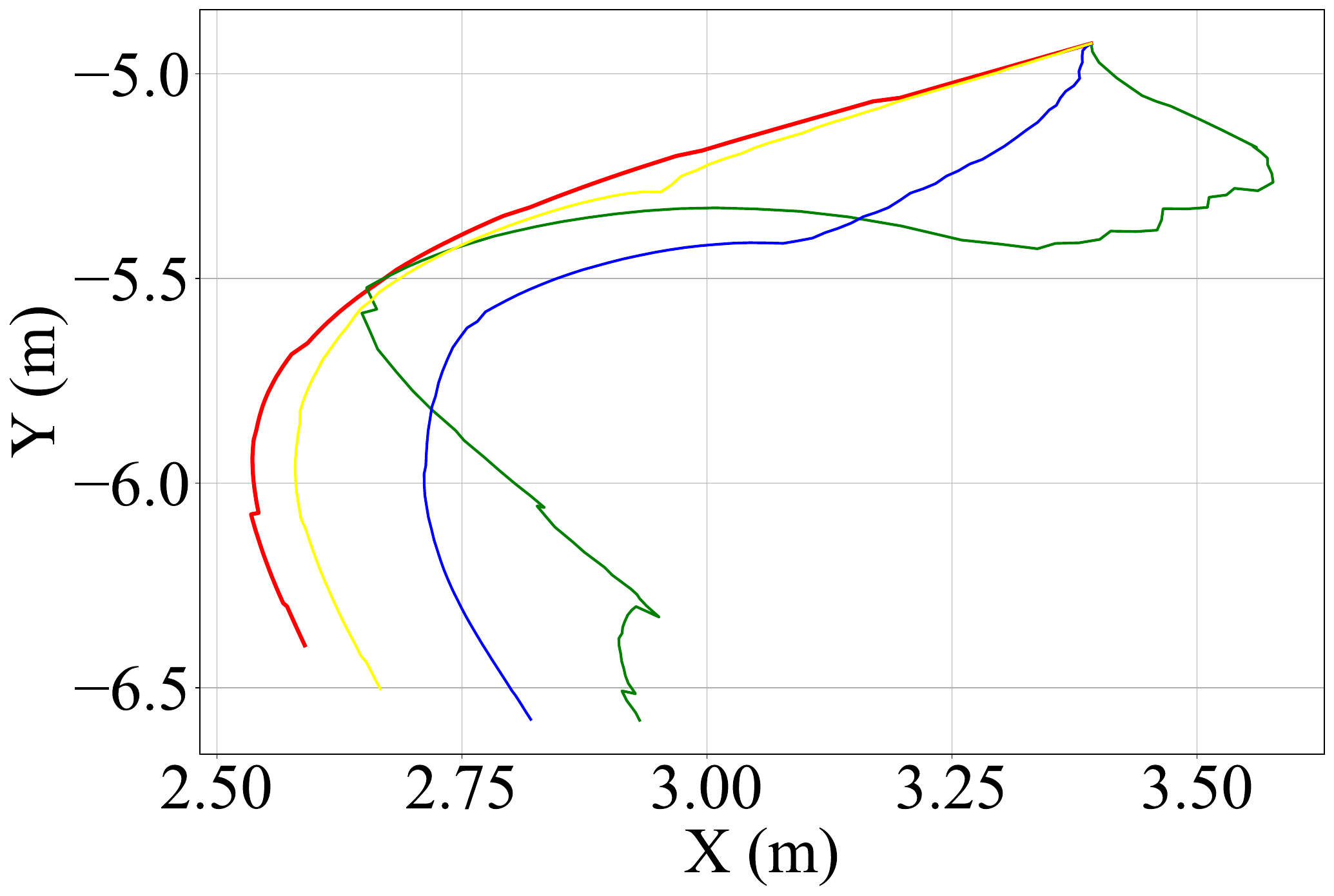}
        \label{ugv_time:fig2}
    }
    \\
    \subfloat{
        \includegraphics[width=0.45\textwidth]{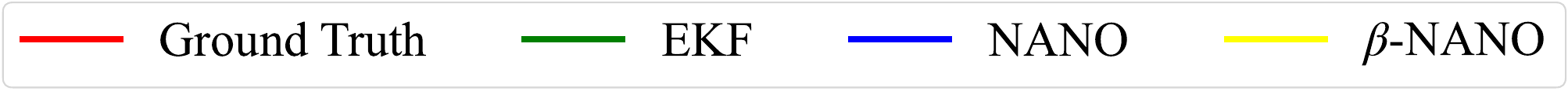}
        \label{ugv_time:legend}
        \phantomsection
    }
    \caption{Two samples of the real trajectories and their estimates.}
    \label{fig.ugv trajectory}
\end{figure} 

The experiment begin by collecting raw data, including ground truth states from the high-precision motion capture system, control inputs, and Lidar point clouds. The Lidar data are processed to extract relative distances and angles to the traffic cones, resulting in a dataset comprising trajectories totaling 11 minutes. The noises are assumed to be Gaussian, whose parameters are estimated from the ground truth and measurement data using maximum likelihood estimation. The dataset is then segmented into 50 sub-trajectories, each lasting 6.67 seconds (100 time steps). Figures \ref{fig.ugv} and \ref{fig.ugv trajectory} compare the RMSE performance of the EKF, UKF, IEKF, PLF, and NANO filters, including their robust variants. The results indicate that the NANO filter consistently outperforms traditional Gaussian filters, achieving higher accuracy. Furthermore, the robust NANO variants demonstrate improved performance, demonstrating their effectiveness in handling measurement outliers typically encountered in real-world conditions.

\begin{table*}[!t]
    \centering
    \caption{Run Time per time step of different algorithms on our experiments (ms)}
    \begin{tabular}{cccccc}
    \toprule
     Algorithms & Wiener Velocity &
     Wiener Velocity
     (outlier) &
     Air-traffic Control &
     Air-traffic Control
     (outlier) &
     UGV Localization\\
         \midrule
     EKF & 
     0.023 & 
     0.028 &
     1.456 &
     1.439 &
     0.809 \\
     UKF & 
     0.105 & 
     0.113 &
     0.712 &
     0.659 &
     0.447 \\
     iEKF & 
     0.023 & 
     0.039 &
     2.028 &
     2.015 &
     2.337 \\
     PLF & 
     0.264 & 
     0.275 &
     1.023 &
     1.313 &
     0.870 \\
     NANO filter & 
     0.406 & 
     0.441 &
     1.980 &
     2.023 &
     2.458 \\
    \bottomrule
    \end{tabular}

    \label{tab.time_run}
\end{table*}

Previous results for simulations and experiment demonstrate that the NANO filter significantly improves RMSE performance compared to traditional Gaussian filters. However, a critical aspect to consider is the computational cost per time step for each algorithm. Table \ref{tab.time_run} shows that for nonlinear systems, the average runtime of NANO filter is approximately 2 to 3 times that of the EKF, which indicates that despite being an iterative method, NANO filter maintains high efficiency and quickly converges. This balance of improved accuracy with reasonable computational demands highlights the practicality of the NANO filter, even in real-time applications.

\section{Conclusion}\label{sec.conclusion}
In this paper, we address the estimation errors commonly introduced by linearization techniques in Gaussian filters like the EKF and UKF for nonlinear systems. We reformulate the prediction and update steps of Gaussian filtering as optimization problems. While the prediction step mirrors moment-matching filters by calculating the first two moments of the prior distribution, the update step poses more challenges due to its highly nonlinear nature.
To overcome these issues, we propose an iterative approach called the NANO filter, which avoids linearization by directly optimizing the update step using the natural gradient derived from the Fisher information matrix. This allows us to account for the curvature of the parameter space and ensures more accurate updates. We prove that the NANO filter converges locally to the optimal Gaussian approximation at each time step and show that it provides exponential error bounds for nearly linear measurement models with low noise.
Experimental results demonstrate that the NANO filter outperforms popular filters like EKF, UKF, and others, while maintaining comparable computational efficiency.

\bibliographystyle{IEEEtran}
\bibliography{ref}

@article{kalman1960new,
  title={A new approach to linear filtering and prediction problems},
  author={Kalman, R.E.},
  journal={Journal of Basic Engineering},
  volume={82},
  number={1},
  pages={35--45},
  year={1960},
  publisher={Citeseer}
}

@inproceedings{khan2017variational,
  title={The Variational Adaptive-Newton Method},
  author={Khan, Mohammad Emtiyaz and Lin, Wu and Tangkaratt, Voot and Liu, Zuozhu and Nielsen, Didrik},
  booktitle={NeurIPS Workshop on Advances in Approximate Bayesian Inference},
  year={2017}
}

@inproceedings{khan2017conjugate,
  title={Conjugate-computation variational inference: Converting variational inference in non-conjugate models to inferences in conjugate models},
  author={Khan, Mohammad and Lin, Wu},
  booktitle={Artificial Intelligence and Statistics},
  pages={878--887},
  year={2017},
  organization={PMLR}
}

@article{lenglet2006statistics,
  title={Statistics on the manifold of multivariate normal distributions: Theory and application to diffusion tensor MRI processing},
  author={Lenglet, Christophe and Rousson, Mika{\"e}l and Deriche, Rachid and Faugeras, Olivier},
  journal={Journal of Mathematical Imaging and Vision},
  volume={25},
  pages={423--444},
  year={2006},
  publisher={Springer}
}

@book{amari2012differential,
  title={Differential-geometrical methods in statistics},
  author={Amari, Shun-ichi},
  volume={28},
  year={2012},
  publisher={Springer Science \& Business Media}
}

@article{YannOllivier,
author = {Yann Ollivier},
title = {{Online natural gradient as a \text{Kalman} filter}},
volume = {12},
journal = {Electronic Journal of Statistics},
number = {2},
publisher = {Institute of Mathematical Statistics and Bernoulli Society},
pages = {2930 -- 2961},
keywords = {Kalman filter, natural gradient, Statistical learning, Stochastic gradient descent},
year = {2018},
doi = {10.1214/18-EJS1468},
URL = {https://doi.org/10.1214/18-EJS1468}
}

@ARTICLE{guo2023recursive,
  author={Guo, Liping and Hu, Sanfeng and Zhou, Jie and Rong Li, X.},
  journal={IEEE Transactions on Aerospace and Electronic Systems}, 
  title={Recursive Nonlinear Filtering via \text{Gaussian} Approximation With Minimized Kullback–Leibler Divergence}, 
  year={2024},
  volume={60},
  number={1},
  pages={965-979},
  keywords={Probability density function;Gaussian approximation;Minimization;Filtering;Convergence;Nonlinear dynamical systems;Frequency division multiplexing;Bayesian filtering;deterministic rule;nonlinear dynamic systems;posterior approximation},
  doi={10.1109/TAES.2023.3330952}}

@inproceedings{guo2022gaussian,
  title={\text{Gaussian} approximation filter based on divergence minimization for nonlinear dynamic systems},
  author={Guo, Liping and Hu, Sanfeng and Zhou, Jie and Li, X Rong},
  booktitle={2022 25th International Conference on Information Fusion (FUSION)},
  pages={1--7},
  year={2022},
  organization={IEEE}
}

@article{li2017bayesian,
  title={Bayesian nonlinear filtering via information geometric optimization},
  author={Li, Yubo and Cheng, Yongqiang and Li, Xiang and Wang, Hongqiang and Hua, Xiaoqiang and Qin, Yuliang},
  journal={Entropy},
  volume={19},
  number={12},
  pages={655},
  year={2017},
  publisher={MDPI}
}

@article{yumei2022variational,
  title={Variational Bayesian Kalman filter using natural gradient},
  author={Yumei, HU and Xuezhi, WANG and Quan, PAN and Zhentao, HU and Moran, Bill},
  journal={Chinese Journal of Aeronautics},
  volume={35},
  number={5},
  pages={1--10},
  year={2022},
  publisher={Elsevier}
}

@article{hu2022iterative,
  title={An Iterative Nonlinear Filter Based on Posterior Distribution Approximation via Penalized Kullback--Leibler Divergence Minimization},
  author={Hu, Sanfeng and Guo, Liping and Zhou, Jie},
  journal={IEEE Signal Processing Letters},
  volume={29},
  pages={1137--1141},
  year={2022},
  publisher={IEEE}
}

@article{gultekin2017nonlinear,
  title={Nonlinear \text{Kalman} filtering with divergence minimization},
  author={Gultekin, San and Paisley, John},
  journal={IEEE Transactions on Signal Processing},
  volume={65},
  number={23},
  pages={6319--6331},
  year={2017},
  publisher={IEEE}
}

@inproceedings{khan2018fast1,
  title={Fast and scalable \text{Bayesian} deep learning by weight-perturbation in adam},
  author={Khan, Mohammad and Nielsen, Didrik and Tangkaratt, Voot and Lin, Wu and Gal, Yarin and Srivastava, Akash},
  booktitle={International conference on machine learning},
  pages={2611--2620},
  year={2018},
  organization={PMLR}
}

@inproceedings{khan2018fast,
  title={Fast yet simple natural-gradient descent for variational inference in complex models},
  author={Khan, Mohammad Emtiyaz and Nielsen, Didrik},
  booktitle={2018 International Symposium on Information Theory and Its Applications (ISITA)},
  pages={31--35},
  year={2018},
  organization={IEEE}
}

@article{barfoot2020exactly,
  title={Exactly sparse \text{Gaussian} variational inference with application to derivative-free batch nonlinear state estimation},
  author={Barfoot, Timothy D and Forbes, James R and Yoon, David J},
  journal={The International Journal of Robotics Research},
  volume={39},
  number={13},
  pages={1473--1502},
  year={2020},
  publisher={SAGE Publications Sage UK: London, England}
}

@article{barfoot2020multivariate,
  title={Multivariate \text{Gaussian} variational inference by natural gradient descent},
  author={Barfoot, Timothy D},
  journal={arXiv preprint arXiv:2001.10025},
  year={2020}
}

@article{opper2009variational,
  title={The variational \text{Gaussian} approximation revisited},
  author={Opper, Manfred and Archambeau, C{\'e}dric},
  journal={Neural computation},
  volume={21},
  number={3},
  pages={786--792},
  year={2009},
  publisher={MIT Press}
}

@article{price1958useful,
  title={A useful theorem for nonlinear devices having \text{Gaussian} inputs},
  author={Price, Robert},
  journal={IRE Transactions on Information Theory},
  volume={4},
  number={2},
  pages={69--72},
  year={1958},
  publisher={IEEE}
}

@book{einstein2003meaning,
  title={The meaning of relativity},
  author={Einstein, Albert},
  year={2003},
  publisher={Routledge}
}

@inproceedings{glasmachers2010exponential,
  title={Exponential natural evolution strategies},
  author={Glasmachers, Tobias and Schaul, Tom and Yi, Sun and Wierstra, Daan and Schmidhuber, J{\"u}rgen},
  booktitle={Proceedings of the 12th annual conference on Genetic and evolutionary computation},
  pages={393--400},
  year={2010}
}

@inproceedings{salimbeni2018natural,
  title={Natural gradients in practice: Non-conjugate variational inference in \text{Gaussian} process models},
  author={Salimbeni, Hugh and Eleftheriadis, Stefanos and Hensman, James},
  booktitle={International Conference on Artificial Intelligence and Statistics},
  pages={689--697},
  year={2018},
  organization={PMLR}
}

@article{kass1991laplace,
  title={Laplace’s method in \text{Bayesian} analysis},
  author={Kass, Robert E and Tierney, Luke and Kadane, Joseph B},
  journal={Contemporary Mathematics},
  volume={115},
  pages={89--99},
  year={1991},
  publisher={American Mathematical Society}
}

@article{garcia2015posterior,
  title={Posterior linearization filter: Principles and implementation using sigma points},
  author={Garc{\'\i}a-Fern{\'a}ndez, {\'A}ngel F and Svensson, Lennart and Morelande, Mark R and S{\"a}rkk{\"a}, Simo},
  journal={IEEE transactions on signal processing},
  volume={63},
  number={20},
  pages={5561--5573},
  year={2015},
  publisher={IEEE}
}

@book{sarkka2023bayesian,
  title={\text{Bayesian} filtering and smoothing},
  author={S{\"a}rkk{\"a}, Simo and Svensson, Lennart},
  volume={17},
  year={2023},
  publisher={Cambridge university press}
}

@article{arasaratnam2009cubature,
  title={Cubature \text{Kalman} filters},
  author={Arasaratnam, Ienkaran and Haykin, Simon},
  journal={IEEE Transactions on automatic control},
  volume={54},
  number={6},
  pages={1254--1269},
  year={2009},
  publisher={IEEE}
}

@article{liu1998sequential,
  title={Sequential Monte Carlo methods for dynamic systems},
  author={Liu, Jun S and Chen, Rong},
  journal={Journal of the American statistical association},
  volume={93},
  number={443},
  pages={1032--1044},
  year={1998},
  publisher={Taylor \& Francis}
}

@article{del1997nonlinear,
  title={Nonlinear filtering: Interacting particle resolution},
  author={Del Moral, Pierre},
  journal={Comptes Rendus de l'Acad{\'e}mie des Sciences-Series I-Mathematics},
  volume={325},
  number={6},
  pages={653--658},
  year={1997},
  publisher={Elsevier}
}

@inproceedings{julier1995new,
  title={A new approach for filtering nonlinear systems},
  author={Julier, Simon J and Uhlmann, Jeffrey K and Durrant-Whyte, Hugh F},
  booktitle={Proceedings of 1995 American Control Conference-ACC'95},
  volume={3},
  pages={1628--1632},
  year={1995},
  organization={IEEE}
}

@article{arasaratnam2007discrete,
  title={Discrete-time nonlinear filtering algorithms using Gauss--Hermite quadrature},
  author={Arasaratnam, Ienkaran and Haykin, Simon and Elliott, Robert J},
  journal={Proceedings of the IEEE},
  volume={95},
  number={5},
  pages={953--977},
  year={2007},
  publisher={IEEE}
}

@article{martens2020new,
  title={New insights and perspectives on the natural gradient method},
  author={Martens, James},
  journal={Journal of Machine Learning Research},
  volume={21},
  number={146},
  pages={1--76},
  year={2020}
}

@article{amari1998natural,
  title={Natural gradient works efficiently in learning},
  author={Amari, Shun-Ichi},
  journal={Neural computation},
  volume={10},
  number={2},
  pages={251--276},
  year={1998},
  publisher={MIT Press}
}

@article{mcelhoe1966assessment,
  title={An assessment of the navigation and course corrections for a manned flyby of mars or venus},
  author={McElhoe, Bruce A},
  journal={IEEE Transactions on Aerospace and Electronic Systems},
  number={4},
  pages={613--623},
  year={1966},
  publisher={IEEE}
}

@article{golub1969calculation,
  title={Calculation of Gauss quadrature rules},
  author={Golub, Gene H and Welsch, John H},
  journal={Mathematics of computation},
  volume={23},
  number={106},
  pages={221--230},
  year={1969}
}

@article{bell1993iterated,
  title={The iterated \text{Kalman} filter update as a Gauss-Newton method},
  author={Bell, Bradley M and Cathey, Frederick W},
  journal={IEEE Transactions on Automatic Control},
  volume={38},
  number={2},
  pages={294--297},
  year={1993},
  publisher={IEEE}
}

@article{thrun2002probabilistic,
  title={Probabilistic robotics},
  author={Thrun, Sebastian},
  journal={Communications of the ACM},
  volume={45},
  number={3},
  pages={52--57},
  year={2002},
  publisher={ACM New York, NY, USA}
}

@article{cao2024convolutional,
  title={Convolutional \text{Bayesian} Filtering},
  author={Cao, Wenhan and Liu, Shiqi and Liu, Chang and He, Zeyu and Yau, Stephen S-T and Li, Shengbo Eben},
  journal={arXiv preprint arXiv:2404.00481},
  year={2024}
}

@article{duran2024outlier,
  title = {Outlier-robust \text{Kalman} Filtering through Generalised {B}ayes},
  author =       {Duran-Martin, Gerardo and Altamirano, Matias and Shestopaloff, Alex and S\'{a}nchez-Betancourt, Leandro and Knoblauch, Jeremias and Jones, Matt and Briol, Francois-Xavier and Murphy, Kevin Patrick},
  booktitle = 	 {Proceedings of the 41st International Conference on Machine Learning},
  pages = 	 {12138--12171},
  year = 	 {2024},
  editor = 	 {Salakhutdinov, Ruslan and Kolter, Zico and Heller, Katherine and Weller, Adrian and Oliver, Nuria and Scarlett, Jonathan and Berkenkamp, Felix},
  volume = 	 {235},
  series = 	 {Proceedings of Machine Learning Research},
  month = 	 {21--27 Jul},
  publisher =    {PMLR},
  pdf = 	 {https://raw.githubusercontent.com/mlresearch/v235/main/assets/duran-martin24a/duran-martin24a.pdf},
  url = 	 {https://proceedings.mlr.press/v235/duran-martin24a.html},
}

@incollection{huber1992robust,
  title={Robust estimation of a location parameter},
  author={Huber, Peter J},
  booktitle={Breakthroughs in statistics: Methodology and distribution},
  pages={492--518},
  year={1992},
  publisher={Springer}
}

@article{stein1981estimation,
  title={Estimation of the mean of a multivariate normal distribution},
  author={Stein, Charles M},
  journal={The annals of Statistics},
  pages={1135--1151},
  year={1981},
  publisher={JSTOR}
}

@book{barfoot2024state,
  title={State estimation for robotics},
  author={Barfoot, Timothy D},
  year={2024},
  publisher={Cambridge University Press}
}

@inproceedings{cao2023generalized,
  title={Generalized moving horizon estimation for nonlinear systems with robustness to measurement outliers},
  author={Cao, Wenhan and Liu, Chang and Lan, Zhiqian and Piao, Yingxi and Li, Shengbo Eben},
  booktitle={2023 American Control Conference (ACC)},
  pages={1614--1621},
  year={2023},
  organization={IEEE}
}

@article{julier2004unscented,
  title={Unscented filtering and nonlinear estimation},
  author={Julier, Simon J and Uhlmann, Jeffrey K},
  journal={Proceedings of the IEEE},
  volume={92},
  number={3},
  pages={401--422},
  year={2004},
  publisher={IEEE}
}

@article{reif1999stochastic,
  title={Stochastic stability of the discrete-time extended \text{Kalman} filter},
  author={Reif, Konrad and Gunther, Stefan and Yaz, Engin and Unbehauen, Rolf},
  journal={IEEE Transactions on Automatic control},
  volume={44},
  number={4},
  pages={714--728},
  year={1999},
  publisher={IEEE}
}

@book{jazwinski2007stochastic,
	title={Stochastic processes and filtering theory},
	author={Jazwinski, Andrew H},
	year={2007},
	publisher={Courier Corporation}
}

@article{xu2016stochastic,
	title={Stochastic stability and performance analysis of cubature \text{Kalman} filter},
	author={Xu, Bo and Zhang, Peng and Wen, Hongzhi and Wu, Xu},
	journal={Neurocomputing},
	volume={186},
	pages={218--227},
	year={2016},
	publisher={Elsevier}
}

@book{gelb1974applied,
  title={Applied optimal estimation},
  author={Gelb, Arthur and others},
  year={1974},
  publisher={MIT press}
}

@book{smith1962application,
  title={Application of statistical filter theory to the optimal estimation of position and velocity on board a circumlunar vehicle},
  author={Smith, Gerald L and Schmidt, Stanley F and McGee, Leonard A},
  volume={135},
  year={1962},
  publisher={National Aeronautics and Space Administration}
}

@book{anderson2005optimal,
  title={Optimal filtering},
  author={Anderson, Brian DO and Moore, John B},
  year={2005},
  publisher={Courier Corporation}
}

@phdthesis{fraser1967new,
  title={A new technique for the optimal smoothing of data},
  author={Fraser, Donald Charles},
  year={1967},
  school={Massachusetts Institute of Technology}
}

@article{knoblauch2019generalized,
  title={Generalized variational inference: Three arguments for deriving new posteriors},
  author={Knoblauch, Jeremias and Jewson, Jack and Damoulas, Theodoros},
  journal={arXiv preprint arXiv:1904.02063},
  year={2019}
}

@incollection{huber2011robust,
  title={Robust statistics},
  author={Huber, Peter J},
  booktitle={International encyclopedia of statistical science},
  pages={1248--1251},
  year={2011},
  publisher={Springer}
}

@article{gandhi2009robust,
  title={Robust Kalman filter based on a generalized maximum-likelihood-type estimator},
  author={Gandhi, Mital A and Mili, Lamine},
  journal={IEEE Transactions on Signal Processing},
  volume={58},
  number={5},
  pages={2509--2520},
  year={2009},
  publisher={IEEE}
}

@article{boustati2020generalised,
  title={Generalised \text{Bayesian} filtering via sequential monte carlo},
  author={Boustati, Ayman and Akyildiz, Omer Deniz and Damoulas, Theodoros and Johansen, Adam},
  journal={Advances in neural information processing systems},
  volume={33},
  pages={418--429},
  year={2020}
}

@article{chen2003bayesian,
  title={\text{Bayesian} filtering: From \text{Kalman} filters to particle filters, and beyond},
  author={Chen, Zhe and others},
  journal={Statistics},
  volume={182},
  number={1},
  pages={1--69},
  year={2003}
}

@article{elfring2021particle,
  title={Particle filters: A hands-on tutorial},
  author={Elfring, Jos and Torta, Elena and van de Molengraft, Ren{\'e}},
  journal={Sensors},
  volume={21},
  number={2},
  pages={438},
  year={2021},
  publisher={MDPI}
}

@article{kingma2013auto,
  title={Auto-encoding variational bayes},
  author={Kingma, Diederik P and Welling, Max},
  journal={arXiv preprint arXiv:1312.6114},
  year={2013}
}

@inproceedings{bonnet1964transformations,
  title={Transformations des signaux al{\'e}atoires a travers les systemes non lin{\'e}aires sans m{\'e}moire},
  author={Bonnet, Georges},
  booktitle={Annales des T{\'e}l{\'e}communications},
  volume={19},
  pages={203--220},
  year={1964},
  organization={Springer}
}

@article{lin2019stein,
  title={Stein's lemma for the reparameterization trick with exponential family mixtures},
  author={Lin, Wu and Khan, Mohammad Emtiyaz and Schmidt, Mark},
  journal={arXiv preprint arXiv:1910.13398},
  year={2019}
}

\vspace{-0.5cm}

\begin{IEEEbiography}[{\includegraphics[width=1in,height=1.25in,clip,keepaspectratio]{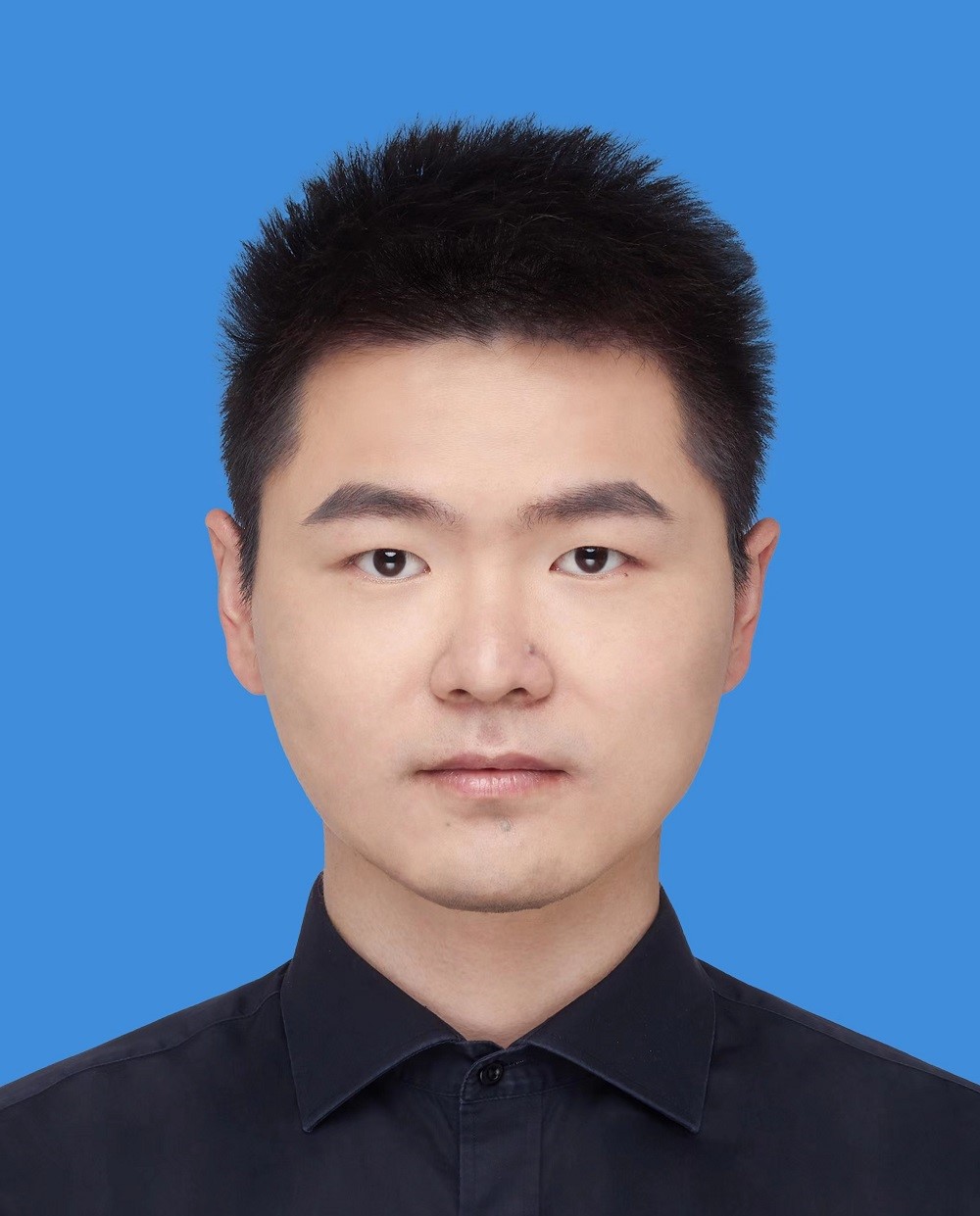}}]
{Wenhan Cao} received his B.E. degree in the School of Electrical Engineering from Beijing Jiaotong University, Beijing, China, in 2019.

He is currently a Ph.D. candidate in the School of Vehicle and Mobility, Tsinghua University, Beijing, China. His research interests include optimal filtering and reinforcement learning. He was a finalist for the Best Student Paper Award at the 2021 IFAC MECC.
\end{IEEEbiography}

\vspace{-0.5cm}

\begin{IEEEbiography}[{\includegraphics[width=1in,height=1.25in,clip,keepaspectratio]{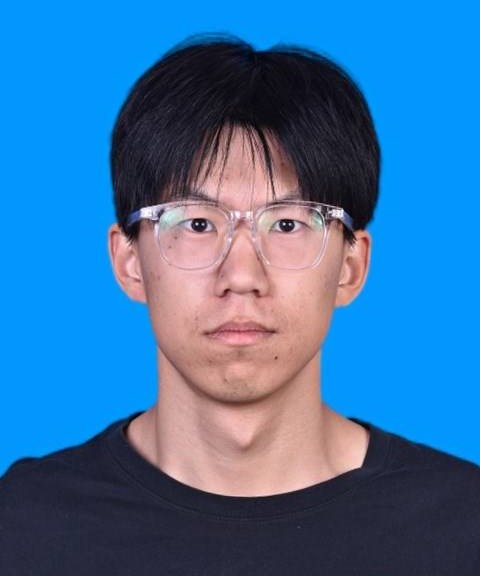}}]
{Tianyi Zhang} received his B.E. degree in the school of Automation Science and Electrical Engineering from Beihang University, Beijing, China, in 2024. He is currently a Ph.D. candidate in the School of Vehicle and Mobility, Tsinghua University, Beijing, China. His research interests include optimal state estimation, Bayesian inference, and reinforcement learning.
\end{IEEEbiography}

\vspace{-0.5cm}

\begin{IEEEbiography}[{\includegraphics[width=1in,height=1.25in,clip,keepaspectratio]{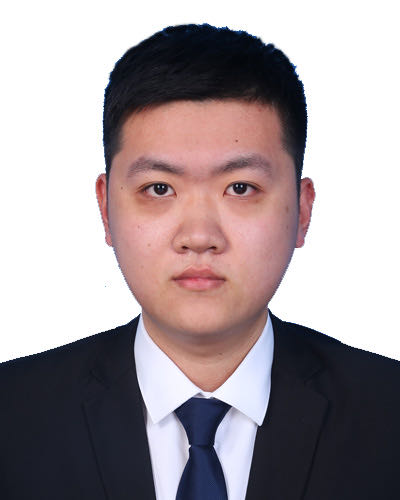}}]
{Zeju Sun} received the B.S. degree from Department of Mathematical Sciences, Tsinghua University, Beijing, China, in 2020. He is currently pursuing the Ph.D. degree in applied mathematics with the department of Mathematical Sciences, Tsinghua University, Beijing, China. His research interests include control theory and nonlinear filtering.
\end{IEEEbiography}

\vspace{-0.5cm}

\begin{IEEEbiography}[{\includegraphics[width=1in,height=1.25in,clip,keepaspectratio]{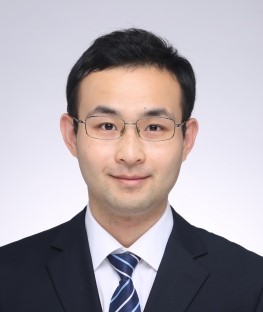}}]
{Chang Liu} (Member, IEEE) received the B.S. degrees in Electronic Information Science and Technology and in Mathematics and Applied Mathematics (double degree) from the Peking University, China, in 2011, and the M.S. degrees in Mechanical Engineering and in Computer Science, and the Ph.D. degree in Mechanical Engineering from the University of California, Berkeley, USA, in 2014, 2015, and 2017, respectively. 
He is currently an Assistant Professor with the Department of Advanced Manufacturing and Robotics, College of Engineering, Peking University. From 2017 to 2020, he was a Postdoctoral Associate with the Cornell University, USA. He has also worked for Ford Motor Company and NVIDIA Corporation on autonomous vehicles. His research interests include robot motion planning, active sensing, and multi-robot collaboration.
\end{IEEEbiography}

\vspace{-0.5cm}

\begin{IEEEbiography}[{\includegraphics[width=1in,height=1.25in,clip,keepaspectratio]{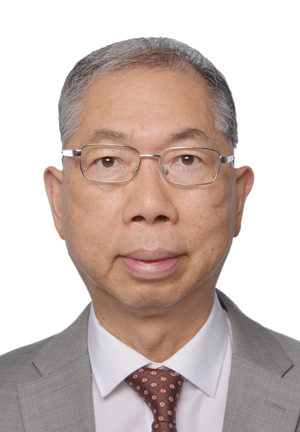}}]{Stephen S.-T. Yau}
 (Life Fellow, IEEE) received the Ph.D. degree in mathematics from the State University of New York at Stony Brook,   NY,   USA in 1976.
 
 He was a Member of the Institute of Advanced Study at Princeton from 1976-1977 and 1981-1982,   and a Benjamin Pierce Assistant Professor at Harvard University during 1977-1980. After that,   he joined the Department of Mathematics,   Statistics and Computer Science (MSCS),   University of Illinois at Chicago (UIC),   and served for over 30 years,   During 2005-2011,   he became a joint Professor with the Department of Electrical and Computer Engineering at the MSCS,   UIC. After retiring in 2011, he joined the Department of Mathematical Sciences at Tsinghua University in Beijing, China, where he served for over 10 years. In 2022, he became a research fellow at the Beijing Institute of Mathematical Sciences and Applications (BIMSA) in Beijing, China, to begin his new research. His research interests include nonlinear filtering,   bioinformatics,   complex algebraic geometry,   CR geometry and singularities theory.
 
 Dr. Yau is the Managing Editor and founder of the \emph{Journal of Algebraic Geometry} since 1991,   and the Editor-in-Chief and founder of \emph{Communications in Information and Systems} from 2000 to the present. He was the General Chairman of the IEEE International Conference on Control and Information,   which was held in the Chinese University of Hong Kong in 1995. He was awarded the Sloan Fellowship in 1980,   the Guggenheim Fellowship in 2000,   and the AMS Fellow Award in 2013. In 2005,   he was entitled the UIC Distinguished Professor.
\end{IEEEbiography}

\vspace{-0.5cm}

\begin{IEEEbiography}[{\includegraphics[width=1in,height=1.25in,clip,keepaspectratio]{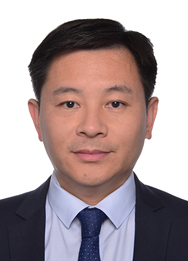}}]
{Shengbo Eben Li} (Senior Member, IEEE) received his M.S. and Ph.D. degrees from Tsinghua University in 2006 and 2009. Before joining Tsinghua University, he has worked at Stanford University, University of Michigan, and UC Berkeley. His active research interests include intelligent vehicles and driver assistance, deep reinforcement learning, optimal control and estimation, etc. He is the author of over 190 peer-reviewed journal/conference papers, and co-inventor of over 40 patents. Dr. Li has received over 20 prestigious awards, including Youth Sci. \& Tech Award of Ministry of Education (annually 10 receivers in China), Natural Science Award of Chinese Association of Automation (First level), National Award for Progress in Sci \& Tech of China, and best (student) paper awards of IET ITS, IEEE ITS, IEEE ICUS, CVCI, etc. He also serves as Board of Governor of IEEE ITS Society, Senior AE of IEEE OJ ITS, and AEs of IEEE ITSM, IEEE TITS, IEEE TIV, IEEE TNNLS, etc. 
\end{IEEEbiography}

\end{document}